\newcounter{Frame}
\newenvironment{Frame}[1][htb]{%
\refstepcounter{Frame}
    \begin{mdframed}[%
        frametitle={#1},
        skipabove=\baselineskip plus 2pt minus 1pt,
        skipbelow=\baselineskip plus 2pt minus 1pt,
        linewidth=1.0pt,
        frametitlerule=true,
    ]%
}{%
    \end{mdframed}
}
\numberwithin{equation}{section}
\numberwithin{figure}{section}
\theoremstyle{plain}
	\newtheorem{theorem}{Theorem}[section]
	\newtheorem{lemma}[theorem]{Lemma}
	\newtheorem{corollary}[theorem]{Corollary}
\theoremstyle{definition}
	\newtheorem{definition}[theorem]{Definition}
	\newtheorem*{remark*}{Remark}
\def\prob#1#2{\mbox{\bf Pr}_{#1}\left[ #2 \right]}
\def\expec#1#2{{\bf \mathbb{E}}_{#1}[ #2 ]}
\def\expecf#1#2{{\bf \mathbb{E}}_{#1}\left[ #2 \right]}
\def\var#1{\mbox{\bf Var}[ #1 ]}
\def\varf#1{\mbox{\bf Var}\left[ #1 \right]}
\def\trace#1{\mathrm{Tr} \left(#1 \right)}
\global\long\def\argmin{\mathrm{argmin}}
\global\long\def\argmax{\mathrm{argmax}}
\def\AA{\mathbf{A}}
\def\BB{\mathbf{B}}
\def\MM{\mathbf{M}}
\def\NN{\mathbf{N}}
\def\LL{\mathbf{L}}
\def\WW{\mathbf{W}}
\def\CC{\mathbf{C}}
\def\D{\mathbf{D}}
\def\II{\mathbb{I}}
\def\Y{\mathbf{Y}}
\def\X{\mathbf{X}}
\def\P{\mathbf{P}}
\def\U{\mathbf{U}}
\def\V{\mathbf{V}}
\def\E{\mathbf{E}}
\def\S{\mathbf{S}}
\def\T{\mathbf{T}}
\def\PP{\mathcal{P}}
\def\QQ{\mathcal{Q}}
\title{Sublinear Time Low-Rank Approximation of Distance Matrices}
\author{Ainesh Bakshi\thanks{The authors 
thank the partial support by the 
National Science Foundation under Grant
No. CCF-1815840.
}\\
Carnegie Mellon University\\
%Pittsburgh, PA, USA.\\
abakshi@cs.cmu.edu
\and David P. Woodruff\footnotemark[1]
\\
Carnegie Mellon University\\
%Pittsburgh, PA, USA.\\
dwoodruf@cs.cmu.edu}
\begin{document}

\maketitle
\begin{abstract}
Let $\PP=\{ p_1, p_2, \ldots p_n \}$ and $\QQ = \{ q_1, q_2 \ldots q_m \}$ be two point sets in an arbitrary metric space. Let $\AA$ represent the $m\times n$ pairwise distance matrix with $\AA_{i,j} = d(p_i, q_j)$. Such distance matrices are commonly computed in software packages and have applications to learning image manifolds, handwriting recognition, and multi-dimensional unfolding, among other things. In an attempt to reduce their description size, we study low rank approximation of such matrices. Our main result is to show that for any underlying distance metric $d$, it is possible to achieve an additive error low-rank approximation in sublinear time. We note that it is provably impossible to achieve such a guarantee in sublinear time for arbitrary matrices $\AA$, and consequently our proof exploits special properties of distance matrices. We develop a recursive algorithm based on additive projection-cost preserving sampling. We then show that in general, relative error approximation in sublinear time is impossible for distance matrices, even if one allows for bicriteria solutions. Additionally, we show that if $\PP = \QQ$ and $d$ is the squared Euclidean distance, which is not a metric but rather the square of a metric, then a relative error bicriteria solution can be found in sublinear time.
\end{abstract}

\newpage
\section{Introduction}
\label{introduction}
We study low rank approximation of matrices $\AA$ formed by the pairwise distances
between two (possibly equal) sets of points or observations $\PP = \{p_1, \ldots, p_m\}$ and $\QQ = \{q_1, \ldots, q_n\}$ in an arbitrary underlying metric space. That is, 
$\AA$ is an $m \times n$ matrix for which $A_{i,j} = d(p_i, q_j)$. Such distance
matrices are the outputs of routines in commonly used software packages
such as the pairwise command in Julia, the pdist2 command in Matlab, or the crossdist command in R. 

Distance matrices have found many applications in machine learning, where Weinberger and Sauk use them to learn image manifolds \cite{ws04}, Tenenbaum, De Silva, and Langford use them for image understanding and handwriting recognition \cite{t00}, Jain and Saul use them for speech and music \cite{js04}, and Demaine et al. use them for music and musical rhythms \cite{DHMRTTWW09}. For an excellent tutorial on Euclidean distance matrices, we refer the reader to \cite{d15}, which lists applications to nuclear magnetic resonance (NMR), crystallagraphy, visualizing protein structure, and multi-dimensional unfolding. 

We consider the most general case for which $\PP$ and $\QQ$ are not necessarily the same set of points. For example, one may have two large unordered sets of samples from some distribution, and may want to determine how similar (or dissimilar) the sample sets are to each other. Such problems arise in hierarchical clustering and phylogenetic analysis\footnote{See, e.g., \url{https://en.wikipedia.org/wiki/Distance_matrix}}. 

Formally, let $\PP=\{ p_1, p_2, \ldots p_m \}$ and $\QQ = \{ q_1, q_2 \ldots q_n \}$ be two sets of points in an arbitrary metric space.
Let $\AA$ represent the $m\times n$ pairwise distance matrix with $A_{i,j} = d(p_i, q_j)$. Since the matrix $\AA$ may be very large, it is often desirable to reduce the number of parameters needed to describe it. Two standard methods of doing this are via sparsity and low rank approximation. In the distance matrix setting, if one first filters $\PP$ and $\QQ$ to contain only distinct points, then each row and column can contain at most a single zero entry, so typically such  matrices $\AA$ are dense. Low-rank approximation, on the other hand, can be highly beneficial since if the point sets can be clustered into a small number of clusters, then each cluster can be used to define an approximately rank-$1$ component, and so $\AA$ is an approximately low rank matrix. 

To find a low-rank factorization of $\AA$, one can compute its singular value decomposition (SVD), though in practice this takes $\min(mn^2, m^2n)$ time. One can do slightly better with theoretical algorithms for fast matrix multiplication, though not only are they impractical, but there exist much faster randomized approximation algorithms. Indeed, one can use Fast Johnson Lindenstrauss transforms (FJLT) \cite{sarlos2006improved}, or CountSketch matrices \cite{clarkson2013low,mm13,NN13,BDN15,c16}, which for dense matrices $\AA$, run in $O(mn) + (m+n)\textrm{poly}(k/\epsilon)$ time.

At first glance the $O(mn)$ time seems like it could be optimal. Indeed, for arbitrary $m \times n$ matrices $\AA$, outputting a rank-$k$ matrix $\BB$ for which 
\begin{equation}
\label{eqn:additive}
    \|\AA-\BB\|_F^2 \leq \|\AA-\AA_k\|_F^2 + \epsilon \|\AA\|_F^2
\end{equation}
can be shown to require $\Omega(mn)$ time. Here $\AA_k$ denotes the best rank-$k$ approximation to $\AA$ in Frobenius norm, and recall for an $m \times n$ matrix $\CC$, $\|\CC\|_F^2 = \sum_{i = 1, \ldots, m, j = 1, \ldots, n}C_{i,j}^2$. The \textit{additive error} guarantee above is common in low-rank approximation literature and appears in \cite{fkv04}. To see this lower bound, note that if one does not read nearly all the entries of $\AA$, then with good probability one may miss an entry of $\AA$ which is arbitrarily large, and therefore cannot achieve (\ref{eqn:additive}). 

Perhaps surprisingly, \cite{mw17} show that for positive semidefinite (PSD) $n \times n$ matrices $\AA$, one can achieve (\ref{eqn:additive}) in {\it sublinear} time, namely, in $n \cdot k \cdot \textrm{poly}(1/\epsilon)$ time. Moreover, they achieve the stronger notion of relative error, that is, they output a rank-$k$ matrix $\BB$ for which 
\begin{eqnarray}\label{eqn:relative}
\|\AA-\BB\|_F^2 \leq (1+\epsilon)\|\AA-\AA_k\|_F^2. 
\end{eqnarray}
The intuition behind their result is that the ``large entries'' causing the $\Omega(mn)$ lower bound cannot hide in a PSD matrix, since they necessarily create large diagonal entries. 

A natural question is whether it is possible to obtain low-rank approximation algorithms for distance matrices in sublinear time as well. A driving intuition that it may be possible is no matter which metric the underlying points reside in, they necessarily satisfy the triangle inequality. Therefore, if $\AA_{i,j} = d(p_i, q_j)$ is large, then since $d(p_i, q_j) \leq d(p_i, q_1) + d(q_1, p_1) + d(p_1, q_j)$, at least one of $d(p_i, q_1), d(q_1, p_1), d(p_1, q_j)$ is large, and further, all these distances can be found by reading the first row and column of $\AA$. Thus, large entries cannot hide in the matrix. 

Are there sublinear time algorithms achieving (\ref{eqn:additive})? Are there sublinear time algorithms achieving (\ref{eqn:relative})? These are the questions we put forth and study in this paper.

\subsection{Our Results}
Our main result is that we obtain sublinear time algorithms achieving the additive error guarantee similar to (\ref{eqn:additive}) for distance matrices, which is impossible for general matrices $\AA$. We show that for {\it every metric} $d$, this is indeed possible. Namely, for an arbitrarily small constant $\gamma > 0$, we give an algorithm running in $\widetilde{O}((m^{1+\gamma}+ n^{1+\gamma}) \textrm{poly}(k \epsilon^{-1}))$ time and achieving guarantee $\| \AA - \MM \NN^T \|^2_{F} \leq  \| \AA - \AA_k\|^2_{F} + \epsilon \| \AA \|^2_F$, for any distance matrix with metric $d$. Note that our running time is significantly {\it sublinear} in the description size of $\AA$. Indeed, thinking of the shortest path metric on an unweighted bipartite graph in which $\PP$ corresponds to the left set of vertices, and $\QQ$ corresponds to the right set of vertices, for each pair of points $p_i \in \PP$ and $q_j \in \QQ$, one can choose $d(p_i, q_j) = 1$ or $d(p_i, q_j) > 1$ independently of all other distances by deciding whether to include the edge $\{p_i, q_j\}$. Consequently, there are at least $2^{\Omega(mn)}$ possible distance matrices $\AA$, and since our algorithm reads $o(mn)$ entries of $\AA$, cannot learn whether $d(p_i, q_j) = 1$ or $d(p_i, q_j) > 1$ for each $i$ and $j$. Nevertheless, it still learns enough information to compute a low rank approximation to $\AA$. 

We note that a near matching lower bound holds just to write down the output of a factorization of a rank-$k$ matrix $\BB$ into an $m \times k$ and a $k \times n$ matrix. Thus, up to an $(m^{\gamma} + n^{\gamma}) \textrm{poly}(k \epsilon^{-1})$ factor, our algorithm is also optimal among those achieving the additive error guarantee of (\ref{eqn:additive}).

A natural followup question is to consider achieving relative error (\ref{eqn:relative}) in sublinear time. Although large entries in a distance matrix $\AA$ cannot hide, we show it is still impossible to achieve the relative error guarantee in less than $mn$ time for distance matrices. That is, we show for the $\ell_{\infty}$ distance metric, that there are instances of distance matrices $\AA$ with unequal $\PP$ and $\QQ$ for which even for $k = 2$ and any constant accuracy $\epsilon$, must read $\Omega(mn)$ entries of $\AA$. In fact, our lower bound holds even if the algorithm is allowed to output a rank-$k'$ approximation for any $2 \leq k' = o(\min(m,n))$ whose cost is at most that of the best rank-$2$ approximation to $\AA$. We call the latter a bicriteria algorithm, since its output rank $k'$ may be larger than the desired rank $k$. Therefore, in some sense obtaining additive error (\ref{eqn:additive}) is the best we can hope for. 

We next consider the important class of Euclidean matrices for which the entries correspond to the {\it square} of the Euclidean distance, and for which $\PP = \QQ$. In this case, we are able to show that if we allow the low rank matrix $\BB$ output to be of rank $k+4$, then one can achieve the relative error guarantee of (\ref{eqn:relative}) with respect to the best rank-$k$ approximation, namely, that $$\|\AA-\BB\|_F^2 \leq (1+\epsilon)\|\AA-\AA_k\|_F^2.$$ Further, our algorithm runs in a sublinear $n \cdot k \cdot \textrm{poly}(1/\epsilon)$ amount of time. 
Thus, our lower bound ruling out sublinear time algorithms achieving (\ref{eqn:relative}) for bicriteria algorithms cannot hold for this class of matrices.

\section{Technical Overview}
Given a $m \times n$ matrix $\AA$ with rank $r$, we can compute its 
singular value decomposition, denoted by $\texttt{SVD}(\AA) = \U 
\mathbf{\Sigma} \V^T$, such that $\U$ is a $m \times r$ matrix with 
orthonormal columns, $\V^T$ is a $r \times n$ matrix with orthonormal 
rows and $\mathbf{\Sigma}$ is a $r \times r$ diagonal matrix. The entries 
along the diagonal are the singular values of $\AA$, denoted by 
$\sigma_1, \sigma_2 \ldots \sigma_r$. Given an integer $k \leq r$, we 
define the truncated singular value decomposition of $\AA$ that zeros out 
all but the top $k$ singular values of $\AA$, i.e. $\AA_k = \U 
\mathbf{\Sigma}_k \V^T$, where $\mathbf{\Sigma}_k$ has only $k$ non-zero 
entries along the diagonal. It is well known that the truncated svd 
computes the best rank-$k$ approximation to $\AA$ under the frobenius 
norm, i.e. $\AA_k = \min_{\textrm{rank}(\X)\leq k} \| \AA - \X\|_F$.
More generally, for any matrix $\MM$, we use the notation $\MM_k$ and 
$\MM_{\setminus k}$ to denote the first $k$ components and all but the 
first $k$ components respectively. We use $\MM_{i,*}$ and $\MM_{*,j}$ to 
refer to the $i^{th}$ row and $j^{th}$ column of $\MM$ respectively.  

Our starting point is a result of Frieze, Kannan and Vempala 
\cite{fkv04}, that states sampling columns weighted by their column norms 
and then sub-sampling rows by their row norms suffices to obtain additive 
error low-rank approximation guarantees.  However, if the input matrix 
has no structure, it is impossible to obtain estimates to the column 
norms in sublinear time. Restricting to the family of distance matrices, 
we show that we can obtain coarse estimates to the column norms by 
uniformly sampling entries in each column and constructing a biased 
estimator. 

In particular, we show that if the input matrix $\AA$ satisfies a relaxation of the triangle inequality (defined below), computing the max element in the first column and sampling $n^{0.1}$ entries of each column uniformly at random suffices to obtain $n^{0.9}$-approximate estimates to the the column norms. A similar guarantee holds for approximating the row norms.
Using these coarse estimators as a proxy for the column norms requires us 
to oversample columns by a $n^{0.9}$-factor. Therefore, resulting matrix, $\AA_{(1)}$ 
is not small enough to compute a singular value decomposition or even use 
input-sparsity time methods and obtain a sublinear running time. We then try to compute row norms of the sub-sampled matrix $\AA_{(1)}$ and hope that the resulting matrix, $\AA_{(2)}$ is small enough to run known low-rank approximation algorithms and obtain an overall sublinear time algorithm.

However, the above approach presents significant challenges. First, after 
sampling and rescaling the columns of the input matrix, the resulting 
smaller matrix $\AA_{(1)}$ need to be a distance matrix or satisfy triangle inequality. Next, it is unclear how to relate the best rank-$k$ 
approximation for the smaller sub-sampled matrix $\AA_{(2)}$ to the best 
rank $k$ approximation for $\AA$ as 
they do not even have the same dimension. To address the first issue, we 
form geometrically increasing buckets for the scaling factors of each row 
and restricted to each bucket, $\AA_{(1)}$ satisfies a relaxation of the 
triangle inequality. Using this decomposition, we show we can estimate 
row norms of $\AA_{(1)}$ in sublinear time. 

The second challenge requires us to show a structural result about sub-
sampling rows and column via coarse estimates to column norms. Here, we 
introduce \emph{additive-error projection-cost preserving sampling} and 
show that sampling using our coarse estimates to column norms suffice to 
construct a matrix $\AA_{(1)}$ such that it preserve all rank-$k$ 
subspaces in the column space of $\AA$ i.e. for any rank-$k$ projection matrix $\X$,  
\begin{equation*}
\| \AA - \X\AA \|^2_F - \epsilon \|\AA\|^2_F \leq \| \AA_{(1)} - \X \AA_{(1)}   \|^2_F \leq \| \AA - \X\AA \|^2_F + \epsilon \|\AA\|^2_F
\end{equation*}
A similar guarantee holds for the row space of $\AA$. We note that the sampling guarantees we achieve are a relaxation of the \emph{projection-cost preserving sampling} guarantee first introduced in \cite{cohen2015dimensionality} and further 
improved in \cite{cmm17}. Formally, for any rank-$k$ projection matrix $\X$ 
\begin{equation*}
(1-\epsilon)\| \AA - \X\AA \|^2_F \leq \| \AA_{(1)} - \X \AA_{(1)}   \|^2_F \leq (1+\epsilon) \| \AA - \X\AA \|^2_F 
\end{equation*}
However, these guarantees are relative-error and their approach relies on approximating the ridge leverage scores of $\AA$ which require input-sparsity time. We note that \emph{additive-error projection-cost preserving sampling} could be of independent interest in future work.

An important consequence of \emph{additive-error projection-cost preserving sampling} is that an approximately optimal projection for $\AA_{(1)}$ is an approximately optimal projection for $\AA$. This is a key link stating that solving the problem for a smaller matrix helps to recover a solution for a larger one. In particular, let $\X_{(1)}$ be a matrix such that $\|\AA_{(1)} - \X_{(1)} \AA_{(1)} \|^2_F \leq \min_{\X} \|\AA_{(1)} - \X \AA_{(1)} \|^2_F  + \epsilon \|\AA_{(1)} \|_F$, then 
\begin{equation*}
\|\AA- \X_{(1)} \AA \|^2_F \leq \min_{\X} \|\AA - \X \AA \|^2_F + \epsilon \|\AA \|^2_F
\end{equation*}
A similar guarantee holds for the row space. Armed with how to relate the optimal solution for a smaller matrix with the optimal solution for a larger one, we decrease the row dimension of $\AA_{(1)}$ by constructing an \emph{additive-error projection-cost preserving sketch}, $\AA_{(2)}$, that preserves all the rank-$k$ subspaces in the row space of $\AA_{(1)}$ i.e. for all rank-$k$ projections $\X$
\begin{equation*}
\|\AA_{(1)} - \AA_{(1)}\X\|^2_F - \epsilon \|\AA_{(1)} \|^2_F \leq \|\AA_{(2)} - \AA_{(2)}\X  \|^2_F \leq  \|\AA_{(1)} - \AA_{(1)}\X\|^2_F + \epsilon \|\AA_{(1)} \|^2_F
\end{equation*}
Again, we can show that an approximately optimal projection for the rows of $\AA_{(2)}$ is an approximately optimal projection for the rows of $\AA_{(1)}$. In particular let $\X_{(2)}$ be a matrix 
such that $\|\AA_{(2)} -  \AA_{(2)}\X_{(2)} \|^2_F \leq \min_{\X} \|\AA_{(2)} -\AA_{(2)} \X  \|^2_F  + \epsilon \|\AA \|_F$, then 
\begin{equation*}
\|\AA_{(1)} -  \AA_{(1)}\X_{(2)} \|^2_F \leq \min_{\X} \|\AA_{(1)} - \AA_{(1)}\X \|^2_F + \epsilon \|\AA_{(1)} \|^2_F
\end{equation*}

Indeed, we show that an approximately optimal rank-$k$ projection, $\X_{(2)}$, for $\AA_{(2)}$ can be computed in $\widetilde{O}( (m^{1.34} + n^{1.34}) \textrm{poly}(k \epsilon^{-1}) )$ time (sublinear in the size of the input) using input-sparsity low-rank approximation, introduced in \cite{clarkson2013low}. Further, $\X_{(2)}$ is an approximately optimal projection for the row space of $\AA_{(1)}$. However, observe that $\X_{(2)}$ projects the rows of $\AA_{(1)}$ onto a $k$-dimensional subspace, while $\AA_{(1)}$ preserves all column projections of $\AA$. We need to find a matrix that projects the columns of $\AA_{(1)}$ on to a rank-$k$ subspace and is approximately optimal. 

To this end, we begin with computing a row space, $\V_{(2)}$ for $\AA_{(2)}\X_{(2)}$. Since $\AA_{(1)}\X_{(2)}$ is an approximately optimal solution for $\AA_{(1)}$, we observe the row space $\V_{(2)}$ contains such a solution. Therefore we set up the following regression problem 
\begin{equation*}
\min_{\X} \| \AA_{(1)} - \X\V_{(2)} \|_F
\end{equation*}
We show that a solution to the above regression problem enables us to compute a column space $\U_{(1)}$ such that it contains an approximately optimal solution for $\AA$. Note, it was essential to find a column space in order to use the column projection preserving property that relates $\AA_{(1)}$ and $\AA$. Given such a column space we can set up another regression problem 
\begin{equation*}
\min_{\X} \| \AA - \U_{(1)}\X \|_F
\end{equation*}
and solving this provides a low-rank approximation for $\AA$. However, we observe that the dimensions involved in the regression problems are too large and the running time is no longer sublinear. To decrease the running time, we sketch the regression problem and solve it approximately. This approach shows up as a subroutine in \cite{mw17} and was previously studied in \cite{clarkson2013low} and  \cite{drineas2008relative}. 
In particular, we use leverage score sampling to construct a sketching matrix $\E$ such that $\min_{\X} \| \AA_{(1)}\E - \X\V_{(2)}\E \|^2_F$ can be solved in sub-linear time and the solution is a $(1+\epsilon)$-error relative approximation to the original regression problem. Therefore, all the the steps involved in our algorithm are sublinear and we obtain an algorithm for low-rank approximation that runs in $\widetilde{O}( (m^{1.34} + n^{1.34}) \textrm{poly}(k \epsilon^{-1}))$.  

Lastly, we focus on improving the exponent in the above run time. The critical observation 
here is that we can recursively sketch rows and columns of $\AA$ such that all the rank-$k$ 
projections are preserved. To this end, we show that a recursive \emph{ projection-cost preserving sampling} guarantee holds, i.e. we can indeed alternate between preserving row and column such that all rank-$k$ projections across all levels can be preserved simultaneously.
At a high level, the algorithm is then to recursively sub-sample 
columns and rows of $\AA$ such that we obtain \emph{projection-cost preserving sketches} at 
each step. However, the resulting sub-matrices need not be a distance 
matrix and the naive bucketing from the previous algorithm does not work. 
Recall, we were able to partition the matrix into $O(\log(n) 
\epsilon^{-1})$ sub-matrices such that each one satisfied approximate 
triangle inequality. We show that we can recursively apply the bucketing 
to each sub-matrix independently and maintain a handle on the blow-up in 
the approximation guarantee.

We note that we can recurse only a constant number of times and show that this is sufficient to obtain a matrix that has dimensions 
independent of $m$ and $n$. We can then quickly compute the best rank-$k$ approximation to such a 
matrix using SVD. We now start with the SVD solution and follow the previously described 
approach to recurse all the way up. At each level, we solve a regression problem to switch 
between finding approximately optimal row and column spaces, up to additive error, and controlling the propagation of the error as we recurse all the way to the top.
Intuitively, it is important to 
preserve all rank-$k$ subspaces since we do not know which projection will be approximately 
optimal when we recurse back up. We show that the above algorithm runs in $\widetilde{O}( 
(m^{1+\gamma} + n^{1+\gamma})\textrm{poly}(k \epsilon^{-1}))$, for a small constant $\gamma$.

\section{Row and Column Norm Estimation}

\begin{Frame}[\textbf{Algorithm \ref{alg:row_norm_estimation} : Row Norm Estimation.}]
\label{alg:row_norm_estimation}
\textbf{Input:} A Distance Matrix $\AA_{m \times n}$, Sampling parameter $b$.
\begin{enumerate}
	\item Let $x = \argmin_{i \in [m]} \AA_{i,1}$.
	\item Let $d = \max_{j \in [n]} \AA_{x, j}$.
    \item For $i \in [m]$, let $\mathcal{T}_i$ be a uniformly random sample of $\Theta(b)$ indices in $[n]$.
    \item $\widetilde{X}_i = d^2 + \sum_{j \in \mathcal{T}_i} \frac{n}{b}\AA^2_{i,j}$. 
\end{enumerate}
\textbf{Output:} Set $\{\widetilde{X}_1, \widetilde{X}_2, \ldots \widetilde{X}_m\}$
\end{Frame}

We observe that we can obtain a rough estimate for the row or column norms of a distance matrix by uniformly sampling a small number of elements of each row or column. The only structural property we need to obtain such an estimate is approximate triangle inequality. 

\begin{definition}(Approximate Triangle Inequality.)
Let $\AA$ be a $m \times n$ matrix. Then, matrix $\AA$ satisfies approximate triangle inequality if, for any $\epsilon \in [0, 1]$, for any $p \in [m]$, $q,r \in [n]$   
\begin{equation}
\label{eqn:approx_tri}
\frac{|\AA_{p,r} - \max_{i \in [m]}|\AA_{i,q} - \AA_{i,r}||}{(1+\epsilon)} \leq \AA_{p,q} \leq (1+\epsilon)\left(\AA_{p,r} + \max_{i \in [m]}|\AA_{i,q} - \AA_{i,r}| \right)
\end{equation}
and
\begin{equation}
\label{eqn:approx_tri_2}
\frac{ |\AA_{p,q} - \AA_{p,r}|}{1+\epsilon}\leq \max_{i \in [m]}|\AA_{i,q} - \AA_{i,r}| \leq (1+\epsilon)\left( \AA_{p,q} + \AA_{p,r} \right)
\end{equation}
Further, similar equations hold for $\AA^T$.
\end{definition}

The above definition captures distance matrices if we set $\epsilon=0$. In order to see this, 
recall, each entry in a $m \times n$ matrix $\AA$ is associated with a distance between 
points sets $\mathcal{P}$ and $\mathcal{Q}$, such that $|\mathcal{P}|=m$ and 
$|\mathcal{Q}|=n$. Then, for points $p \in \mathcal{P}$, $q \in \mathcal{Q}$, $\AA_{p,q}$ 
represents $d(p,q)$, where $d$ is an arbitrary distance metric. Further, for arbitrary point, 
$i \in \mathcal{P}$ and $r \in \mathcal{Q}$, $\max_{i}|\AA_{i,q} - \AA_{i,r}| = \max_{i} 
|d(i,r) - d(i,q)|$. Intuitively, we would like to highlight that, for the case where $\AA$ is a distance matrix, $\max_{i \in [m]}|\AA_{i,q} - \AA_{i,r}|$ represents a lower bound on the distance $d(q,r)$. 
Since $d$ is a metric, it follows triangle inequality, and 
$d(p,q) \leq d(p,r) + d(q,r)$. Further, by reverse triangle inequality, for all $i \in [m]$, $d(q,r) \geq |d(i,q) - d(i, r)|$. Therefore, $\AA_{p,q} \leq \AA_{p,r} + \max_{i}|\AA_{i,q} - \AA_{i,r}|$ and distance matrices satisfy equation \ref{eqn:approx_tri}. Next, $\max_{i \in [m]}|\AA_{i,q} - \AA_{i,r}| = \max_{i \in [m]}|d_{i,q} - d_{i,r}| \leq d(q,r)$,  and 
$d(q, r) \leq d(p,r) + d(p,q) = \AA_{p,r} + \AA_{p,q}$ therefore, equation \ref{eqn:approx_tri_2} is satisfied. We note that approximate triangle inequality is a relaxation of the traditional triangle 
inequality and is sufficient to obtain coarse estimates to row and column norms of $\AA$ in 
sublinear time.

\begin{lemma}(Row Norm Estimation.)
\label{lem:row_norm_estimation}
Let $\AA$ be a $m \times n$ matrix such that $\AA$ satisfies approximate triangle inequality. 
For $i \in [m]$ let $\AA_{i,*}$ be the $i^{th}$ row of $\AA$. Algorithm 
\ref{alg:row_norm_estimation} uniformly samples $\Theta(b)$ elements from $\AA_{i,*}$ and 
with probability at least $9/10$ outputs an estimator which obtains an $O\left(\frac{n}
{b}\right)$-approximation to $\left\| \AA_{i,*} \right\|^2_2$. Further, Algorithm 
\ref{alg:row_norm_estimation} runs in $O(bm + n)$ time.
\end{lemma}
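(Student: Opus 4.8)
The plan is to prove the approximation guarantee for a single fixed row $i$ and then take a union bound over all $m$ rows (after boosting the per-row success probability by a constant number of independent repetitions and taking medians, which only changes the hidden constants). So fix a row $i \in [m]$; write $X_i = \norm{\AA_{i,*}}_2^2 = \sum_{j=1}^n \AA_{i,j}^2$ for the true squared norm, and let $\widetilde X_i = d^2 + \sum_{j \in \mathcal{T}_i} \frac{n}{b}\AA_{i,j}^2$ be the estimator, where $d = \max_{j} \AA_{x,j}$ and $x = \argmin_{i'} \AA_{i',1}$. First I would record the two structural facts obtained from approximate triangle inequality: (a) $d$ is a lower bound, up to a constant, on $\max_{j} \AA_{i,j}$ for \emph{every} row $i$, i.e. $d \le C \max_j \AA_{i,j}$ — this follows because for any $j$, $\AA_{x,j} \le (1+\epsilon)(\AA_{i,j} + \max_{i'}|\AA_{i',j} - \AA_{i',\text{col of }x}|)$ and $d = \AA_{x,j^*}$ for the maximizing $j^*$, combined with the fact that $\AA_{x,1}$ is minimal over the first column so the cross term is controlled; and (b) in the regime where $d$ is \emph{not} within a constant factor of $\max_j\AA_{i,j}$, the row is ``flat'', meaning all entries $\AA_{i,j}$ are within a constant factor of each other (this is the same triangle-inequality argument sketched in the commented-out paragraph: once some entry of the row exceeds a constant times $d_{\max}(\QQ)$, all entries of that row are $\Omega$ of each other).

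Next I would split into the two cases. \textbf{Case 1 (flat row):} every entry of row $i$ is within a constant factor of every other, so $\AA_{i,j}^2 = \Theta(X_i/n)$ for all $j$; then $\sum_{j\in\mathcal{T}_i}\frac{n}{b}\AA_{i,j}^2 = \Theta(X_i)$ deterministically (up to the constant), and adding $d^2 \ge 0$ keeps $\widetilde X_i$ within a constant factor of $X_i$ — here even $O(1)$ samples suffice and the $O(n/b)$ factor is loose. \textbf{Case 2 (spiky row):} $d^2 = \Theta(\max_j \AA_{i,j}^2)$ by fact (a) together with the hypothesis of this case that $\max_j \AA_{i,j} = O(d)$. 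For the upper bound on $\widetilde X_i$: $\expec{}{\sum_{j\in\mathcal{T}_i}\frac{n}{b}\AA_{i,j}^2} = X_i$, so by Markov with probability $\ge 9/10$ the sampled sum is $O(X_i)$, and $d^2 = O(\max_j \AA_{i,j}^2) = O(X_i)$, giving $\widetilde X_i = O(X_i)$. For the lower bound: trivially $\widetilde X_i \ge d^2 = \Omega(\max_j \AA_{i,j}^2) \ge \Omega(X_i / n)$ since $X_i \le n \max_j \AA_{i,j}^2$. Hence $\widetilde X_i = \Omega(X_i/n)$ and $\widetilde X_i = O(X_i)$, i.e. an $O(n/b)$-approximation (in fact $O(n)$, and the claimed $O(n/b)$ just reflects that one may also use the sampled mass in the regime where it dominates — but the clean statement comes from combining: $\widetilde X_i \ge \max(d^2, \text{sampled sum})$ and a finer Markov/Chebyshev bound showing the sampled sum is $\Omega(X_i/b \cdot b) = \Omega(\text{its expectation})$ with constant probability only when it is not dominated by a single spike, which is exactly the flat case).

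The genuine obstacle is the \textbf{lower bound in Case 2}: a uniform sample of $b$ entries can entirely miss the few large entries of a spiky row, so the sampled sum alone can be a factor $n/b$ too small — this is precisely why the bias term $d^2$ is essential. The argument must show $d^2$ always ``catches'' enough mass, i.e. that $d^2 = \Omega(X_i / n)$ in every row; equivalently $d \ge \Omega(\max_j \AA_{i,j})$ is false in general, so one instead argues $d = \Omega(\max_j \AA_{i,j})$ \emph{or} the row is flat, and in the flat case uniform sampling handles it without needing $d$. Pinning down the constants in fact (a) — showing $\AA_{x,1}$ being column-minimal forces $d = \max_j \AA_{x,j}$ to lower-bound a constant times every row's max — is the delicate step and is where approximate triangle inequality \eqref{eqn:approx_tri}, \eqref{eqn:approx_tri_2} and the $\epsilon \le 1$ slack get used; I would handle it by the $4 d_{\max}(\QQ)$ threshold dichotomy in the commented-out paragraph, adapted to carry the $(1+\epsilon)$ factors. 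Finally, the running time $O(bm+n)$ is immediate: step 1 reads the $m$ entries of the first column, step 2 reads the $n$ entries of row $x$, and steps 3–4 read $\Theta(b)$ entries in each of $m$ rows; the union bound over rows multiplies the sample count per row by $O(\log m)$, absorbed into $\Theta(b)$, or alternatively one runs $\Theta(\log m)$ independent copies and takes the median, again only affecting constants and logs.
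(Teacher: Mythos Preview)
Your overall structure matches the paper's: you correctly isolate the flat/spiky dichotomy via approximate triangle inequality, and you recognize that the bias term $d^2$ is what rescues the lower bound when uniform sampling misses the large entries. Your structural facts (a) and (b) are correct and correspond to the paper's case split on $d$ versus $d_{\max}$ and on $d$ versus $d':=\max_j \AA_{i,j}$.

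However, there is a genuine gap in your Case 2. As you yourself concede, the argument you actually give only yields $\widetilde X_i \ge d^2 = \Omega(d'^2) = \Omega(X_i/n)$, i.e.\ an $O(n)$-approximation, not $O(n/b)$. Your parenthetical attempt to recover the factor of $b$ is not a valid argument: it is not true that the Chebyshev bound ``only'' applies in the flat case. The missing ingredient---which the paper carries out explicitly---is a further split of the spiky case according to whether $X_i \ge n d'^2/b$ or $X_i < n d'^2/b$. In the first sub-case, the variance of the sampled sum is at most $(n/b)\sum_j \AA_{i,j}^4 \le (n/b)\,d'^2 X_i \le X_i^2$, so Chebyshev gives a constant-factor estimate of $X_i$ even though the row need not be flat. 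In the second sub-case, the bias alone gives $d^2 = \Theta(d'^2) > (b/n) X_i$, which is exactly the $O(n/b)$ lower bound. Without this threshold, your argument fails on, e.g., a row with half its entries equal to $d'$ and half equal to $0$: it is not flat, $d^2 = \Theta(d'^2) = \Theta(X_i/n)$, and your bias-only lower bound is off by a factor $n$, not $n/b$.

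A minor remark: the lemma is stated for a single fixed row with success probability $9/10$, so no union bound is needed here. The paper handles all $m$ rows simultaneously \emph{outside} this lemma via $O(\log m)$ independent estimators and medians (not a ``constant number of repetitions'' as you wrote).
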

\begin{proof}
It is easy to analyze the running time of Algorithm \ref{alg:row_norm_estimation}. Step $1$ 
and $2$ run in $O(n + m)$ as they correspond to reading a column and a row. Uniformly 
sampling $\Theta(b)$ indices for each row takes $O(bm)$ time. Overall, we get a running time 
of $O\left(bm + n\right)$.

By reading the first column of $\AA$, we obtain a entry $\AA_{x,1}$ such that it is the 
minimum entry of the first column, i.e. $x = \argmin_{i\in[m]} \AA_{i,1}$. Then, reading row 
$\AA_{x,*}$, we obtain a entry $\AA_{x,y}$ such that index $y = \argmax_{j \in [n] } \AA_{x, 
j}$ i.e., $\AA_{x,y}$ is the largest entry in row $x$. Let $d$ denote the entry $\AA_{x,y}$ 
of the input matrix. Further, let $d_{\textrm{max}} =  \max_{j,j' \in [n]} \max_{i\in [m]}| \AA_{i,j} - \AA_{i,j'}|$. Note, we cannot compute $d_{\textrm{max}}$ without reading all the 
entries of the matrix and this is no longer sublinear time. We should think of $d_{\max}$ as representing the diameter of $\mathcal{Q}$ when $\AA$ is a distance matrix. To see why this is true, observe, for $q_j, q_{j'} \in \mathcal{Q}$, $\max_{i\in [m]}| \AA_{i,j} - \AA_{i,j'}|$ represents the best lower bound on $d(q_j,q_{j'})$, via reverse triangle inequality. Finding the largest such lower bound over all pairs of points in $\mathcal{Q}$ is the best lower bound on the diameter of $\mathcal{Q}$.  

Intuitively, we show that $d$ is 
a good proxy for $d_{\textrm{max}}$. Recall, for row $\AA_{i,*}$, Algorithm 
\ref{alg:row_norm_estimation} outputs the estimator $\widetilde{X}_i = d^2 + \frac{n}{b} 
\sum_{\ell \in \mathcal{T}_i} \AA^2_{i,\ell}$, where $\mathcal{T}_i$ is a uniform sample of 
indices in the range $[1,n]$, such that $|\mathcal{T}_i| =\Theta(b)$. 
Let us first consider the case of estimating the norm of $\AA_{x,*}$. Note, Algorithm 
\ref{alg:row_norm_estimation} reads the entire row and can compute the norm exactly. However, 
the analysis of our estimator is more intuitive in this case. Observe, by approximate 
triangle inequality and the definition of $d_{\max}$,

\begin{equation}\AA_{x,j} \leq (1+\epsilon)\left( \AA_{x,y} + \max_{i \in [m]}|\AA_{i,y} - \AA_{i,j}| \right) \leq (1+\epsilon)(d + 
d_{\max})\end{equation}
where first inequality follows from the upper bound in \ref{eqn:approx_tri}, the second 
follows from recalling that $\AA_{x,y} = d$ and observing that $\max_{i \in [m]}|\AA_{i,y} - \AA_{i,j}|$ is upper bounded 
by $d_{\max}$. Similarly, using the lower bound in \ref{eqn:approx_tri},
\begin{equation} \AA_{x,j} \geq \frac{| \AA_{x,y} - \max_{i \in [m]}|\AA_{i,y} - \AA_{i,j}| |}{1+\epsilon} \geq \frac{|d - 
d_{\max}|}{(1+\epsilon)}
\end{equation}

Intuitively, if $d_{\max}$ is sufficiently larger than $d$, $\AA_{x,j}$ is within a constant 
factor of $d_{\max}$. Formally, if $d_{\max} \geq 2 d$, for all $j \in [n]$, $\frac{d_{\max}}
{2(1+\epsilon)} \leq \AA_{x,j} \leq \frac{3(1+\epsilon)d_{\max}}{2}$. Therefore, each entry 
in $\AA_{x,*}$ is at least $d_{\max}/2(1+\epsilon)$. Observe, by linearity of expectation, 
$\expec{}{\widetilde{X}_x} = d^2 + \left\| \AA_{x,*} \right\|^2_2$. Therefore, by Markov's 
bound we have that with probability at least $1 -1/c$, \begin{equation}\widetilde{X}_x \leq 
c(d^2 + \left\| \AA_{x,*} \right\|^2_2) \leq 5c \left\| \AA_{x,*} \right\|^2_2\end{equation} 
where the last inequality follows from $\left\| \AA_{x,*} \right\|^2_2 \geq \frac{nd^2}
{4(1+\epsilon)^2}$ (since each entry of $\AA_{x,*}$ is at least $d/2(1+\epsilon)$). 
Further, 
\begin{equation}\widetilde{X}_x \geq d^2 + \left(\frac{n}{b}\right)\left(\frac{bd^2}{4(1+\epsilon)^2}\right) \geq \frac{\left\| \AA_{x,*} \right\|^2_2}{160}\end{equation}
where the first inequality follows from each sampled entry being at least $\frac{d}{2(1+\epsilon)}$ and the second inequality follows from $\left\| \AA_{x,*} \right\|^2_2 \leq \left(\frac{3(1+\epsilon)d}{2}\right)^2 n \leq  10nd^2$. For an appropriate value of $c$, $\widetilde{X}_x = \Theta\left(\left\| \AA_{x,*} \right\|^2_2\right)$ with probability at least $99/100$.

Note, $d$ could be really small compared to $d_{\max}$ and we cannot hope for a constant approximation to the row norm. Formally, $d \leq d_{\max}$ and we consider two cases, one where $\left\| \AA_{x,*} \right\|^2_2 \leq \frac{nd^2}{b}$ and the second being its complement. In the first case, we observe, 
\begin{equation}\widetilde{X}_x \geq d^2 \geq \frac{b}{n}\left\| \AA_{x,*} \right\|^2_2\end{equation}
where the first inequality follows from $\left\| \AA_{x,*} \right\|^2_2 \geq 0$ and the second follows from $\left\| \AA_{x,*} \right\|^2_2 \leq \frac{nd^2}{b}$. On the other hand, observe, by linearity of expectation, $\expec{}{\widetilde{X}_x} = d^2 + \left\| \AA_{x,*} \right\|^2_2$. By Markov's bound, with probability at least $1 - \frac{1}{c}$, 
\begin{equation}\widetilde{X}_x \leq c (d^2 + \left\| \AA_{x,*} \right\|^2_2) \leq 2c\left\| \AA_{x,*} \right\|^2_2\end{equation}
where the last inequality follows from $d$ being an element of $\AA_{x,*}$.
Therefore, combining the upper and lower bound, with probability at least $99/100$, $\widetilde{X}_x$ achieves a $\Theta\left(\frac{n}{b}\right)$-approximation to $\left\| \AA_{x,*} \right\|^2_2$.  

Next, consider the second case where $\left\| \AA_{x,*} \right\|^2_2 \geq \frac{nd^2}{b}$. We begin by bounding $\var{\widetilde{X}_x}$. Let $Y_{x,j} = \AA^2_{x,j}$ with probability $b/n$, and $Y_{x,j} = 0$ otherwise. Then,
\begin{equation}
\label{eqn:var}
\begin{split}
\var{\widetilde{X}_x} = \varf{d^2 + \frac{n}{b} \sum_{j \in \mathcal{T}_x} Y_{x,j} } & =   \left(\frac{n}{b}\right)^2\varf{\sum_{j \in \mathcal{T}_x} Y_{x,j}} \\
&\leq \left(\frac{n}{b}\right)^2 \expecf{}{ (\sum_{j \in \mathcal{T}_x} Y_{x,j})^2} \\ 
& \leq \left(\frac{n}{b}\right)^2 \frac{b}{n}\left( \frac{\left\| \AA_{x,*} \right\|^2_2}{d^2}\right) d^4  \\
& \leq \frac{\left\| \AA_{x,*} \right\|^4_2}{100c} 
\end{split}
\end{equation}
where $c$ is another fixed constant, the first inequality follows from the definition of variance, the second follows from $Y_{x,j}$ being upper bounded by $d^2$ and an averaging argument, and the last follows from the assumption that $\frac{nd^2}{b} \leq \|\AA_{x,*} \|^2_2$. Observe that the variance is maximized when there are $\frac{\left\| \AA_{x,*} \right\|^2_2}{d^2}$ entries with value $d$ and the last inequality follows from our assumption in this case. By Chebyshev's inequality, 
\begin{equation}
    \prob{}{\left|\widetilde{X}_x - \expecf{}{\widetilde{X}_x}\right| > \frac{\left\| \AA_{x,*} \right\|^2_2}{100}} \leq \frac{\varf{\widetilde{X}_x}}{c^2\left\| \AA_{x,*} \right\|^4_2} \leq \frac{1}{c^2}
\end{equation}
Therefore, with probability at least $1 - 1/c^2$, \begin{equation}\widetilde{X}_x  = \expec{}{\widetilde{X}_x} \pm \frac{\left\| \AA_{x,*} \right\|^2_2}{100} = d^2 + \left(1 \pm \frac{1}{100}\right)\left\| \AA_{x,*} \right\|^2_2\end{equation}
Therefore, $\widetilde{X}_x$ achieves a $\Theta(1)$-approximation to $\left\| \AA_{x,*} \right\|^2_2$. It follows that in every case, $\widetilde{X}_x$ achieves an $O\left(\frac{n}{b}\right)$-approximation to $\left\| \AA_{x,*} \right\|^2_2$ with probability at least $9/10$. 

Now we analyze our estimator in its full generality by considering row $\left\| \AA_{i,*} \right\|^2_2$ for any $i \neq x$. We define $d' = \max_{j \in [n]} \AA_{i, j}$ to be the largest entry in row $i$. Note, we do not explicitly know $d'$ as this would require reading the entire row. Instead, we show that biasing our estimator with $d^2$ suffices for all rows. Recall, $\AA_{x,y} =  \max_{j \in [n]}\AA_{x, j}$. 
We follow an analysis similar to the simplified one above. Since our estimator is still biased by $d^2$, we analyze the cases where $d$ is small or large compared to $d_{\max}$ separately. Consider the first case, where $d \geq 8 d_{\max}$. We begin by bounding $d'$ in terms of $d$ and $d_{\max}$. 
By the approximate triangle inequality, 
\begin{equation}
\AA_{x, 1} \geq  \frac{| \AA_{x, y} - \max_{i\in[m]}|\AA_{i,1} - \AA_{i,y}| |}{1+\epsilon} \geq \frac{|d - d_{\max}|}{1+\epsilon} \geq \frac{7}{1+\epsilon}d_{\max} \geq 3d_{\max}
\end{equation}
where the second inequality follows from recalling the definition of $d$ and observing that $d_{\max} \geq$  $\max_{i\in[m]}|\AA_{i,1} - \AA_{i,y}|$, and the third inequality follows from the assumption in this case. Alternatively, we can repeat the above bound to get 
\begin{equation}
\AA_{x, 1} \geq  \frac{| \AA_{x, y} - \max_{i\in[m]}|\AA_{i,1} - \AA_{i,y}| |}{1+\epsilon} \geq \frac{|d - d_{\max}|}{1+\epsilon} \geq \frac{7d}{(1+\epsilon)8} \geq \frac{7d}{16}
\end{equation}
Further, 
\begin{equation}
\AA_{x, 1} \leq \AA_{i,1} \leq d'
\end{equation}
where the first inequality follows from the definition of $\AA_{x,1}$ and the second inequality follows from the definition of $d'$. Therefore, combining the two equations above, we get that $d' \geq 3 d_{\max}$ and $d' \geq \frac{7d}{16}$. Next, we show a bound of any entry of the $i$-th row. By the upper bound in approximate triangle inequality, 
\begin{equation}
\AA_{i,j} \leq (1+\epsilon) \left( \AA_{i,j'} + \max_{i'} |\AA_{i',j} - \AA_{i',j'}| \right)
\leq (1+\epsilon) \left( d' + d_{\max} \right) \leq \frac{8d'}{3}
\end{equation}
where the second inequality follows from $d'$ being the largest element in row $i$, and the definition of $d_{\max}$, and the last follows from $d' \geq 3 d_{\max}$. Similarly, by the lower bound in approximate triangle inequality,
\begin{equation}
\AA_{i,j} \geq \frac{ | \AA_{i,j'} - \max_{i'} |\AA_{i',j} - \AA_{i',j'}| |}{1+\epsilon}
\geq \frac{| d' - d_{\max}|}{1+\epsilon} \geq \frac{d'}{3}
\end{equation}
Combining the two equations above, for all $j \in [n]$, $\AA_{i, j} = \Theta(d')$,  all entries of $ \AA_{i,*} $ are within a constant factor of each other. Therefore, $\| \AA_{i, *}\|^2_2 \leq \frac{64 n d'^2}{9}$. 
%Then,
%\begin{equation}
%\| \AA_{i, *}\|^2_2 \leq \frac{64 n d'^2}{9}
%\end{equation}
Recall, our estimator is 
\begin{equation}
\widetilde{X}_i = d^2 + \sum_{j \in \mathcal{T}_i} \frac{n}{b}\AA^2_{i,j}\geq \frac{n}{b}\frac{bd'^2}{9} \geq \frac{\left\| \AA_{i,*}\right\|^2_2}{64}
\end{equation}
We observe that by linearity of expectation, $\expec{}{\widetilde{X}_i} = d^2 + \left\| \AA_{i,*}\right\|^2_2 $. By Markov's we know that with probability at least $1 - 1/c$, 
\begin{equation}
\widetilde{X}_i \leq c \left( d^2 + \left\| \AA_{i,*}\right\|^2_2\right) \leq c \left( \frac{216}{49}d'^2 + \left\| \AA_{i,*}\right\|^2_2\right) \leq c'\left\| \AA_{i,*}\right\|^2_2 
\end{equation}
where the second inequality follows from recalling that $d' \geq \frac{7d}{16}$ and the last inequality follows from $d'$ being an entry in $\AA_{i,*}$. Therefore, $\widetilde{X}_i = \Theta\left(\left\| \AA_{i,*} \right\|^2_2\right)$, with probability at least $99/100$.

Now we analyze the case where $d \leq 8 d_{\max}$. We then consider two cases: $\left\| \AA_{i,*} \right\|^2_2 \geq \frac{nd'^2}{b}$ or $\left\| \AA_{i,*} \right\|^2_2 \leq \frac{nd'^2}{b}$. In the first case, computing the variance exactly as in equation \ref{eqn:var} and applying Chebyshev's inequality, we get that with probability at least $1 -1/c^2$, 
\begin{equation}\widetilde{X}_i = d^2 + \left(1 \pm \frac{1}{100}\right)\left\| \AA_{i,*} \right\|^2_2\end{equation} 
Since $d \leq 8 d_{\max} $ and $\frac{1}{16}d^2_{\textrm{max}} \leq \left\| \AA_{i,*} \right\|^2_2$, $\widetilde{X}_i = \Theta\left(\left\| \AA_{i,*} \right\|^2_2\right)$ with probability $99/100$. 

In the second case, we show that all entries of $\AA_{i,*}$ are within a constant factor of each other.  Recall, $\widetilde{X}_i \geq d^2$ since $\left\| \AA_{i,*} \right\|^2_2 \geq 0$. If $8d \geq d'$,  \begin{equation}\widetilde{X}_i \geq \frac{d'^2}{64} \geq \frac{b \left\| \AA_{i,*} \right\|^2_2}{64n}\end{equation} 
Recall, $\expec{}{\widetilde{X}_i} = d^2 + \| \AA_{i,*} \|^2_2 $, $d \leq \frac{16}{7} d'$ and $d'^2 \leq  \| \AA_{i,*} \|^2_2$. Therefore, the upper bound on $\widetilde{X}_i$ follows from Markov's bound and holds with probability at least $99/100$. 
If instead $8d \leq d'$, we show all entries of $\AA_{i,*}$ are within a constant factor of each other. To see this, let $d' = \AA_{i, j^*}$ and observe by approximate triangle inequality, 
\begin{equation}
\begin{split}
\AA_{i, j} \leq (1+\epsilon)\left(\AA_{i, j^*}  + \max_{i'}|\AA_{i', j} - \AA_{i', j^*} | \right) & \leq 2d' + 2(1+\epsilon)(\AA_{x, j^*} + \AA_{x,j} ) \\ 
& \leq 2d' + 8d \leq 3d'
\end{split}
\end{equation}
where the first and second inequalities follow from initially applying the upper bound in equation \ref{eqn:approx_tri} to $\AA_{i,j}$, and then applying the equation \ref{eqn:approx_tri_2} to $\max_{i'}|\AA_{i', j} - \AA_{i', j^*} |$, the third inequality follows from upper bounding both $\AA_{x, i^*}$ and $\AA_{x,j}$ by $d$, and the last inequality follows from the assumption $8d \leq d'$ combined with the definition of $d'$. Further,
\begin{equation}
\begin{split}
\AA_{i,j} \geq \frac{\AA_{i, i^*} - \AA_{i^*j}}{1+\epsilon} & \geq \frac{1}{2} (d'  - 2( \AA_{ x, j^*} + \AA_{x, j} )) \\
& \geq \frac{1}{2} d'  - 2d \geq \frac{d'}{4}
\end{split}
\end{equation}
where the third inequality follows from upper bounding both $\AA_{x, i^*}$ and $\AA_{x,j}$ by $d$, and the last inequality follows from the assumption $8d \leq d'$ combined with the definition of $d'$.
Therefore, $d(p_i, q_j) = \Theta(d')$. Finally, $\expec{}{\widetilde{X}_i} = d^2 + \| \AA_{i,*} \|^2_2 $ and by Markov's bound, our estimator 
$\widetilde{X}_i = \Theta\left(d^2 + \left\| \AA_{i,*} \right\|^2_2\right)$ with probability $99/100$. Since $d^2 \leq \left\| \AA_{i,*} \right\|^2_2$, we obtain a constant factor approximation with probability at least $99/100$. By a union bound over all the probabilistic events, all of them simultaneously hold with probability at least $9/10$, which finishes the proof.  
\end{proof}

To obtain an $O\left(\frac{n}{b}\right)$ approximation for all the $m$ rows simultaneously with constant probability, we can compute $O\left(\log(m) \right)$ estimators for each row and take their median. We also observe that Column and Row Norm Estimation are symmetric operations and a slight modification to Algorithm $\ref{alg:row_norm_estimation}$ yields a Column Norm Estimation algorithm with the following guarantee:

\begin{corollary}(Column Norm Estimation.)
\label{cor:column_norm_estimation}
Let $\AA$ be a $m \times n$ matrix such that $\AA$ satisfies approximate triangle inequality. For $j \in [n]$ let $\AA_{*,j}$ be the $j^{th}$ column of $\AA$. There exists an algorithm that uniformly samples $\Theta(b)$ elements from $\AA_{*,j}$ and with probability $9/10$ outputs an estimator which is an $O\left(\frac{m}{b}\right)$-approximation to $\left\| \AA_{*,j} \right\|^2_2$. Further, this algorithm runs in $O(bn+m)$ time. 
\end{corollary}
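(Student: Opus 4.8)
The plan is to reduce the statement to Lemma~\ref{lem:row_norm_estimation} by transposing. First I would observe that the notion of approximate triangle inequality is symmetric under transposition: the Definition above explicitly asserts that the analogues of \eqref{eqn:approx_tri} and \eqref{eqn:approx_tri_2} hold for $\AA^T$. Hence if $\AA$ is an $m\times n$ matrix satisfying approximate triangle inequality, then $\AA^T$ is an $n\times m$ matrix that also satisfies approximate triangle inequality, and the $j^{th}$ row of $\AA^T$ is exactly the $j^{th}$ column $\AA_{*,j}$ of $\AA$.

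Next I would spell out the transposed algorithm so that the claim is self-contained. Running Algorithm~\ref{alg:row_norm_estimation} on $\AA^T$ with sampling parameter $b$ amounts to: letting $x=\argmin_{j\in[n]}\AA_{1,j}$ be the index of the minimum entry of the first row of $\AA$; setting $d=\max_{i\in[m]}\AA_{i,x}$ to be the largest entry in column $x$; for each $j\in[n]$ drawing a uniformly random set $\mathcal{T}_j$ of $\Theta(b)$ indices in $[m]$; and outputting $\widetilde{X}_j = d^2 + \sum_{i\in\mathcal{T}_j}\frac{m}{b}\AA_{i,j}^2$. This is precisely the ``slight modification'' of Algorithm~\ref{alg:row_norm_estimation} referred to in the statement.

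Now I would invoke Lemma~\ref{lem:row_norm_estimation} applied to the $n\times m$ matrix $\AA^T$. It gives, for every fixed $j$, an estimator obtained by uniformly sampling $\Theta(b)$ entries of the $j^{th}$ row of $\AA^T$ (equivalently, $\Theta(b)$ entries of $\AA_{*,j}$) that is an $O(m/b)$-approximation to $\|(\AA^T)_{j,*}\|_2^2 = \|\AA_{*,j}\|_2^2$ with probability at least $9/10$. The running-time bound of Lemma~\ref{lem:row_norm_estimation} with the roles of $m$ and $n$ swapped becomes $O(bn+m)$: Step~1 reads a row of $\AA$ ($n$ entries), Step~2 reads a column of $\AA$ ($m$ entries), and the uniform sampling over the $n$ columns costs $O(bn)$. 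This yields the corollary verbatim.

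There is essentially no obstacle here: nothing in the proof of Lemma~\ref{lem:row_norm_estimation} used that the matrix is wide or tall, or any asymmetric feature of distance matrices — it used only approximate triangle inequality — so the entire work is in confirming the symmetry of the hypothesis, which the Definition grants. If one preferred a direct proof instead of the reduction, it would be a line-by-line copy of the proof of Lemma~\ref{lem:row_norm_estimation} with $[m]\leftrightarrow[n]$, rows $\leftrightarrow$ columns, $d_{\max}$ reinterpreted as the best attainable lower bound on the diameter of $\mathcal{P}$ rather than of $\mathcal{Q}$, and the scaling factor $n/b$ replaced throughout by $m/b$.
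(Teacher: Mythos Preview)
Your proposal is correct and takes essentially the same approach as the paper: the paper states only that ``Column and Row Norm Estimation are symmetric operations and a slight modification to Algorithm~\ref{alg:row_norm_estimation} yields a Column Norm Estimation algorithm,'' and your reduction to Lemma~\ref{lem:row_norm_estimation} via transposition (using that the Definition explicitly asserts the approximate triangle inequality for $\AA^T$) is exactly the formalization of that remark.
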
 

\section{Projection-Cost Preserving Sketches}

Next, we describe how to use the above estimators to sample rows and columns. We would like to reduce the dimensionality of the matrix in a way that approximately preserves low-rank structure. At a high level, we sketch the input matrix on the left and the right and use an input-sparsity time algorithm on the resulting smaller matrix. The main insight to show such a result is that if we can approximately preserve all rank-$k$ subspaces in the column and row space of the matrix, then we can recursively sample rows and columns to obtain a much smaller matrix. To this end, we introduce a relaxation of projection-cost preserving sketches \cite{cmm17} that satisfy an {\it additive error guarantee}.

We prove that projection-cost preserving sketches can be computed using our coarse estimates to the column and row norms. %Since their introduction, projection-cost preserving sketches have found widespread applications in machine learning problems such as clustering. We hope that the additive error guarantee we introduce can be applied to an even wider range of problems. Further, our  
We begin by showing an intermediate result bounding the spectrum of the sample in terms of the original matrix. 

\begin{theorem}(Spectral Bounds.)\label{thm:pcp1}
Let $\AA$ be a $m \times n$ matrix such that $\AA$ satisfies approximate triangle inequality. For $j \in [n]$, let $\widetilde{X}_j$ be an $O\left(\frac{m}{b}\right)$-approximate estimate for the $j^{th}$ column of $\AA$ such that it satisfies the guarantee of Corollary \ref{cor:column_norm_estimation}. Then, let $q = \{q_1, q_2 \ldots q_n\}$ be a probability distribution over the columns of $\AA$ such that $q_j = \frac{\widetilde{X}_j}{\sum_{j'}\widetilde{X}_{j^{'}}}.$ Let $t =  O\left(\frac{ m}{b\epsilon^2}\log(\frac{m}{\delta})\right)$ for some constant $c$. Then, construct $\CC$ using $t$ columns of $\AA$ and set each one to $\frac{A_{*,j}}{\sqrt{t q_j}}$ with probability $q_j$. With probability at least $1-\delta$,
\[
\CC\CC^T - \epsilon\| \AA \|^2_F \II \preceq \AA\AA^T \preceq \CC\CC^T + \epsilon\|\AA\|^2_F \II
\]
\end{theorem}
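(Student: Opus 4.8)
The plan is to view $\CC\CC^T$ as a sum of $t$ i.i.d.\ rescaled rank-one random matrices and apply a matrix Bernstein inequality; the structure of $\AA$ (the approximate triangle inequality) enters only through Corollary~\ref{cor:column_norm_estimation}, which controls how small the sampling probabilities $q_j$ can be. Write $v_j := \AA_{*,j}$, so $\AA\AA^T = \sum_{j=1}^n v_j v_j^T$. If $j_1,\dots,j_t$ are the indices drawn i.i.d.\ from $q$, then the $\ell$-th column of $\CC$ is $v_{j_\ell}/\sqrt{t q_{j_\ell}}$, so $\CC\CC^T = \sum_{\ell=1}^t W_\ell$ with $W_\ell := \frac{1}{t q_{j_\ell}} v_{j_\ell} v_{j_\ell}^T$. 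The rescaling makes the estimator unbiased: $\expec{}{W_\ell} = \frac1t\sum_j q_j\cdot\frac{1}{q_j} v_j v_j^T = \frac1t\AA\AA^T$, hence $\expec{}{\CC\CC^T} = \AA\AA^T$ regardless of the estimates $\widetilde X_j$; the estimates enter only the per-term operator norm and the variance.

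First I would extract the one structural fact needed: by Corollary~\ref{cor:column_norm_estimation} (equivalently, the case analysis inside the proof of Lemma~\ref{lem:row_norm_estimation} applied to $\AA^T$), the estimates satisfy $\widetilde X_j \geq \Omega(\tfrac bm)\norm{v_j}_2^2$ for every $j$ while $\sum_{j'}\widetilde X_{j'} = O(\norm{\AA}_F^2)$ (each $\widetilde X_j$ being at most $O(1)\norm{v_j}_2^2$). Therefore $q_j \geq \Omega(\tfrac bm)\,\norm{v_j}_2^2/\norm{\AA}_F^2$, i.e.
\[
M \;:=\; \max_{j\in[n]}\frac{\norm{v_j}_2^2}{q_j} \;=\; O\!\left(\tfrac mb\right)\norm{\AA}_F^2 .
\]
From $M$ I obtain the two inputs to matrix Bernstein for the mean-zero i.i.d.\ symmetric $m\times m$ matrices $X_\ell := W_\ell - \expec{}{W_\ell}$: a uniform operator-norm bound, since $\norm{W_\ell}_{\mathrm{op}} = \norm{v_{j_\ell}}_2^2/(t q_{j_\ell}) \leq M/t$ and $\norm{\expec{}{W_\ell}}_{\mathrm{op}} = \tfrac1t\norm{\AA\AA^T}_{\mathrm{op}} \leq \tfrac1t\norm{\AA}_F^2 = O(M/t)$ give $\norm{X_\ell}_{\mathrm{op}} = O(M/t)$; and a variance bound, since $\expec{}{X_\ell^2} \preceq \expec{}{W_\ell^2}$ (dropping the PSD term $(\expec{}{W_\ell})^2$) and $\expec{}{W_\ell^2} = \tfrac{1}{t^2}\sum_j\tfrac{\norm{v_j}_2^2}{q_j} v_j v_j^T \preceq \tfrac{M}{t^2}\AA\AA^T$, whence $\norm{\sum_{\ell=1}^t\expec{}{X_\ell^2}}_{\mathrm{op}} \leq \tfrac Mt\norm{\AA\AA^T}_{\mathrm{op}} \leq \tfrac Mt\norm{\AA}_F^2 =: \sigma^2 = O(\tfrac{m}{bt})\norm{\AA}_F^4$.

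Then I would apply matrix Bernstein: $\prob{}{\norm{\CC\CC^T - \AA\AA^T}_{\mathrm{op}} \geq s} \leq 2m\exp\!\big(\tfrac{-s^2/2}{\sigma^2 + O(M/t)\,s/3}\big)$. Setting $s = \epsilon\norm{\AA}_F^2$, both $s^2/\sigma^2$ and $s/(M/t)$ are $\Omega(\tfrac{b\epsilon^2 t}{m})$ — the former dominates because $\epsilon\leq 1$ — so the exponent is $-\Omega(\tfrac{b\epsilon^2 t}{m})$, and choosing $t = O\!\big(\tfrac{m}{b\epsilon^2}\log\tfrac m\delta\big)$ makes the failure probability at most $\delta$. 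Finally, $\norm{\CC\CC^T - \AA\AA^T}_{\mathrm{op}} \leq \epsilon\norm{\AA}_F^2$ is precisely the two-sided statement $\CC\CC^T - \epsilon\norm{\AA}_F^2\II \preceq \AA\AA^T \preceq \CC\CC^T + \epsilon\norm{\AA}_F^2\II$.

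The concentration step is routine; the step I expect to need the most care is extracting the clean bound $M = O(\tfrac mb)\norm{\AA}_F^2$ from Corollary~\ref{cor:column_norm_estimation}. This requires that no $\widetilde X_j$ grossly over-estimates $\norm{v_j}_2^2$ (so that $\sum_j\widetilde X_j = O(\norm{\AA}_F^2)$ and $q$ is non-degenerate) and, crucially, that no $\widetilde X_j$ grossly under-estimates $\norm{v_j}_2^2$ — which is exactly where the $d^2$ bias term in the estimator and the full case analysis of Lemma~\ref{lem:row_norm_estimation} are doing their work. One should also dispense with the degenerate case of an all-zero column, where $\norm{v_j}_2 = 0$, so that $q$ is a well-defined distribution (e.g.\ by deleting such columns, or noting $\widetilde X_j \geq d^2 > 0$ otherwise).
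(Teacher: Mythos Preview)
Your proposal is correct and follows essentially the same route as the paper: write $\CC\CC^T-\AA\AA^T$ as a sum of $t$ i.i.d.\ mean-zero symmetric matrices, bound the per-term operator norm and the variance via the inequality $q_j \ge \Omega(b/m)\,\norm{\AA_{*,j}}_2^2/\norm{\AA}_F^2$ coming from Corollary~\ref{cor:column_norm_estimation}, and apply matrix Bernstein (Lemma~\ref{lem:matrix_bern}) with $t=O\big(\tfrac{m}{b\epsilon^2}\log\tfrac m\delta\big)$. If anything, you are more explicit than the paper about where the two-sided guarantee on $\widetilde X_j$ is used to get $\sum_{j'}\widetilde X_{j'}=O(\norm{\AA}_F^2)$ and hence a clean bound on $M=\max_j\norm{v_j}_2^2/q_j$.
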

%=  O\left(\frac{ m}{b\epsilon^2}\log(\frac{m}{\delta})\right)
\begin{proof}
Let $\Y = \CC\CC^T - \AA\AA^T$. For notational convenience let $\AA_j = \AA_{*,j}$. We can then write $\Y = \sum_{i\in [t]} \X_i$, where $\X_i = \frac{1}{t}(\frac{1}{q_j}\AA_{j}\AA^T_{j} - \AA\AA^T )$ with probability $q_j$. We observe that $\expec{}{\frac{1}{q_j}\AA_{j}\AA^T_{j} - \AA\AA^T}= 0$, and therefore, $\expec{}{\Y} =0$. Next, we bound the operator norm of $\Y$. To this end, we use the Matrix Bernstein inequality, which in turn requires a bound on the operator norm of $\X_i$ and variance of $\Y$. 
Recall, 
\begin{equation}
    \begin{split}
        \| \X_i \|_2 & = \left\| \frac{1}{tq_j}\AA_{j}\AA^T_{j} - \frac{1}{t}\AA\AA^T \right\|_2 \\
        & \leq \frac{m}{tb} \frac{\| \AA \|^2_F}{\| \AA_j \|^2_2}\|\AA_{j}\AA^T_{j} \|_2 + \frac{1}{t} \| \AA\AA^T \|_2 \\
        & \leq \frac{2m}{tb} \| \AA \|^2_F 
    \end{split}
\end{equation}
Next, we bound $\var{\Y}\leq \expecf{}{\Y^2}$.
\begin{equation}
    \begin{split}
        \expecf{}{\Y^2} & = t \expecf{}{\X^2_i} = \frac{1}{t}\expecf{}{\left(\frac{1}{q_j}\AA_{j}\AA^T_{j} - \AA\AA^T\right)^2} \\
        & = \frac{1}{t}\left( \frac{(\AA_{j}\AA^T_{j})^2}{q_j} + q_j(\AA\AA^T)^2 -2 \AA_{j}\AA^T_{j} \AA\AA^T \right) \\
        & \preceq \frac{1}{t}\left( \frac{m}{b}\frac{\| \AA \|^2_F}{\| \AA_j \|^2_2}(\AA_{j}\AA^T_{j})^2 + \frac{b}{m}\frac{\| \AA_j \|^2_2}{\| \AA \|^2_F}(\AA\AA^T)^2 \right) \\
        & \preceq \frac{cm\| \AA \|^4_F }{tb}\II_{m\times m} 
    \end{split}
\end{equation}
Therefore, $\sigma^2 = \| \expecf{}{\Y^2} \|_2 \leq \frac{cm\| \AA \|^4_F }{tb}$ Applying the Matrix Bernstein inequality (see Lemma \ref{lem:matrix_bern} in the Appendix), 
\[
\prob{}{\| \Y \|_2 \geq \epsilon\| \AA \|^2_F } \leq m e^{\left(- \frac{\epsilon^2 \| \AA \|^4_F}{\sigma^2 + \frac{\epsilon m}{3tb}\| \AA \|^4_F }\right)}
\leq \delta
\]
by substituting the value of $\sigma^2$ and setting $t = \frac{c' m}{b\epsilon^2}\log(\frac{m}{\delta})$. The bound follows. 
\end{proof}

Using the above theorem, we show that sampling columns of $\AA$ according to approximate column norms yields a matrix that preserves projection cost for all rank-$k$ projections up to additive error. 
\begin{theorem}(Column Projection-Cost Preservation.)
\label{thm:pcp_main_col}
Let $\AA$ be a $m \times n$ matrix such that $\AA$ satisfies approximate triangle inequality. For $j \in [n]$, let $\widetilde{X}_j$ be an $O\left( \frac{m}{b} \right)$-approximate estimate for the $j^{th}$ column of $\AA$ such that it satisfies the guarantee of Corollary \ref{cor:column_norm_estimation}. Then, let $q = \{q_1, q_2 \ldots q_n\}$ be a probability distribution over the columns of $\AA$ such that $q_j = \frac{\widetilde{X}_j}{\sum_{j'}\widetilde{X}_{j'}}$. Let $t = O\left( \frac{ m k^2}{b\epsilon^2}\log(\frac{m}{\delta})\right)$ for some constant $c$. Then, construct $\CC$ using $t$ columns of $\AA$ and set each one to $\frac{A_{*,j}}{\sqrt{t q_j}}$ with probability $q_j$. With probability at least $1-\delta$, for any rank-$k$ orthogonal projection $\X$
\[
\| \CC - \X\CC \|^2_F = \| \AA - \X\AA \|^2_F \pm \epsilon\| \AA \|^2_F
\]
\end{theorem}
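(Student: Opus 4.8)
The plan is to reduce the claim to the spectral bound of Theorem~\ref{thm:pcp1}, paying a factor of $k$, together with a one-dimensional concentration bound for a trace. First I would rewrite both cost terms as traces: since $\X$ is an orthogonal projection, so is $\II - \X$, and expanding $\| \AA - \X\AA \|_F^2 = \trace{(\II-\X)\AA\AA^T}$ gives
\[
\| \AA - \X\AA \|_F^2 = \trace{\AA\AA^T} - \trace{\X\AA\AA^T},\qquad \| \CC - \X\CC \|_F^2 = \trace{\CC\CC^T} - \trace{\X\CC\CC^T}.
\]
Setting $\Y = \CC\CC^T - \AA\AA^T$ (exactly as in the proof of Theorem~\ref{thm:pcp1}) and subtracting yields $\| \CC - \X\CC \|_F^2 - \| \AA - \X\AA \|_F^2 = \trace{\Y} - \trace{\X\Y}$, so it suffices to bound $|\trace{\Y}|$ and $|\trace{\X\Y}|$ each by $O(\epsilon)\|\AA\|_F^2$, the second one simultaneously over all rank-$k$ projections $\X$.

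For the term $\trace{\X\Y}$, if $\X$ has rank $k$ then $|\trace{\X\Y}| = |\sum_{i=1}^{k} u_i^T \Y u_i| \le k\|\Y\|_2$, where $u_1,\dots,u_k$ is an orthonormal basis for the range of $\X$. I would invoke Theorem~\ref{thm:pcp1} with accuracy $\epsilon/k$ in place of $\epsilon$ — which is precisely what inflates the sample size to $t = O(mk^2 b^{-1}\epsilon^{-2}\log(m/\delta))$ — to get $\|\Y\|_2 \le (\epsilon/k)\|\AA\|_F^2$ with probability at least $1-\delta/2$, hence $|\trace{\X\Y}| \le \epsilon\|\AA\|_F^2$ uniformly in $\X$, since the controlling event does not mention $\X$.

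The term $\trace{\Y}$ cannot be handled through $\|\Y\|_2$ (the bound $|\trace{\Y}| \le m\|\Y\|_2$ loses a factor $m$), so instead I would use that $\trace{\Y} = \sum_{i\in[t]}\trace{\X_i}$ is a sum of $t$ i.i.d.\ mean-zero scalars, where $\X_i = \tfrac{1}{t}\big(\tfrac{1}{q_j}\AA_j\AA_j^T - \AA\AA^T\big)$ with probability $q_j$. Using $q_j = \Omega\!\big(\tfrac{b}{m}\cdot\tfrac{\|\AA_j\|_2^2}{\|\AA\|_F^2}\big)$ — the same lower bound on the sampling probabilities used in the proof of Theorem~\ref{thm:pcp1}, which follows from $\widetilde{X}_j$ being an $O(m/b)$-approximation to $\|\AA_j\|_2^2$ — one gets the deterministic bound $|\trace{\X_i}| = O\!\big(\tfrac{m}{tb}\big)\|\AA\|_F^2$ and the variance bound $\sum_i \var{\trace{\X_i}} = O\!\big(\tfrac{m}{tb}\big)\|\AA\|_F^4$. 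A scalar Bernstein inequality then yields $|\trace{\Y}| \le \epsilon\|\AA\|_F^2$ with probability at least $1-\delta/2$ as soon as $t = \Omega(m b^{-1}\epsilon^{-2}\log(1/\delta))$, which is dominated by our choice of $t$. A union bound over the two events and rescaling $\epsilon$ by a constant finishes the argument.

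The main obstacle is the asymmetry between the two trace terms: $\trace{\X\Y}$ is tailor-made for a spectral estimate, costing only the rank $k$, and this is exactly the source of the $k^2$ blow-up already visible in Theorem~\ref{thm:pcp1}; but $\trace{\Y}$ has no low-rank structure and must instead be controlled by a separate, $\X$-independent scalar concentration argument (equivalently, by showing $\big|\|\CC\|_F^2 - \|\AA\|_F^2\big| \le \epsilon\|\AA\|_F^2$). One should also first boost the per-column norm estimates of Corollary~\ref{cor:column_norm_estimation} to hold for all $n$ columns with the required probability via the median-of-$O(\log n)$-estimators trick, and condition on that throughout.
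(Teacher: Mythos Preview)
Your proof is correct and considerably more direct than the paper's. The paper follows the head/tail/cross decomposition of \cite{cmm17}: it splits the spectrum of $\AA$ at the threshold $\sigma_\ell^2 \ge \|\AA\|_F^2/k$, writes $\trace{(\II-\X)\AA\AA^T(\II-\X)}$ and $\trace{(\II-\X)\CC\CC^T(\II-\X)}$ as head, tail, and cross pieces via the projectors $\P_\ell$ and $\P_{\setminus\ell}$, and controls each piece separately (Theorem~\ref{thm:pcp1} for the head, a scalar Chernoff for the tail trace, and a Cauchy--Schwarz argument for the cross term). That machinery is what one needs for \emph{relative}-error projection-cost preservation, where the target is $(1\pm\epsilon)\|\AA-\X\AA\|_F^2$ and one cannot afford a blanket $\epsilon\|\AA\|_F^2$ slack.

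You instead exploit that the target here is only additive: writing the discrepancy as $\trace{\Y} - \trace{\X\Y}$ with $\Y = \CC\CC^T - \AA\AA^T$, the second term is at most $k\|\Y\|_2$ uniformly in $\X$, so a single invocation of Theorem~\ref{thm:pcp1} at accuracy $\epsilon/k$ handles it; the first term is $\|\CC\|_F^2 - \|\AA\|_F^2$, which is a scalar and yields to Bernstein. This bypasses the spectral head/tail split and the cross-term analysis entirely. The paper's route would be the right starting point if one later wanted to tighten the guarantee toward relative error; yours is the cleaner argument for the theorem as stated, with the same $O(mk^2 b^{-1}\epsilon^{-2}\log(m/\delta))$ sample complexity arising for the same reason (the $\epsilon \to \epsilon/k$ substitution in Theorem~\ref{thm:pcp1}).
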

%$O\left( \frac{ m k^2}{b\epsilon^2}\log(\frac{m}{\delta})\right)
\begin{proof}
We give a similar proof to the relative error guarantees in \cite{cmm17}, but need to replace certain parts with our different distribution which is only based on row and column norms rather than leverage scores, and consequently we obtain additive error in places instead. As our lower bound shows, this is necessary in our setting.

Let $\Y = \mathbb{I}-\X$, so that $\| \AA - \X\AA \|^2_F = \trace{\Y \AA \AA^T \Y}$ and $\| \CC - \X\CC \|^2_F = \trace{\Y \CC \CC^T \Y}$.
We split the singular values of $\AA$ into a head and a tail as follows. Let $\sigma^2_\ell$ be the smallest singular value of $\AA$ for which $\sigma^2_\ell \geq \frac{\| \AA \|^2_F}{k}$. Let $\U_\ell \U_\ell^T$ be the projection onto the top $\ell$ singular vectors of $\AA$  and $\U_{\setminus \ell} \U_{\setminus \ell}^T$ be the projection on the bottom singular vectors. Let $\P_\ell = \U_\ell \U_\ell^T$ and $\P_{\setminus \ell} = \U_{\setminus \ell} \U_{\setminus \ell}^T$. Then, 
\begin{equation} \label{eqn:split1}
\begin{split}
 \trace{\Y \AA \AA^T \Y}  &= \trace{\Y \P_\ell \AA \AA^T \P_\ell \Y}
  + \trace{\Y \P_{\setminus \ell} \AA \AA^T \P_{\setminus \ell} \Y} + 2 \trace{\Y \P_\ell \AA \AA^T \P_{\setminus \ell} \Y} \\
 & = \trace{\Y \P_\ell \AA \AA^T \P_\ell \Y} + \trace{\Y \P_{\setminus \ell} \AA \AA^T \P_{\setminus \ell} \Y}
\end{split}
\end{equation}
The cross terms vanish since $\P_\ell \AA$ and $\P_{\setminus \ell}\AA$ are orthogonal. Similarly, we split the $\CC \CC^T$ terms.
\begin{equation} \label{eqn:split2}
\begin{split}
 \trace{\Y \CC \CC^T \Y} &= \trace{\Y \P_\ell \CC \CC^T \P_\ell \Y}
  + \trace{\Y \P_{\setminus \ell} \CC \CC^T \P_{\setminus \ell} \Y} + 2 \trace{\Y \P_\ell \CC \CC^T \P_{\setminus \ell} \Y} \\
\end{split}
\end{equation}
We note that the cross terms here do not vanish since $\P_\ell \CC$ and $\P_{\setminus \ell} \CC$ might not be orthogonal. We now show how to handle each of these terms separately. 
\subsection{Head Terms.} 
Our analysis for the Head Terms closely follows that of \cite{cmm17}, where they strive for relative error using leverage scores instead.  
For any vector $x$, let $y = \P_\ell x$. Then, $y^T \AA \AA^T y = x^T \P_\ell^T \AA \AA^T \P_\ell x = x^T \AA_\ell \AA_\ell^T x $. Then, setting $\epsilon = \frac{\epsilon}{k}$ in Theorem $\ref{thm:pcp1}$, we obtain
\begin{equation}
\label{eqn:head1}
    y^T \CC \CC^T y - \frac{\epsilon \| \AA \|^2_F}{k}y^T y \leq x^T \AA_\ell \AA^T_\ell x \leq y^T \CC \CC^T y + \frac{\epsilon \| \AA \|^2_F}{k}y^T y
\end{equation}
Note, this only increases the number of columns we sample by a poly$(k)$ factor. Recall, by definition, $y$ is orthogonal to all but the top $\ell$ singular vectors of $\AA$. Therefore, $x^T \AA_\ell \AA_\ell^T x = y^T \AA \AA^T y \geq \frac{\| \AA \|^2_F}{k}y^Ty$. Combined with $(\ref{eqn:head1})$, $y^T \CC \CC^T y = (1 \pm \epsilon)x^T \AA_\ell \AA^T_\ell x$. Since $y^T \CC \CC^T y = x^T \P_\ell \CC \CC^T \P_\ell x$ and the above is true for any $x$, we get 
\begin{equation}
\label{eqn:head2}
    (1 - \epsilon) \P_\ell \CC \CC^T \P_\ell \preceq \AA_\ell \AA_\ell^T \preceq (1+\epsilon) \P_\ell \CC\CC^T \P_\ell
\end{equation}
We observe that $(\ref{eqn:head2})$ bounds the diagonal entries of $\Y \AA_\ell \AA_\ell^T \Y$ in terms of the diagonal entries of $\Y \P_\ell \CC \CC^T \P_\ell \Y$, we get that, 
$$
(1 - \epsilon) \trace{\Y \P_\ell \CC \CC^T \P_\ell \Y}\leq \trace{\Y \P_\ell \AA \AA^T \P_\ell \Y} \leq (1 + \epsilon) \trace{\Y \P_\ell \CC \CC^T \P_\ell \Y}
$$
Rearranging the terms and assuming $\epsilon < 1/2$,
\begin{equation}
\label{eqn:head_main}
(1 - 4\epsilon) \trace{\Y \P_\ell \AA \AA^T \P_\ell \Y} \leq \trace{\Y \P_\ell \CC \CC^T \P_\ell \Y} \leq (1 + 4\epsilon) \trace{\Y \P_\ell \AA \AA^T \P_\ell \Y}
\end{equation}

\subsection{Tail Terms.}
Recall from the definition of $\Y$, 
\begin{equation}
    \begin{split}
    \trace{\Y \AA_{\setminus \ell} \AA_{\setminus \ell}^T\Y} & =\trace{\AA_{\setminus \ell} \AA_{\setminus \ell}^T} 
     -\trace{\X \AA_{\setminus \ell} \AA_{\setminus \ell}^T\X}
    \end{split}
\end{equation}
Similarly, 
\begin{equation}
    \begin{split}
    \trace{\Y \P_{\setminus \ell} \CC \CC^T \P_{\setminus \ell} \Y} & = \trace{\P_{\setminus \ell} \CC \CC^T \P_{\setminus \ell} } 
     - \trace{\X \P_{\setminus \ell} \CC \CC^T \P_{\setminus \ell} \X}
    \end{split}
\end{equation}
In order to relate $\trace{\P_{\setminus \ell} \CC \CC^T \P_{\setminus \ell}}$
and $\trace{\AA_{\setminus \ell} \AA_{\setminus \ell}^T}$, we observe that the first term is equal to the second in expectation. Therefore, using a scalar Chernoff bound, we show that $\trace{\P_{\setminus \ell} \CC \CC^T \P_{\setminus \ell}}$ concentrates around its expectation. We defer this proof to the Supplementary Material and obtain the following bound: 
\begin{equation}
\label{eqn:chernoff}
    \trace{\AA_{\setminus \ell} \AA_{\setminus \ell}^T} - \trace{\P_{\setminus \ell} \CC \CC^T \P_{\setminus \ell}} = \pm \epsilon \| \AA \|^2_F
\end{equation}
Next, we relate the remaining two terms following a strategy similar to the one used for the head terms. Let the vectors $x,y$ be defined as above. Then, $x^T \AA_{\setminus \ell} \AA^T_{\setminus \ell} x = y^T \AA \AA^T y$. Using Theorem \ref{thm:pcp1} with $\epsilon = \frac{\epsilon}{k}$, we get 
\begin{equation}
\label{eqn:tail1}
    x^T \AA_{\setminus \ell} \AA^T_{\setminus \ell} x = y^T \CC \CC^T y \pm \frac{\epsilon \| \AA \|^2_F}{k}y^T y
\end{equation}
Since $\P_{\setminus \ell}$ is a projection matrix, $y^Ty \leq x^Tx$ and assuming $\epsilon< 1/2$,
\begin{equation}
\label{eqn:tail2}
    \begin{split}
        y^T \CC \CC^T y  - \epsilon\frac{\| A\|^2_F}{k}x^Tx \leq x^T \AA_{\setminus \ell} \AA^T_{\setminus \ell}x \\
        x^T \AA_{\setminus \ell} \AA^T_{\setminus \ell}x \leq  y^T \CC \CC^T y + \epsilon\frac{\| \AA\|^2_F}{k}x^Tx 
    \end{split}
\end{equation}

Recall, by the definition of $m$, $x^T \AA_{\setminus \ell} \AA^T_{\setminus \ell}x \leq \frac{\| A\|^2_F}{k}$. Substituting this back into $(\ref{eqn:tail2})$ and $(\ref{eqn:tail3})$, we get
\begin{equation}
\label{eqn:tail3}
    P_{\setminus \ell} \CC \CC^T \P_{\setminus \ell} - \epsilon\frac{\| \AA\|^2_F}{k}\II \preceq \AA_{\setminus \ell} \AA_{\setminus \ell}^T \preceq \P_{\setminus \ell} \CC \CC^T \P_{\setminus \ell} + \epsilon\frac{\| \AA\|^2_F}{k}\II
\end{equation}

Let $\X = \mathbf{ZZ}^T$ such that $\mathbf{Z} \in \mathbb{R}^{m\times k}$ is an orthonomal matrix. By the cyclic property of the trace,
\[
\trace{\X \AA_{\setminus \ell} \AA_{\setminus \ell}^T\X} = \trace{\mathbf{Z}^T \AA_{\setminus \ell} \AA_{\setminus \ell}^T\mathbf{Z}}
= \sum_{j\in[k]} \mathbf{Z}^T_{*,j} \AA_{\setminus \ell} \AA_{\setminus \ell}^T \mathbf{Z}_{*,j}
\]
Similarly, 
\[
\trace{\X \CC_{\setminus \ell} \CC_{\setminus \ell}^T\X} \leq \sum_{j\in[k]} \mathbf{Z}^T_{*,j} \CC_{\setminus \ell} \CC_{\setminus \ell}^T \mathbf{Z}_{*,j}
\]
Combining this with $(\ref{eqn:tail3})$ and $(\ref{eqn:tail1})$, and assuming $\epsilon < 1/2$ we get,  
\begin{equation}
\label{eqn:tail_final}
    \trace{\Y \P_{\setminus \ell} \CC \CC^T \P_{\setminus \ell} \Y} = \trace{\Y \AA_{\setminus \ell} \AA_{\setminus \ell}^T\Y} \pm 4\epsilon\| A \|^2_F
\end{equation}

\subsection{Cross Terms.}
Finally, we consider the cross term $2 \trace{\Y \P_\ell \CC \CC^T \P_{\setminus \ell} \Y}$. Let $\LL = \AA \AA^T (\AA \AA^T)^{+}$ and $\MM = \AA \AA^T$. We observe that the columns of $\P\CC\CC^T\P_{\setminus \ell}$ lie in the column span of $\AA$. Therefore,
\begin{equation}
\label{eqn:cross1}
    \trace{\Y \P_\ell \CC \CC^T \P_{\setminus \ell} \Y} = \trace{\Y \LL \P_\ell \CC \CC^T \P_{\setminus \ell} \Y}
\end{equation}
Then, by Cauchy-Schwarz, 
\begin{equation}
\label{eqn:cross2}
\begin{aligned}
    \trace{\Y \LL \P_\ell \CC \CC^T \P_{\setminus \ell} \Y} & \leq 
    \sqrt{\trace{\Y \LL \MM  \Y} \trace{\P_{\setminus \ell} \CC \CC^T \P_{\ell} \MM^{+} \P_\ell \CC \CC^T \P_{\setminus \ell} } } \\
    & = \sqrt{\trace{\Y \MM \Y}\trace{\P_{\setminus \ell}\CC\CC^T \U_\ell\Sigma^{-2}\U_\ell^T\CC\CC^T\P_{\setminus \ell}}} \\
    & = \sqrt{\trace{\Y \MM \Y} } \cdot \sqrt{\| \P_{\setminus \ell} \CC \CC^T \mathbf{U}_\ell \mathbf{\Sigma}^{-1}_\ell\|^2_F }
\end{aligned}
\end{equation}
Note, the first term is $\| \AA - \X\AA \|_F$. Therefore, we focus on the second term :
\begin{equation}
\label{eqn:cross3}
    \| \P_{\setminus \ell} \CC \CC^T \mathbf{U}_\ell \mathbf{\Sigma}^{-1}_\ell\|^2_F = \sum_{i\in[\ell]} \| \P_{\setminus \ell} \CC \CC^T \mathbf{U}_{i,*} \|^2_2\sigma^{-2}_i
\end{equation}
In order to bound the sum above, we bound each summand individually. Let $p_i$ be a unit vector in the direction of $\CC \CC^T \mathbf{U}_{i,*}$'s projection on $\P_{\setminus \ell}$. Then,
\begin{equation}
    \| \P_{\setminus \ell} \CC \CC^T \mathbf{U}_{i,*} \|^2_2 = (p_i^T \CC \CC^T \mathbf{U}_{i,*})^2 
\end{equation}
Let $\ell = \sigma^{-1}_i u_i +  \frac{\sqrt{k}}{\| \AA \|^2_F} p_i$. By Theorem \ref{thm:pcp1} we know that
\begin{equation}
    \ell^T \CC \CC^T\ell - \frac{\epsilon \| A\|^2_F}{k}\ell^T \ell \leq \ell^T \AA\AA^T\ell
\end{equation}
Substituting $\ell$ in the equation above,
\begin{equation}\label{eqn:cross4}
    \begin{split}
        \frac{\U_{i,*}\CC\CC^T\U_{i,*}}{\sigma^2_i} + \frac{kp^T_i\CC \CC^T p_i}{\| \AA \|^2_F} + \frac{2\sqrt{k}}{\|\AA\|_F}p_i^T\CC\CC^T\U_{i,*}
        & \leq \frac{\U_{i,*}\MM\U_{i,*}}{\sigma^2_i} + \frac{k}{\|\AA\|^2_F} + \frac{\epsilon\| \AA \|^2_F}{k}\ell^T\ell \\
        & = 1 + \frac{k}{\| \AA \|^2_F}p_i^T \AA \AA^T p_i + \frac{\epsilon\|\AA\|^2_F}{k}\ell^T\ell
    \end{split}
\end{equation}
Combining the above equation with (\ref{eqn:head2}) $\U_{i,*}^T\CC\CC^T\U_{i,*} \geq (1-\epsilon)\U_{i,*}^T\MM\U_{i,*} \geq (1-\epsilon)\sigma^2_i$. Further, $p_i^T \CC\CC^T p_i \geq p_i^T \MM p_i -\frac{\epsilon\| \AA \|^2_F}{k}$. Plugging this back into (\ref{eqn:cross4}), 
\begin{equation}
    \begin{split}
        & (1-\epsilon)\left(\frac{\U_{i,*}\MM\U_{i,*}}{\sigma^2_i} + \frac{kp_i^T \MM p_i}{\|\AA \|^2_F}\right)  + \frac{2\sqrt{k}p_i\CC\CC^Tu_i}{\sigma_i\| \AA \|_F}\\
        & \leq 1 + \frac{k}{\| \AA \|^2_F}p_i^T\MM p_i + \frac{\epsilon\| \AA \|^2_F}{k}\ell^T\ell + 4\epsilon
    \end{split}
\end{equation}
Recall $p_i$ lies in the column space of $\U_{\setminus \ell}$, and thus $p_i\MM p_i \leq \frac{\| A\|^2_F}{k}$ and thus we get 
\begin{equation}
\frac{2\sqrt{k}}{\sigma_i \| \AA \|^2_F}p_i^T \CC \CC^T u_i \leq 8\epsilon + \frac{\epsilon\| \AA \|^2_F}{k}\ell^T\ell \leq 12\epsilon
\end{equation}
Assuming again that $\epsilon < 1/2$ and observing that $\| \ell\|^2_2 \leq \frac{4k}{\epsilon\| \AA \|^2_F}$ we get 
\begin{equation}
    \begin{split}
        & \frac{2\sqrt{k}p_i\CC\CC^Tu_i}{\sigma_i\| \AA \|_F} \leq 12 \epsilon \\
        & (p_i\CC \CC^Tu_i^T)^2 \leq 144\epsilon^2 \frac{\sigma^2_i\| \AA \|^2_F}{k}
    \end{split}
\end{equation}
Plugging this back into $(\ref{eqn:cross4})$, we get that 
\begin{equation}
    \| \P_{\setminus \ell} \CC \CC^T \mathbf{U}_\ell \mathbf{\Sigma}^{-1}_\ell\|^2_F \leq 288\epsilon^2 \| \AA \|^2_F
\end{equation}
Since we have now bounded the second term, we plug it back into (\ref{eqn:cross2}),
\begin{equation}
\label{eqn:cross_final}
    \begin{split}
    \trace{\Y \LL \P_\ell \CC \CC^T \P_{\setminus \ell} \Y} & \leq \sqrt{\trace{\Y \MM \Y}}\sqrt{288\|\AA\|^2_F} 
    \leq 17\epsilon \trace{\Y\AA\AA^T\Y}
    \end{split}
\end{equation}
Combining $(\ref{eqn:split2})$ $(\ref{eqn:head_main})$, $(\ref{eqn:tail_final})$ and $(\ref{eqn:cross_final})$, we get 
\begin{equation}
\begin{split}
   \trace{\Y  \CC \CC^T  \Y} & =
   \trace{\Y \AA_{\ell} \AA_{\ell}^T\Y} +\trace{\Y \AA_{\setminus \ell} \AA_{\setminus \ell}^T\Y} \pm 40\epsilon \trace{\Y \AA \AA^T\Y} \pm 10\epsilon\| \AA \|^2_F
\end{split}
\end{equation}
Since $\trace{\Y \AA \AA^T\Y} \leq \| \AA \|^2_F$, rescaling $\epsilon$ by a constant finishes the proof.  
\end{proof}

We note that the critical ingredient in the proofs was estimating the column norms in sublinear time. We observe that we can also estimate the row norms in sublinear time and immediately obtain a Row Projection-Cost Preservation Theorem.

\begin{corollary}(Row Projection-Cost Preservation.)
\label{thm:pcp_main_row}
Let $\AA$ be a $m \times n$ matrix such that $\AA$ satisfies approximate triangle inequality. For $i \in [n]$, let $\widetilde{X}_i$ be an $O\left( \frac{n}{b} \right)$-approximate estimate for the $i^{th}$ row of $\AA$ such that it satisfies the guarantee of Lemma \ref{lem:row_norm_estimation}. Then, let $p = \{p_1, p_2 \ldots p_n\}$ be a probability distribution over the rows of $\AA$ such that $p_i = \frac{\widetilde{X}_i}{\sum_{i'}\widetilde{X}_{i'}}$. Let $t = O\left(\frac{ n k^2}{b\epsilon^2}\log(\frac{n}{\delta})\right)$. Then, construct $\CC$ using $t$ rows of $\AA$ and set each one to $\frac{A_{i,*}}{\sqrt{t p_i}}$ with probability $p_i$. With probability at least $1-\delta$, for any rank-$k$ orthogonal projection $\X$
\[
\| \CC - \CC\X \|^2_F = \| \AA - \AA\X \|^2_F \pm \epsilon\| \AA \|^2_F
\]
\end{corollary}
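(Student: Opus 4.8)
The plan is to deduce this corollary from Theorem~\ref{thm:pcp_main_col} by applying that theorem to the transposed matrix $\AA^T \in \mathbb{R}^{n \times m}$. The first point to verify is that $\AA^T$ itself satisfies approximate triangle inequality: this is built into the definition, since \eqref{eqn:approx_tri} and \eqref{eqn:approx_tri_2} are explicitly required to hold for $\AA^T$ as well. Moreover, approximate triangle inequality enters the proof of Theorem~\ref{thm:pcp_main_col} only through the column-norm estimation guarantee of Corollary~\ref{cor:column_norm_estimation}, so the transpose-symmetry of this notion is precisely what is needed to transfer the whole argument to the row setting.

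Second, I would check that the sampling data line up under transposition. A column of $\AA^T$ is a row of $\AA$, and each such column has $n$ entries, so Corollary~\ref{cor:column_norm_estimation} instantiated on $\AA^T$ is exactly Lemma~\ref{lem:row_norm_estimation} instantiated on $\AA$: the estimates $\widetilde{X}_i$ of the statement are precisely $O(n/b)$-approximate column-norm estimates for $\AA^T$. The probability distribution $p$ over rows of $\AA$ is then the column-sampling distribution for $\AA^T$, and rescaling the $i$-th sampled column of $\AA^T$ by $1/\sqrt{t\,p_i}$ produces exactly $\CC^T$, where $\CC$ is the rescaled row sample in the statement. Finally, the dimension that controls the sample complexity in Theorem~\ref{thm:pcp_main_col} is the number of rows of the matrix it is applied to, which for $\AA^T$ is $n$; this is why we take $t = O\big(\frac{n k^2}{b\epsilon^2}\log\frac{n}{\delta}\big)$, matching the statement.

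Third, I would translate the conclusion of Theorem~\ref{thm:pcp_main_col} on $\AA^T$ back to a statement about $\AA$. That theorem gives, with probability at least $1-\delta$, for every rank-$k$ orthogonal projection $\X$,
\[
\| \CC^T - \X\CC^T \|_F^2 = \| \AA^T - \X\AA^T \|_F^2 \pm \epsilon \| \AA^T \|_F^2 .
\]
Now the Frobenius norm is invariant under transposition, and for an orthogonal projection $\X = \X^T$, so $\CC^T - \X\CC^T = (\CC - \CC\X)^T$ and hence $\| \CC^T - \X\CC^T\|_F = \| \CC - \CC\X\|_F$; similarly $\| \AA^T - \X\AA^T\|_F = \| \AA - \AA\X\|_F$ and $\| \AA^T\|_F = \|\AA\|_F$. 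Substituting these identities turns the displayed equality into $\| \CC - \CC\X \|_F^2 = \| \AA - \AA\X \|_F^2 \pm \epsilon \| \AA \|_F^2$, which is the claim.

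There is essentially no new obstacle here: the content is bookkeeping, and the only steps requiring attention are (i) confirming that ``approximate triangle inequality'' is genuinely transpose-symmetric, so that Theorem~\ref{thm:pcp_main_col} legitimately applies to $\AA^T$, and (ii) keeping straight which of $m$ and $n$ governs the sample size $t$ and the quality of the norm estimates once the matrix is transposed. Everything else is inherited verbatim from the proof of Theorem~\ref{thm:pcp_main_col}.
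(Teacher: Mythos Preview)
Your proposal is correct and follows exactly the approach the paper takes: the paper states the corollary without proof, noting only that since row norms can be estimated in sublinear time just as column norms can, the Row Projection-Cost Preservation Theorem follows immediately from the column version by symmetry. Your transposition argument makes this symmetry explicit and checks the bookkeeping (transpose-invariance of approximate triangle inequality, identification of row-norm estimates with column-norm estimates of $\AA^T$, the role of $n$ in the sample size, and the Frobenius-norm identities), which is precisely what the paper leaves implicit.
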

%O\left(\frac{ n k^2}{b\epsilon^2}\log(\frac{n}{\delta})\right)

Next, we describe how to apply projection-cost preserving sketching for low-rank approximation. Let $\CC$ be a \textit{column pcp} for $\AA$. Then, an approximate solution for the best rank-$k$ approximation to $\CC$ is an approximate solution for the best rank-$k$ approximation to $\AA$. Formally, 

\begin{lemma}
\label{lem:pcp_useful}
Let $\CC$ be a \textit{column pcp} for $\AA$ satisfying the guarantee of Theorem \ref{thm:pcp_main_col}. Let $\P^*_{\CC}$ be the projection matrix that minimizes $\| \CC - \X\CC \|^2_F$ and $\P^*_{\AA}$ be the projection matrix that minimizes $\|\AA - \X\AA \|^2_F$. Then, for any projection matrix $\P$ such that $\|\CC - \P \CC \|^2_F \leq  \| \CC - \P^*_{\CC} \CC\|^2_F + \epsilon \|\CC\|^2_F$, with probability at least $98/100$,
\[
\|\AA - \P \AA \|^2_F \leq  \| \AA - \P^*_{\AA} \AA\|^2_F + \epsilon \| \AA \|^2_F
\]
A similar guarantee holds if $\CC$ is a \textit{row pcp} of $\AA$. 
\end{lemma}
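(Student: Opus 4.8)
\emph{Proof proposal.} The plan is a short ``sandwiching'' argument that invokes the additive projection-cost preservation guarantee of Theorem~\ref{thm:pcp_main_col} twice — once on the near-optimal projection $\P$ and once on the true optimum $\P^*_{\AA}$ — together with the fact that $\P^*_{\CC}$ is a global minimizer over rank-$k$ projections. (Throughout, $\P$, $\P^*_{\CC}$, $\P^*_{\AA}$ are understood to be rank-$k$ orthogonal projections, so Theorem~\ref{thm:pcp_main_col} applies to each.) First I would apply Theorem~\ref{thm:pcp_main_col} to $\P$ to get
\[
\|\AA - \P\AA\|_F^2 \leq \|\CC - \P\CC\|_F^2 + \epsilon\|\AA\|_F^2 ,
\]
and apply it to $\P^*_{\AA}$ to get
\[
\|\CC - \P^*_{\AA}\CC\|_F^2 \leq \|\AA - \P^*_{\AA}\AA\|_F^2 + \epsilon\|\AA\|_F^2 .
\]

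Next I would chain these with the hypothesis on $\P$ and the optimality of $\P^*_{\CC}$. Since $\P^*_{\CC}$ minimizes $\|\CC - \X\CC\|_F^2$ over all rank-$k$ projections and $\P^*_{\AA}$ is such a projection, $\|\CC - \P^*_{\CC}\CC\|_F^2 \leq \|\CC - \P^*_{\AA}\CC\|_F^2$. Stringing together the first display, the hypothesis $\|\CC - \P\CC\|_F^2 \leq \|\CC - \P^*_{\CC}\CC\|_F^2 + \epsilon\|\CC\|_F^2$, this optimality fact, and the second display gives
\[
\|\AA - \P\AA\|_F^2 \leq \|\AA - \P^*_{\AA}\AA\|_F^2 + 2\epsilon\|\AA\|_F^2 + \epsilon\|\CC\|_F^2 .
\]
The only remaining task is to replace $\|\CC\|_F^2$ by a constant multiple of $\|\AA\|_F^2$. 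Since $\CC$ is built from $t$ columns, the $j$-th sampled column being $\AA_{*,j}/\sqrt{t q_j}$ with probability $q_j$, each sampled column has expected squared norm $\sum_{j} q_j \|\AA_{*,j}\|_2^2/(t q_j) = \|\AA\|_F^2/t$; summing over the $t$ columns, $\expec{}{\|\CC\|_F^2} = \|\AA\|_F^2$, so Markov's inequality yields $\|\CC\|_F^2 \leq 100\|\AA\|_F^2$ with probability at least $99/100$.

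Finally, on the intersection of this event with the (arbitrarily small, say $\delta \leq 1/100$) failure event of Theorem~\ref{thm:pcp_main_col}, which by a union bound holds with probability at least $98/100$, we obtain $\|\AA - \P\AA\|_F^2 \leq \|\AA - \P^*_{\AA}\AA\|_F^2 + O(\epsilon)\|\AA\|_F^2$; running Theorem~\ref{thm:pcp_main_col} with $\epsilon$ rescaled by the appropriate absolute constant (and correspondingly interpreting the slack in the hypothesis) then yields the stated bound with slack exactly $\epsilon\|\AA\|_F^2$. The row-pcp case is identical, using Corollary~\ref{thm:pcp_main_row} in place of Theorem~\ref{thm:pcp_main_col} and $\|\AA - \AA\X\|_F$ in place of $\|\AA - \X\AA\|_F$. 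There is no serious obstacle here; the one point that requires care is that the additive slack in the hypothesis is measured relative to $\|\CC\|_F^2$ rather than $\|\AA\|_F^2$, which is precisely why the expectation plus Markov bound on $\|\CC\|_F^2$ is needed, and that $\P^*_{\CC}$ must be compared against $\P^*_{\AA}$ via its global optimality over rank-$k$ projections.
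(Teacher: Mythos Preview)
Your proposal is correct and follows essentially the same approach as the paper's proof: both invoke the additive PCP guarantee of Theorem~\ref{thm:pcp_main_col} on $\P$ and on $\P^*_{\AA}$, use the optimality of $\P^*_{\CC}$ to chain the inequalities, bound $\|\CC\|_F^2$ by $O(\|\AA\|_F^2)$ via the unbiasedness of the sketch plus Markov, union-bound the two failure events, and then rescale $\epsilon$. Your version is in fact slightly more explicit than the paper's (which writes ``$\expec{}{\CC}=\AA$'' where it means $\expec{}{\|\CC\|_F^2}=\|\AA\|_F^2$), but the structure is identical.
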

\begin{proof}
By the optimality of $\P^*_C$, we know that $\|\CC - \P \CC \|^2_F \leq  \| \CC - \P^*_{\CC} \CC\|^2_F + \epsilon\|\CC \|^2_F \leq  \| \CC - \P^*_{\AA} \CC\|^2_F + \epsilon \| \CC\|^2_F$. Since $\CC$ is a \textit{column pcp} of $\AA$, $\| \CC - \P^*_{\AA} \CC\|^2_F \leq \| \AA - \P^*_{\AA} \AA\|^2_F + \epsilon \|\AA\|^2_F$, therefore, with probability at least $99/100$, 
\begin{equation}
\|\CC - \P \CC \|^2_F \leq   \| \AA - \P^*_{\AA} \AA\|^2_F + \epsilon \|\AA\|^2_F +\epsilon \|\CC\|^2_F \leq \| \AA - \P^*_{\AA} \AA\|^2_F + O(\epsilon) \|\AA\|^2_F 
\end{equation}
where the last inequality follows from $\expec{}{\CC} = \AA$ and Markov's bound. 
Similarly, $\| \CC - \P \CC  \|^2_F \geq \| \AA - \P \AA  \|^2_F - \epsilon \|\AA \|^2_F$, therefore, with probability at least $99/100$,
\begin{equation}
\|\CC - \P \CC \|^2_F \geq \|\AA - \P \AA \|^2_F - O(\epsilon) \| \AA \|^2_F 
\end{equation}
Union bounding over the two events and combining the two equations, with probability at least $98/100$ we get 
\begin{equation}
\|\AA - \P \AA \|^2_F   \leq \| \AA - \P^*_{\AA} \AA\|^2_F + 4\epsilon \|\AA\|^2_F 
\end{equation}
Rescaling $\epsilon$ completes the proof. We note that a similar lemma holds if $\CC$  is a \textit{row pcp} of $\AA$.  
\end{proof}

\section{A Sublinear Time Algorithm.}

\begin{Frame}[\textbf{Algorithm \ref{alg:first_sublinear} : First Sublinear Time Algorithm.}]
\label{alg:first_sublinear}
\textbf{Input:} A Distance Matrix $\AA_{m \times n}$, integer $k$ and $\epsilon >0$. 
\begin{enumerate}
	\item  Set $b_1 =\frac{\epsilon n^{0.34}}{\log(n)} $ and $b_2 = \frac{\epsilon m^{0.34}}{\log(m)} $. 
  Set $s_1 =\Theta\left( \frac{mk^2\log(m)}{b_1\epsilon^2} \right)$ and $s_2 = \Theta\left( \frac{nk^2\log(n)}{b_2\epsilon^2} \right)$.
	\item Let $\widetilde{X}_j$ be the estimate for $\|\AA_{*,j}\|^2_2$ returned by \texttt{ColumnNormEstimation}$(\AA, b_1)$. Recall, $\widetilde{X}_j$ is an $O\left(\frac{m}{b_1}\right)$-approximation to $\AA_{*,j}$. 
     \item Let $q = \{ q_1, q_2 \ldots q_n \} $ denote a distribution over columns of $\AA$ such that $q_i = \frac{\widetilde{X}_{j}}{\sum_j \widetilde{X}_{j}} \geq \left(\frac{b_1}{m} \right)\frac{\| \AA_{*,j} \|^2_2}{\| \AA \|^2_F}$. Construct a \textit{column pcp} for $\AA$ by sampling $s_1$ columns of $\AA$ such that each column is set to $\frac{\AA_{*,j}}{\sqrt{s_1 q_j}}$ with probability $q_j$. Let $\AA\S$ be the resulting $m \times s_1$ matrix that follows guarantees of Theorem \ref{thm:pcp_main_col}.
     \item To account for the rescaling, consider $O(\epsilon^{-1}\log(n))$ weight classes for scaling parameters of the columns of $\AA\S$. Let $\AA\S_{|\mathcal{W}_g}$ be the columns of $\AA\S$ restricted to the weight class $\mathcal{W}_g$ (defined below.)
     \item Run the $\texttt{RowNormEstimation}(\AA\S_{|\mathcal{W}_g}, b_2)$ estimation algorithm with parameter $b_2$ for each weight class independently and sum up the estimates for a given row. Let $\widetilde{X}_i$ be the resulting $O\left(\frac{n}{b_2}\right)$-approximate estimator for $\AA\S_{i,*}$. 
     \item Let $p = \{ p_1, p_2, \ldots p_m \}$ denote a distribution over rows of $\AA\S$ such that $p_i =  \frac{\widetilde{X}_{i}}{\sum_i \widetilde{X}_{i}} \geq \left( \frac{b_2}{n}\right)\frac{\| \AA\S_{i,*} \|^2_2}{\| \AA\S \|^2_F}$. Construct a \textit{row pcp} for $\AA\S$ by sampling $s_2$ rows of $\AA\S$ such that each row is set to $\frac{\AA\S_{i,*}}{\sqrt{s_2 p_i}}$ with probability $p_i$. Let $\T\AA\S$ be the resulting $s_2 \times s_1$ matrix that follows guarantees of Corollary \ref{thm:pcp_main_row}. 
     \item Run the input-sparsity time low-rank approximation algorithm (corresponding to Theorem \ref{thm:clarkson_woodruff}) on $\T\AA\mathbf{S}$ with rank parameter $k$ to obtain a rank-$k$ approximation to $\T\AA\S$, output in factored form: $\LL , \mathbf{D}, \mathbf{W}^T$. Note, $\LL \mathbf{D}$ is an $s_2 \times k$ matrix and $\mathbf{W}^T$ is a $k \times s_1$ matrix.
     \item Consider the regression problem $\min_{\X} \| \AA\S -\X\WW^T \|^2_F$. Sketch the problem using the leverage scores of $\WW^T$ as shown in Theorem \ref{thm:clarkson_woodruff_regression} to obtain a sampling matrix $\E$ with poly$(\frac{k}{\epsilon})$ columns. Compute $\X_{\AA\S} = \argmin_{\X}\| \AA\S\E - \X\WW^T\E \|^2_F$. Let $\X_{\AA\S}\WW^T = \P'\NN'^T$ be such that $\P'$ has orthonormal columns.
     \item Consider the regression problem $\min_{\X} \|\AA -\P'\X\|^2_F$. Sketch the problem using the  the leverage scores of $\P'$ following Theorem \ref{thm:clarkson_woodruff_regression} to obtain a sampling matrix $\E'$ with poly$(\frac{k}{\epsilon})$ rows. Compute $\X_{\AA} = \argmin_{\X} \|\E'\AA - \E'\P'\X \|^2_F$. 
\end{enumerate}
\textbf{Output:} $\MM = \P'$, $\NN^T = \X_{\AA}$
\end{Frame}
In this section, we give a sublinear time algorithm which relies on constructing \textit{column \textrm{and} row pcps}, which in turn rely on our column and row norm estimators. Intuitively, we begin with obtaining coarse estimates to column norms. Next, we sample a subset of the columns of $\AA$ with probability proportional to their column norm estimates to obtain a \textit{ column pcp} for $\AA$. We show that the rescaled matrix still has enough structure to get a coarse estimate to its row norms. Then, we compute the row norm estimates of the sampled rescaled matrix, and subsample its rows to obtain a small matrix that is a \textit{row pcp}. We run an input-sparsity time algorithm (\cite{clarkson2013low}) on the small matrix to obtain a low-rank approximation. The main theorem we prove is as follows: 
%Note, the low-rank approximation of the smaller matrix is not a low-rank approximation for the original. In fact, it does not even have the right dimensions. Therefore, we follow a filtering step \cite{fkv04} to output a rank-k matrix of the right dimensions. We describe each of these steps in detail below. 
\begin{theorem}(Sublinear Low-Rank Approximation.)
\label{thm:first_sublinear}
Let $\AA \in \mathbb{R}^{m \times n}$ be a matrix that satisfies approximate triangle inequality. Then, for any $\epsilon > 0$ and integer $k$, Algorithm \ref{alg:first_sublinear} runs in time $O\left( \left(m^{1.34} + n^{1.34}\right)\text{poly}(\frac{k}{\epsilon}) \right)$  and outputs matrices $\MM \in \mathbb{R}^{m \times k}$, $\NN \in \mathbb{R}^{n \times k}$ such that with probability at least $9/10$,
\begin{equation*}
\left\| \AA - \MM \NN^T \right\|^2_{F} \leq  \left\| \AA - \AA_k\right\|^2_{F} + \epsilon \left\| \AA \right\|^2_F 
\end{equation*}
\end{theorem}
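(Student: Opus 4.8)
The plan is to chain together the projection-cost preserving sketches (PCPs) with the two regression problems, carefully tracking additive error at each step. First I would verify that all the objects constructed in Algorithm~\ref{alg:first_sublinear} satisfy their stated guarantees. Step~2--3 produce $\AA\S$, a column PCP for $\AA$: by Corollary~\ref{cor:column_norm_estimation} the estimates $\widetilde X_j$ are $O(m/b_1)$-approximate, so the sampling distribution $q_j$ dominates $(b_1/m)\|\AA_{*,j}\|_2^2/\|\AA\|_F^2$, and then Theorem~\ref{thm:pcp_main_col} applies with $t = s_1 = \Theta(mk^2\log(m)/(b_1\epsilon^2))$ to give, for all rank-$k$ projections $\X$,
\[
\|\AA\S - \X\AA\S\|_F^2 = \|\AA - \X\AA\|_F^2 \pm \epsilon\|\AA\|_F^2.
\]
The subtlety in Steps~4--5 is that $\AA\S$ is a rescaled column-submatrix and need not satisfy approximate triangle inequality; the fix is the weight-class bucketing, so that restricted to each $\mathcal{W}_g$ the columns are all scaled by roughly the same factor and the submatrix inherits (a constant-factor relaxation of) approximate triangle inequality, letting Lemma~\ref{lem:row_norm_estimation} estimate row norms of each bucket and then sum. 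This yields $O(n/b_2)$-approximate row-norm estimates for $\AA\S$, and Corollary~\ref{thm:pcp_main_row} with $t = s_2$ gives that $\T\AA\S$ is a row PCP for $\AA\S$: for all rank-$k$ projections $\X$, $\|\T\AA\S - \T\AA\S\X\|_F^2 = \|\AA\S - \AA\S\X\|_F^2 \pm \epsilon\|\AA\S\|_F^2$.

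Next I would push an approximately optimal solution up through the recursion. Step~7 computes, via Theorem~\ref{thm:clarkson_woodruff} (input-sparsity LRA), a rank-$k$ $\LL\D\WW^T$ with $\|\T\AA\S - \LL\D\WW^T\|_F^2 \le (1+\epsilon)\|\T\AA\S - (\T\AA\S)_k\|_F^2$, hence $\WW^T$ is (the transpose of) an approximately optimal row space for $\T\AA\S$; equivalently the projection onto $\operatorname{rowspan}(\WW^T)$ is an approximately optimal rank-$k$ projection $\X$ on the right of $\T\AA\S$, up to additive $\epsilon\|\AA\S\|_F^2$ error once we convert relative to additive using $\|\T\AA\S-(\T\AA\S)_k\|_F^2 \le \|\AA\S-(\AA\S)_k\|_F^2 + \epsilon\|\AA\S\|_F^2$. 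By the row-PCP version of Lemma~\ref{lem:pcp_useful}, this same projection $\X$ is approximately optimal for $\AA\S$ on the right. Then Step~8 solves the sketched regression $\min_\X \|\AA\S\E - \X\WW^T\E\|_F^2$; by Theorem~\ref{thm:clarkson_woodruff_regression} leverage-score sampling on $\WW^T$ gives $\X_{\AA\S}$ with $\|\AA\S - \X_{\AA\S}\WW^T\|_F^2 \le (1+\epsilon)\min_\X\|\AA\S - \X\WW^T\|_F^2$, so $\X_{\AA\S}\WW^T = \P'\NN'^T$ gives a column space $\P'$ (orthonormal columns) containing an approximately optimal rank-$k$ solution for $\AA\S$ --- and crucially this is a \emph{column} space, so the column-PCP relation between $\AA\S$ and $\AA$ applies: by Lemma~\ref{lem:pcp_useful}, the projection $\P'\P'^T$ is approximately optimal on the left for $\AA$, i.e.\ $\|\AA - \P'\P'^T\AA\|_F^2 \le \|\AA - \AA_k\|_F^2 + O(\epsilon)\|\AA\|_F^2$. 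Finally Step~9 solves $\min_\X\|\E'\AA - \E'\P'\X\|_F^2$; by Theorem~\ref{thm:clarkson_woodruff_regression} applied to the $n$-column regression $\min_\X\|\AA - \P'\X\|_F^2$, the solution $\X_{\AA}$ satisfies $\|\AA - \P'\X_{\AA}\|_F^2 \le (1+\epsilon)\|\AA - \P'\P'^T\AA\|_F^2$, which combined with the previous bound and $\|\AA - \AA_k\|_F^2 \le \|\AA\|_F^2$ gives $\|\AA - \MM\NN^T\|_F^2 \le \|\AA - \AA_k\|_F^2 + O(\epsilon)\|\AA\|_F^2$ with $\MM = \P'$, $\NN^T = \X_{\AA}$; rescale $\epsilon$.

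For the running time I would tally each step: Step~2 is $O(b_1 n + m)$ for column-norm estimation (times $O(\log n)$ for median amplification); Step~3 reads $s_1$ columns, $O(m s_1)$ entries; the bucketed row-norm estimation in Step~5 costs $O(b_2 m + n)$ per the $O(\epsilon^{-1}\log m)$ buckets, which with the chosen $b_2$ is $\widetilde O((m^{1.34}+n)\mathrm{poly}(k/\epsilon))$ and symmetrically for the columns; Step~7 runs input-sparsity LRA on the $s_2\times s_1$ matrix, $\mathrm{poly}(s_1,s_2,k/\epsilon)$, and with $s_1 = \widetilde O(m^{0.66}\mathrm{poly}(k/\epsilon))$, $s_2 = \widetilde O(n^{0.66}\mathrm{poly}(k/\epsilon))$ this is within budget after checking the exponents multiply out; Steps~8--9 are sketched regressions with sketch dimension $\mathrm{poly}(k/\epsilon)$, so they touch $O((m+n)s_1 \cdot \mathrm{poly}(k/\epsilon))$ entries at most --- here one must use the leverage-score sampling so that $\AA\S\E$ and $\E'\AA$ only require reading $\mathrm{poly}(k/\epsilon)$ rescaled columns/rows. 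Plugging $b_1 = \epsilon n^{0.34}/\log n$ and $b_2 = \epsilon m^{0.34}/\log m$ balances sampling cost against the size of the sketched matrix and yields the claimed $O((m^{1.34}+n^{1.34})\mathrm{poly}(k/\epsilon))$ bound. The success probability is a union bound over the two PCP events, the median-amplified norm estimates, the two input-sparsity/regression events, and the Lemma~\ref{lem:pcp_useful} applications, each of which can be driven to $1 - o(1)$, giving overall $9/10$.

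The main obstacle I expect is Steps~4--5: establishing that the rescaled, column-subsampled matrix $\AA\S$ --- which is genuinely not a distance matrix --- still admits sublinear-time row-norm estimation. The key idea is that within a geometric weight class $\mathcal{W}_g$ all surviving columns carry scaling factors within a factor of $2$, so on that submatrix the quantities $\max_i|\AA\S_{i,q}-\AA\S_{i,r}|$ and the entries scale uniformly and approximate triangle inequality (with a slightly larger constant, absorbed into $\epsilon$) is preserved; Lemma~\ref{lem:row_norm_estimation} then gives an $O(n/b_2)$-estimate of $\|\AA\S_{i,*}|_{\mathcal{W}_g}\|_2^2$, and summing over the $O(\epsilon^{-1}\log n)$ classes gives the full row-norm estimate with only an $O(\epsilon^{-1}\log n)$ blow-up in sample count. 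One has to be careful that the summed estimator is still a genuine $O(n/b_2)$-approximation (the per-class errors are one-sided in the relevant direction, or can be made so by taking medians), and that the total number of entries read, $O(b_2 m \cdot \epsilon^{-1}\log n) = \widetilde O(m^{1.34}\mathrm{poly}(1/\epsilon))$, stays sublinear.
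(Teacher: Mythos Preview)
Your proposal is correct and follows essentially the same route as the paper: column PCP via norm estimates, weight-class bucketing to restore approximate triangle inequality on the rescaled $\AA\S$, row PCP, input-sparsity LRA on $\T\AA\S$, then two leverage-score-sketched regressions to lift the solution first to a column space $\P'$ for $\AA\S$ and then to $\AA$ itself. One small accounting slip to fix: Step~3 does \emph{not} read $O(ms_1)$ entries---constructing the column PCP only selects column indices and scaling factors, with entries read on demand in later steps---otherwise that $O(ms_1)\approx m^{1.66}$ term would already exceed the claimed bound.
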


%\begin{theorem}(Frieze-Kannan-Vempala Row Sampling)
%\label{thm:fkv_row_sampling}
%Let $\AA$ be an $m \times n$ matrix and $P = \{p_1, p_2 \ldots p_m \}$ be a probability distribution over the rows of $\AA$ such that  $p_i \geq c \frac{\left\| \AA_{i,*}\right\|^2_2}{\left\| \AA \right\|^2_F}$. Let $\T\AA$ be a scaled sample of $s_1$ rows of $\AA$ according to the distribution $P$. Let $V$ be the vector space spanned by the columns of $\T\AA$. Then, for any $\epsilon > 0$,  with probability at least $9/10$, there exists an orthonormal set of $k$ vectors, $\{ y_1, y_2 \ldots y_k\} \in V$ such that 
%\[
%\| \AA - \AA(\sum_{i} y_i y_i^T) \|^2_F \leq \left\| \AA - \AA_k \right\|^2_F + \frac{10\epsilon}{cs_1} \left\| \AA \right\|^2_F
%\]
%\end{theorem}

%Observe, by Lemma $\ref{lem:row_norm_estimation}$, we can efficiently compute a distribution $p$ such that $p_i \geq \Theta\left(\frac{b_1}{n}\right)\frac{\left\| \AA_{*,j}\right\|^2_2}{\left\| \AA \right\|^2_F}$. Thus $c = \Theta\left(\frac{b_1}{n}\right)$. Sampling $s_1 = \Theta\left(\frac{n}{b_1} \text{poly}(\frac{k}{\epsilon})\right)$ rows of $\AA$ spans a vector space that contains a good low-rank approximation to $\AA$ satisfying the guarantees of Theorem $\ref{thm:fkv_row_sampling}$. Let $\T\AA$ be the $s_1\times n$ matrix formed by the sampling algorithm. We note that since we can estimate row norms to a $O\left(\frac{n}{b_1}\right)$-factor, and thus sample $s_1$ rows of $A$ to satisfy the guarantee as Theorem $\ref{thm:fkv_row_sampling}$. We refer to this algorithm as Frieze-Kannan-Vempala Row Sampling. 

\paragraph{Column Sampling.}
We observe that constructing a column and row projection-cost preserving sketches require sampling columns  proportional to their relative norms and subsequently subsample columns proportional to the relative row norms of the sampled, rescaled matrix. In the previous section, we obtain coarse approximations to these norms and thus use our estimates to serve as a proxy for the real distribution. For $j \in [n]$, let $\widetilde{X}_j$ be our estimate for the column $\AA_{*,j}$. We define a probability distribution over the columns of $\AA$ as $q_j = \frac{\widetilde{X}_j}{\sum_{i^{'}\in[n]}\widetilde{X}_{j^{'}}}$. Given that we can estimate column norms up to an $O\left(\frac{m}{b_1}\right)$-factor, $q_j \geq \left(\frac{b_1}{m}\right)\frac{\left\| \AA_{*,j}\right\|^2_2}{\left\| \AA \right\|^2_F}$, where $b_1$ is a parameter to be set later. Therefore, we oversample columns of $\AA$ by a $\Theta\left(\frac{m}{b_1}\right)$-factor to construct a \textit{column pcp} for $\AA$. Let $\AA\S$ be a scaled sample of $s_1=\Theta\left( \frac{mk^2\log(m)}{b_1\epsilon^2} \right)$ columns of $\AA$ such that each column is set to $\frac{\AA_{*,j}}{\sqrt{s_1q_j}}$ with probability $q_j$. Then, by Theorem \ref{thm:pcp_main_col} for any $\epsilon > 0$, with probability at least $99/100$, for all rank-$k$ projection matrices $\X$, 
\begin{equation}
\| \AA\S - \X \AA\S \|^2_F \leq \left\| \AA - \X\AA_k \right\|^2_F + \epsilon \left\| \AA \right\|^2_F
\end{equation}
We observe that the total time taken to construct a \textit{column pcp}  is $O\left(\frac{m\log(m)}{b_1}\text{poly}(\frac{k}{\epsilon}) + b_1n + m \right)$ and the total number of entries of $\AA$ queried are $O(b_1n + m)$.

\paragraph{Handling Rescaled Columns.}
We note that during the construction of the \textit{column pcp}, the $j^{th}$ column, if sampled, is rescaled by $\frac{1}{\sqrt{s_1 q_j}}$. Therefore, the resulting matrix, $\AA\S$ may no longer be a distance matrix. To address this issue, we partition the columns of  $\AA\S$ into weight classes such that the $g^{th}$ weight class contains column index $j$ if the corresponding scaling factor $\frac{1}{\sqrt{ q_j}}$ lies in the interval  $\left[(1+\epsilon)^g, (1+\epsilon)^{g+1}\right)$. Note, we can ignore the $(\frac{1}{\sqrt{s_1}})$-factor since every entry is rescaled by the same constant. Formally, 
\begin{equation}
\mathcal{W}_g = \left\{ i \in [s_1] \textrm{ $\Big{|}$ } \frac{1}{\sqrt{q_j}} \in \left[(1+\epsilon)^g, (1+\epsilon)^{g+1}\right) \right\}
\end{equation}
Next, with high probability, for all $j\in[n]$, if column $j$ is sampled, $\frac{1}{q_j} \leq n^c$ for a large constant $c$. If instead, $q_j \leq \frac{1}{n^{c'}}$, the probability that the $j^{th}$ is sampled would be at most $1/n^{c'}$, for some $c' > c$. Union bounding over such events for $n$ columns, the number of weight classes is at most $\log_{1+\epsilon}(n^c) = $     $O\left(\epsilon^{-1}\log(n)\right)$.
Let $\AA\S_{|\mathcal{W}_g}$ denote the columns of $\AA\S$ restricted to the set of indices in $\mathcal{W}_g$. Observe that all entries in $\AA\S_{|\mathcal{W}_g}$ are scaled to within a $(1+\epsilon)$-factor of each other and therefore, satisfy approximate triangle inequality (equation \ref{eqn:approx_tri}).
Therefore, row norms of $\AA\S_{|\mathcal{W}_g}$ can be computed using Algorithm \ref{alg:row_norm_estimation} and the estimator is an $O\left(\frac{n}{b_2}\right)$-approximation (for some parameter $b_2$), since Lemma \ref{lem:row_norm_estimation} blows up by a factor of at most $1+\epsilon$. Summing over the estimates from each partition above, with probability at least $99/100$, we obtain an $O\left(\frac{n}{b_2}\right)$-approximate estimate to $\|\AA\S_{i,*}\|^2_2$, simultaneously for all $i \in [m]$. However, we note that each iteration of Algorithm \ref{alg:row_norm_estimation} reads $b_2m + n$ entries of $\AA$ and there are at most $O(\epsilon^{-1}\log(n))$ iterations. Therefore, the time taken to compute the estimates to the row norms is $O\left( (b_2m + n)\epsilon^{-1}\log(n) \right)$.

\paragraph{Row Sampling.}
Next, we construct a \textit{row pcp} for $\AA\S$. For $i \in [m]$, let $\widetilde{X}_i$ be an $O\left(\frac{n}{b_2}\right)$-approximate estimate for $\|\AA\S_{i,*}\|^2_2$. Let $p = \{p_1, p_2, \ldots, p_m \}$ be a distribution over the rows of $\AA\S$ such that $p_i= \frac{\widetilde{X}_i}{\sum_i \widetilde{X}_i} \geq \left(\frac{b_2}{n} \frac{\|\AA\S_{i,*}\|^2_2}{\|\AA\S \|^2_F} \right)$. Therefore, we oversample rows by a $\Theta\left(\frac{n}{b_2}\right)$ factor to obtain a \textit{row pcp} for $\AA\S$. Let $\T\AA\S$ be a scaled sample of $s_2 = \Theta \left( \frac{nk^2\log(n)}{b_2\epsilon^2} \right)$ rows of $\AA\S$ such that each row is set to $\frac{\AA\S_{i,*}}{\sqrt{s_2 p_i}}$ with probability $p_i$. By Corollary \ref{thm:pcp_main_row}, with probability at least $99/100$, for all rank-$k$ projection matrices $\X$, 
\begin{equation}
\|\T\AA\S - \T\AA\S\X \|^2_F \leq \| \AA\S - \AA\S\X \|^2_F + \epsilon\| \AA\S\|^2_F 
\end{equation}
We observe that the total time taken to construct a \textit{row pcp} is $O\left( \frac{n \log(n)}{b_2}\textrm{poly}(\frac{k}{\epsilon}) + (b_2m + n )\frac{\log(n)}{\epsilon} \right)$ and the total number of entries of $\AA$ queried is $O\left((b_2m + n )\frac{\log(n)}{\epsilon} \right)$.

\paragraph{Input-sparsity Time Low-Rank Approximation.}
Next, we compute a low-rank approximation for the smaller matrix, $\T\AA\S$, in input-sparsity time. To this end we use the following theorem from \cite{clarkson2013low}:

\begin{theorem}(Clarkson-Woodruff LRA.)
\label{thm:clarkson_woodruff}
For $\AA \in \mathbb{R}^{m \times n}$, there is an algorithm that with failure probability at most $1/10$ finds $L \in R^{m \times k}$, $W \in R^{n \times k}$ and a diagonal matrix $D\in R^{k \times k}$, such that \[\left\| \AA - \LL \mathbf{D} \mathbf{W}^T \right\|^2_F \leq (1 + \epsilon) \left\| \AA - \AA_k \right\|^2_F\] 
and runs in time $O\left(\texttt{nnz}(\AA) + (n+m)\text{poly}(\frac{k}{\epsilon})\right)$, where $\texttt{nnz}(\AA)$ is the number of non-zero entries in $\AA$. 
\end{theorem}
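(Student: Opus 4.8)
The plan is to prove this by the standard \emph{input-sparsity-time} route: reduce low-rank approximation of $\AA$ to low-rank approximation of matrices that are small in at least one dimension, using \emph{sparse subspace embeddings} (CountSketch matrices), each of which can be applied to $\AA$ in $O(\texttt{nnz}(\AA))$ time since every column of such a matrix has a single nonzero. I would invoke three standard properties of a CountSketch $\S$: (i) with $\textrm{poly}(d/\epsilon)$ rows it is, with constant probability, an $\epsilon$-subspace embedding for any fixed $d$-dimensional subspace $V$, i.e.\ $\|\S x\|_2 = (1\pm\epsilon)\|x\|_2$ for all $x\in V$; (ii) with $O(1/\epsilon^2)$ rows it satisfies approximate matrix multiplication, $\expec{}{\|\mathbf{F}^T\S^T\S\mathbf{G}-\mathbf{F}^T\mathbf{G}\|_F^2}\le\epsilon^2\|\mathbf{F}\|_F^2\|\mathbf{G}\|_F^2$ for fixed $\mathbf{F},\mathbf{G}$; and (iii) combining (i) and (ii), the \emph{affine / generalized-regression} guarantee: for fixed $\mathbf{M},\mathbf{B}$ with $\mathbf{M}$ having $d$ columns, a CountSketch with $\textrm{poly}(d/\epsilon)$ rows makes every minimizer of $\min_{\mathbf{Z}}\|\S\mathbf{M}\mathbf{Z}-\S\mathbf{B}\|_F$ a $(1+\epsilon)$-approximate minimizer of $\min_{\mathbf{Z}}\|\mathbf{M}\mathbf{Z}-\mathbf{B}\|_F$, with the analogue holding under a rank-$k$ constraint on $\mathbf{Z}$.

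First I would shrink the column count. Let $\mathbf{R}$ be an $n\times t_1$ sparse embedding with $t_1=\textrm{poly}(k/\epsilon)$, and form $\AA\mathbf{R}$ in $O(\texttt{nnz}(\AA))$ time. The key structural step is that the column span of $\AA\mathbf{R}$ contains a $(1+\epsilon)$-approximate rank-$k$ approximation of $\AA$: there is a rank-$k$ matrix of the form $\AA\mathbf{R}\mathbf{Z}$ with $\|\AA\mathbf{R}\mathbf{Z}-\AA\|_F^2\le(1+\epsilon)\|\AA-\AA_k\|_F^2$. I would prove this by taking $\mathbf{Z}$ so that $\AA\mathbf{R}\mathbf{Z}$ is the image through $\mathbf{R}$ of the projection of $\AA$ onto the top-$k$ right singular directions, and then using that $\mathbf{R}$ is a subspace embedding for the $k$-dimensional rowspace of $\AA_k$ and obeys approximate matrix multiplication against the residual $\AA-\AA_k$, so that the normal equations defining this regression are perturbed by at most an $\epsilon$-fraction of $\|\AA-\AA_k\|_F^2$.

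Next I would shrink the row count and solve. The task is now $\min_{\textrm{rank-}k\ \mathbf{Z}}\|\AA\mathbf{R}\mathbf{Z}-\AA\|_F$, still with $m$ rows, so I apply a sparse embedding $\S$ on the left, with $t_2=\textrm{poly}(k/\epsilon)$ rows, to both $\AA\mathbf{R}$ and $\AA$, forming $\S\AA\mathbf{R}$ ($t_2\times t_1$) and $\S\AA$ ($t_2\times n$) in $O(\texttt{nnz}(\AA))$ time. By the affine-embedding property, a minimizer $\widehat{\mathbf{Z}}$ of $\min_{\textrm{rank-}k\ \mathbf{Z}}\|\S\AA\mathbf{R}\mathbf{Z}-\S\AA\|_F$ satisfies $\|\AA\mathbf{R}\widehat{\mathbf{Z}}-\AA\|_F^2\le(1+O(\epsilon))\|\AA-\AA_k\|_F^2$. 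To compute $\widehat{\mathbf{Z}}$ itself in $\textrm{poly}(k/\epsilon)$ time I would also sketch $\S\AA$ on the right by a sparse embedding $\mathbf{R}'$, reducing to a purely $\textrm{poly}(k/\epsilon)$-size rank-constrained least-squares problem $\min_{\textrm{rank-}k\ \mathbf{Z}}\|\S\AA\mathbf{R}\,\mathbf{Z}\,\mathbf{Q}-\S\AA\mathbf{R}'\|_F$ with $\mathbf{Q}$ an orthonormal basis read off from $\S\AA\mathbf{R}'$; this has a closed form in terms of the SVDs of the small matrices (generalized rank-constrained matrix approximation). Given $\widehat{\mathbf{Z}}$, the matrix $\AA\mathbf{R}\widehat{\mathbf{Z}}$ is rank-$k$ and $m\times n$, and computing its thin SVD in $O((m+n)k^2)$ time produces the required factored form $\LL\D\WW^T$ with $\LL\in\mathbb{R}^{m\times k}$, $\WW\in\mathbb{R}^{n\times k}$, and $\D$ diagonal $k\times k$.

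For correctness I would chain the subspace-embedding and affine-embedding guarantees — each failing with small constant probability, kept to total at most $1/10$ by choosing the constants hidden in $t_1,t_2$ large enough — to conclude $\|\AA-\LL\D\WW^T\|_F^2\le(1+O(\epsilon))\|\AA-\AA_k\|_F^2$, and then rescale $\epsilon$ by a constant. Each sketch application costs $O(\texttt{nnz}(\AA))$; forming $\AA\mathbf{R}\widehat{\mathbf{Z}}$ and its SVD costs $O((m+n)\textrm{poly}(k/\epsilon))$; and every intermediate solve is on a $\textrm{poly}(k/\epsilon)$-size matrix, for a total of $O(\texttt{nnz}(\AA)+(m+n)\textrm{poly}(k/\epsilon))$. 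The hard part will be the Step-1 structural lemma: showing that a $\textrm{poly}(k/\epsilon)$-dimensional sketch of the column space already contains a near-optimal rank-$k$ subspace requires simultaneously controlling the embedding distortion on the rank-$k$ ``head'' of $\AA$ and, through approximate matrix multiplication, the cross terms between that head and the residual tail; the two-sided composition of sketches and the closed-form rank-constrained regression solve are the remaining technical points.
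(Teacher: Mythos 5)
The paper does not actually prove this statement --- it imports it as a black box from \cite{clarkson2013low} --- and your proposal is a faithful reconstruction of the standard proof given in that reference: a right CountSketch whose column span provably contains a $(1+\epsilon)$-approximate rank-$k$ approximation (via the subspace-embedding plus approximate-matrix-multiplication perturbation of the normal equations), a left sketch invoked as an affine embedding, a closed-form solve of the resulting $\textrm{poly}(k/\epsilon)$-size rank-constrained problem, and recovery of the factorization. The one point to keep explicit is that $\AA\mathbf{R}\widehat{\mathbf{Z}}$ must never be materialized as an $m\times n$ matrix --- its thin SVD has to be computed from the factored representation $(\AA\mathbf{R})\cdot\widehat{\mathbf{Z}}$ with $\widehat{\mathbf{Z}}$ itself stored in rank-$k$ factored form --- which your $O((m+n)k^2)$ accounting implicitly assumes.
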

Running the input-sparsity time algorithm with the above guarantee on the matrix $\T\AA\S$, we obtain a rank-$k$ matrix $\LL \mathbf{D} \mathbf{W}^T$, such that
\begin{equation}
\| \T\AA\S - \LL \mathbf{D} \mathbf{W}^T\|^2_F \leq (1+\epsilon) \|\T\AA\S - (\T\AA\S)_k \|^2_F
\end{equation}
where $(\T\AA\S)_k$ is the best rank-$k$ approximation to $\T\AA\S$ under the Frobenius norm. Since $\T\AA\S$ is a small matrix, we can afford to read all of it by querying at most $O\left(\frac{nm\log(n)\log(m)}{b_1b_2}\textrm{poly}(\frac{k}{\epsilon}) \right)$ entries of $\AA$ and the algorithm runs in time $O\left( s_1s_2 + (s_1 + s_2)\textrm{poly}(\frac{k}{\epsilon} ) \right) $.

\paragraph{Constructing a solution for $\AA$.} Note, while $\LL \mathbf{D} \mathbf{W}^T$ is an approximate rank-$k$ solution for $\T\AA\S$, it does not have the right dimensions as $\AA$. If we do not consider running time, we could construct a low-rank approximation to $\AA$ as follows: since projecting $\T\AA\S$ onto $\WW^T$ is approximately optimal, it follows from Lemma \ref{lem:pcp_useful} that with probability $98/100$,
\begin{equation}
\label{eqn:pcp_as}
\| \AA\S -\AA\S\WW\WW^T  \|^2_F =\| \AA\S - (\AA\S)_k \|^2_F \pm \epsilon \| \AA\S \|^2_F
\end{equation}
Let $(\AA\S)_k = \P\mathbf{N}^T$ be such that $\P$ has orthonormal columns. Then, $\|\AA\S - \P\P^T \AA\S \|^2_F = \| \AA\S - (\AA\S)_k \|^2_F$ and by Lemma \ref{lem:pcp_useful} it follows that with probability $98/100$, $\|\AA - \P\P^T\AA\|^2_F \leq \|\AA - \AA_k \|^2_F + \epsilon\| \AA \|_F$. However, even approximately computing a column space $\P$ for $(\AA\S)_k$ using an input-sparsity time algorithm is no longer sublinear. To get around this issue, we observe that an approximate solution for $\T\AA\S$ lies in the row space of $\WW^T$ and therefore, an approximately optimal solution for $\AA\S$ lies in the row space of $\WW^T$. We then set up the following regression problem
\begin{equation}
\min_{\X} \|\AA\S - \X \WW^T \|^2_F
\end{equation}
Note, this regression problem is still too big to be solved in sublinear time. Therefore, we sketch it by sampling columns of $\AA\S$ according to the leverage scores of $\WW^T$ to set up a smaller regression problem. Formally, we use a theorem of \cite{clarkson2013low} (Theorem 38) to approximately solve this regression problem (also see \cite{drineas2008relative} for previous work.)  
\begin{theorem}(Fast Regression.)
\label{thm:clarkson_woodruff_regression}
Given a matrix $\AA \in \mathbb{R}^{m \times n}$ and a rank-$k$ matrix $\BB \in \mathbb{R}^{m \times k}$, such that $\BB$ has orthonormal columns, the regression problem $\min_{\X} \|\AA - \BB\X\|^2_F$ can be solved up to $(1+\epsilon)$ relative error, with probability at least $2/3$ in time $O\left( (m\log(m) + n)\textrm{poly}(\frac{k}{\epsilon}) \right)$ by constructing a sketch $\E$ with poly$(\frac{k}{\epsilon})$ rows and solving $\min_{\X}\|\E\AA - \E\BB\X\|^2_F$. Note, a similar guarantee holds for solving $\min_{\X}\|\AA - \X\BB\|^2_F$.
\end{theorem}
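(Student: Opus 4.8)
The statement is an analysis of Algorithm~\ref{alg:first_sublinear}, so the plan is to walk through its ten steps, organized into a \emph{descent} phase (Steps~1--6) that replaces $\AA$ by the tiny matrix $\T\AA\S$ while preserving every rank-$k$ projection cost up to an additive $\smallnorm{\AA}_F^2$-error, and an \emph{ascent} phase (Steps~7--10) that lifts a good rank-$k$ approximation of $\T\AA\S$ back up to one of $\AA$. The entire error budget is tracked additively in $\smallnorm{\AA}_F^2$ and rescaled by a constant at the end; each randomized subroutine is amplified to failure probability $o(1)$ (affordable, since every such verification is done on a small sketched matrix), and a final union bound yields overall probability at least $9/10$.

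\emph{Descent.} By Corollary~\ref{cor:column_norm_estimation}, Step~2 gives $O(m/b_1)$-approximate column-norm estimates $\widetilde X_j$, so the distribution $q$ of Step~3 satisfies $q_j\ge\Omega(b_1/m)\,\smallnorm{\AA_{*,j}}_2^2/\smallnorm{\AA}_F^2$, and oversampling by a $\Theta(m/b_1)$ factor lets Theorem~\ref{thm:pcp_main_col} make $\AA\S$ a \emph{column} projection-cost preserving sketch of $\AA$: $\smallnorm{\AA\S-\X\AA\S}_F^2=\smallnorm{\AA-\X\AA}_F^2\pm\epsilon\smallnorm{\AA}_F^2$ for all rank-$k$ projections $\X$; taking $\X$ to be the zero projection gives $\smallnorm{\AA\S}_F^2=(1\pm\epsilon)\smallnorm{\AA}_F^2$. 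Since rescaling destroys the distance-matrix structure, Step~4 buckets the columns of $\AA\S$ into $O(\epsilon^{-1}\log n)$ geometric weight classes $\mathcal W_g$ on which the scale is constant up to $(1+\epsilon)$, so each block $\AA\S_{|\mathcal W_g}$ still obeys approximate triangle inequality~(\ref{eqn:approx_tri}) with a slightly larger parameter; Lemma~\ref{lem:row_norm_estimation} then applies block-by-block, and summing the block estimates (with a union bound over the classes) Step~5 produces the $O(n/b_2)$-approximate row-norm estimates it claims. Step~6 oversamples rows of $\AA\S$ by a $\Theta(n/b_2)$ factor, and Corollary~\ref{thm:pcp_main_row} makes $\T\AA\S$ a \emph{row} projection-cost preserving sketch of $\AA\S$: $\smallnorm{\T\AA\S-\T\AA\S\X}_F^2=\smallnorm{\AA\S-\AA\S\X}_F^2\pm\epsilon\smallnorm{\AA\S}_F^2=\smallnorm{\AA\S-\AA\S\X}_F^2\pm O(\epsilon)\smallnorm{\AA}_F^2$ for all rank-$k$ projections $\X$.

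\emph{Ascent.} Step~7 runs Theorem~\ref{thm:clarkson_woodruff} on the now-tiny $\T\AA\S$, returning $\LL\D\WW^T$ with $\smallnorm{\T\AA\S-\LL\D\WW^T}_F^2\le(1+\epsilon)\smallnorm{\T\AA\S-(\T\AA\S)_k}_F^2$; since $\LL\D\WW^T$ has rows in the row span of $\WW^T$, the rank-$k$ row projection $\WW\WW^T$ is approximately optimal for $\T\AA\S$ up to additive $O(\epsilon)\smallnorm{\T\AA\S}_F^2=O(\epsilon)\smallnorm{\AA}_F^2$, so by the row-pcp form of Lemma~\ref{lem:pcp_useful} it is approximately optimal for $\AA\S$: $\smallnorm{\AA\S-\AA\S\WW\WW^T}_F^2\le\smallnorm{\AA\S-(\AA\S)_k}_F^2+O(\epsilon)\smallnorm{\AA}_F^2$. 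To trade this row certificate for a column certificate --- which is what the $\AA\S$-to-$\AA$ link needs --- Step~8 solves $\min_\X\smallnorm{\AA\S-\X\WW^T}_F$ approximately by leverage-score sketching of $\WW^T$ (Theorem~\ref{thm:clarkson_woodruff_regression}), obtaining $\X_{\AA\S}$ with $\smallnorm{\AA\S-\X_{\AA\S}\WW^T}_F^2\le(1+\epsilon)\smallnorm{\AA\S-\AA\S\WW\WW^T}_F^2\le\smallnorm{\AA\S-(\AA\S)_k}_F^2+O(\epsilon)\smallnorm{\AA}_F^2$; writing $\X_{\AA\S}\WW^T=\P'\NN'^T$ with $\P'$ having orthonormal columns, the rank-$k$ column projection $\P'\P'^T$ satisfies $\smallnorm{\AA\S-\P'\P'^T\AA\S}_F^2\le\smallnorm{\AA\S-\X_{\AA\S}\WW^T}_F^2\le\smallnorm{\AA\S-(\AA\S)_k}_F^2+O(\epsilon)\smallnorm{\AA}_F^2$. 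Since $\AA\S$ is a column pcp of $\AA$ and $\smallnorm{\AA\S}_F^2=\Theta(\smallnorm{\AA}_F^2)$, the column-pcp form of Lemma~\ref{lem:pcp_useful} gives $\smallnorm{\AA-\P'\P'^T\AA}_F^2\le\smallnorm{\AA-\AA_k}_F^2+O(\epsilon)\smallnorm{\AA}_F^2$. Finally, to avoid forming $\P'^T\AA$ (which would touch all of $\AA$), Step~10 solves $\min_\X\smallnorm{\AA-\P'\X}_F$ approximately by leverage-score sketching of $\P'$, obtaining $\X_\AA$ with $\smallnorm{\AA-\P'\X_\AA}_F^2\le(1+\epsilon)\smallnorm{\AA-\P'\P'^T\AA}_F^2\le\smallnorm{\AA-\AA_k}_F^2+O(\epsilon)\smallnorm{\AA}_F^2$; outputting $\MM=\P'$, $\NN^T=\X_\AA$ and rescaling $\epsilon$ by a constant yields the claimed bound.

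\emph{Running time and main obstacle.} With $b_1=\epsilon n^{0.34}/\log n$ and $b_2=\epsilon m^{0.34}/\log m$, Steps~2--6 cost $\widetilde{O}\!\big((b_1 n+b_2 m)\,\textrm{poly}(k\epsilon^{-1})\big)=\widetilde{O}\!\big((n^{1.34}+m^{1.34})\,\textrm{poly}(k\epsilon^{-1})\big)$ plus the cheaper terms $\widetilde{O}\!\big((m/b_1+n/b_2)\,\textrm{poly}(k\epsilon^{-1})\big)$; the sketched regressions of Steps~8 and~10 read only $O((m+n)\,\textrm{poly}(k\epsilon^{-1}))$ entries of $\AA$ and run in $\widetilde{O}((m+n)\,\textrm{poly}(k\epsilon^{-1}))$ time; and Step~7 runs Theorem~\ref{thm:clarkson_woodruff} on $\T\AA\S$, which is $s_2\times s_1$ with $s_1 s_2=\widetilde{O}\!\big((mn)^{0.66}\,\textrm{poly}(k\epsilon^{-1})\big)$, and since $(mn)^{0.66}=m^{0.66}n^{0.66}\le\max(m,n)^{1.32}\le m^{1.32}+n^{1.32}$, the exponent slack $0.34-0.32$ absorbs the remaining logarithmic factors, for an overall $O\!\big((m^{1.34}+n^{1.34})\,\textrm{poly}(k\epsilon^{-1})\big)$. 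The main obstacle is the ascent: the input-sparsity solver naturally certifies a good \emph{row} space of $\T\AA\S$, but $\T\AA\S$ relates to $\AA\S$ only through row projections while $\AA\S$ relates to $\AA$ only through column projections, so one must pivot between the two representations via the two regression problems, each of which must itself be sketched to sublinear size without letting its relative error inflate the running additive budget $\epsilon\smallnorm{\AA}_F^2$ by more than a constant. A secondary difficulty is that column sampling breaks the distance-matrix structure, so Step~5's norm estimation only succeeds after the geometric weight-class bucketing of Step~4 restores approximate triangle inequality on each block.
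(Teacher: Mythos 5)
Your proposal does not prove the statement it was asked to prove. The statement is Theorem~\ref{thm:clarkson_woodruff_regression} (Fast Regression): given $\AA\in\mathbb{R}^{m\times n}$ and a matrix $\BB\in\mathbb{R}^{m\times k}$ with orthonormal columns, the problem $\min_{\X}\|\AA-\BB\X\|_F^2$ can be solved to $(1+\epsilon)$ relative error by leverage-score sampling, using a sketch $\E$ with $\mathrm{poly}(k/\epsilon)$ rows, in $O((m\log m+n)\,\mathrm{poly}(k/\epsilon))$ time. In the paper this is an imported result (Theorem~38 of \cite{clarkson2013low}; see also \cite{drineas2008relative}) and is not re-proved. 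What you have written instead is a proof of Theorem~\ref{thm:first_sublinear}, the analysis of Algorithm~\ref{alg:first_sublinear}. Worse, your argument is circular with respect to the actual target: in your Steps~8 and~10 you invoke Theorem~\ref{thm:clarkson_woodruff_regression} as a black box, i.e., you assume exactly the statement you were supposed to establish. Nothing in the proposal addresses why sampling rows by the leverage scores of $\BB$ yields a subspace embedding together with an approximate matrix-product guarantee, which is the substance of the claim.

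A correct treatment would go roughly as follows. Since $\BB$ has orthonormal columns, its leverage scores are exactly the squared row norms $\|\BB_{i,*}\|_2^2$, computable in $O(mk)$ time with no further preprocessing (this is why the paper repeatedly notes the leverage scores are ``precomputed''). Sampling $\mathrm{poly}(k/\epsilon)$ rows according to this distribution produces a sketching matrix $\E$ that is, with constant probability, a $(1\pm O(1))$ subspace embedding for the column span of $\BB$ and satisfies the approximate matrix product property $\|\BB^T\E^T\E\AA^{\perp}-\BB^T\AA^{\perp}\|_F^2\le O(\epsilon)\|\AA^{\perp}\|_F^2$ where $\AA^{\perp}=\AA-\BB\BB^T\AA$ is the residual of the optimal solution; the standard sketch-and-solve argument then shows the minimizer $\widetilde{\X}$ of $\min_{\X}\|\E\AA-\E\BB\X\|_F^2$ satisfies $\|\AA-\BB\widetilde{\X}\|_F^2\le(1+\epsilon)\min_{\X}\|\AA-\BB\X\|_F^2$. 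The running time accounting is: $O(m\log m)\cdot\mathrm{poly}(k/\epsilon)$ for computing scores and drawing the sample, and $n\cdot\mathrm{poly}(k/\epsilon)$ for reading the $\mathrm{poly}(k/\epsilon)$ sampled rows of $\AA$ and solving the small least-squares problem. If you wish to keep your write-up as a proof of Theorem~\ref{thm:first_sublinear}, it is a faithful reconstruction of the paper's argument for that theorem, but it does not answer the question posed.
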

Since $\WW^T$ has orthonomal rows, the leverage scores are precomputed. With probability at least $99/100$, we can compute $\X_{\AA\S} = \argmin_{\X} \|\AA\S\E - \X \WW^T\E \|^2_F$, where $\E$ is a leverage score sketching matrix with poly$\left(\frac{k}{\epsilon}\right)$ columns, as shown in Theorem \ref{thm:clarkson_woodruff_regression}.  
\begin{equation}
\begin{split}
\|\AA\S - \X_{\AA\S} \WW^T \|^2_F & \leq (1+\epsilon) \min_{\X}\|\AA\S - \X \WW^T \|^2_F \\
& \leq (1+\epsilon) \| \AA\S -\AA\S\WW\WW^T  \|^2_F \\ 
& = \| \AA\S - (\AA\S)_k \|^2_F \pm \epsilon \| \AA\S \|^2_F
\end{split}
\end{equation}
where the last two inequalities follow from equation \ref{eqn:pcp_as}. Recall, $\AA\S$ is an $m \times s_1$ matrix and thus the running time is $O\left((m +  s_1) \log(m) \textrm{poly}\left(\frac{k}{\epsilon} \right) \right)$.
Let  $\X_{\AA\S}\WW^T = \P'\mathbf{N}'^T $ be such that $\P'$ has orthonormal columns. Then, the column space of $\P'$ contains an approximately optimal solution for $\AA$, since $\|\AA\S - \P'\mathbf{N}'^T \|^2_F = \| \AA\S - (\AA\S)_k \|^2_F \pm \epsilon \|\AA\S \|^2_F$ and $\AA\S$ is a \emph{column pcp} for $\AA$. It follows from Lemma \ref{lem:pcp_useful}  that with probability at least $98/100$, 
\begin{equation}
\label{eqn:additive_for_a}
\| \AA - \P'\P'^T\AA \|^2_F \leq \| \AA - \AA_k \|_F + \epsilon \|\AA\|_F
\end{equation}
Therefore, there exists a good solution for $\AA$ in the column space of $\P'$. Since we cannot compute this explicitly, we set up the following regression problem: 
\begin{equation}
\min_{\X} \|\AA - \P'\X \|^2_F
\end{equation}
Again, we sketch the regression problem above by sampling columns of $\AA$ according to the leverage scores of $\P'$. We can then compute $\X_{\AA} = \argmin_{\X} \| \E' \AA - \E' \P'\X \|^2_F $ with probability at least $99/100$, where $\E'$ is a leverage score sketching matrix with poly$\left(\frac{k}{\epsilon}\right)$ rows. Then, using the properties of leverage score sampling from Theorem \ref{thm:clarkson_woodruff_regression},
\begin{equation}
\begin{split}
\| \AA -  \P'\X_{\AA} \|^2_F & \leq (1+\epsilon)\min_{\X} \| \AA - \P'\X \|^2_F \\
& \leq (1+\epsilon) \|\AA - \P' \P'^T\AA \|^2_F \\
& \leq  \| \AA - \AA_k \|^2_F + O(\epsilon) \|\AA\|^2_F
\end{split}
\end{equation}
where the second inequality follows from $\X$ being the minimizer and $\P'^T\AA$ being some other matrix, and the last inequality follows from equation \ref{eqn:additive_for_a}.
Recall, $\P'$ is an $m \times k$ matrix and by Theorem 38 of CW, the time taken to solve the regression problem is $O\left( (m\log(m) + n) \textrm{poly}\left(\frac{k}{\epsilon}\right) \right)$.  Therefore, we observe that $\P'\X_{\AA}$ suffices and we output it in factored form by setting $\MM = \P'$ and $\NN = \X_{\AA}^T $. Union bounding over the probabilistic events, and rescaling $\epsilon$, with probability at least $9/10$, Algorithm \ref{alg:first_sublinear} outputs $\MM \in \mathbf{R}^{m \times k}$ and $\NN \in \mathbf{R}^{n \times k}$ such that the guarantees of Theorem \ref{thm:first_sublinear} are satisfied. 

Finally, we analyze the overall running time of Algorithm \ref{alg:first_sublinear}. Computing the estimates for the column norms and constructing the \textit{column pcp} for $\AA$  has running time $O\left(\frac{m\log(m)}{b_1}\text{poly}(\frac{k}{\epsilon}) + b_1n + m \right)$. Then, computing estimates for row norms and constructing a \textit{row pcp} for $\AA\S$ has overall running time $O\left( \frac{n \log(n)}{b_2}\textrm{poly}(\frac{k}{\epsilon}) + (b_2m + n )\frac{\log(n)}{\epsilon} \right)$. The input-sparsity time algorithm to compute a low-rank approximation of $\T\AA\S$ has running time $O\left( s_1s_2 + (s_1 + s_2)\textrm{poly}(\frac{k}{\epsilon} ) \right)$ and constructing a solution for $\AA$ is dominated by $O\left( (m\log(m) + n) \textrm{poly}\left(\frac{k}{\epsilon}\right) \right)$. Therefore, the overall running time is dominated by $O\Big((b_1 n + b_2 m )\frac{\log(n)}{\epsilon} + \frac{mn \log(m) \log(n)}{b_1 b_2}\textrm{poly}\left( \frac{k}{\epsilon} \right) + \left(\frac{n\log(n)}{b_2} + \frac{m\log(m)}{b_1} \right)\textrm{poly}\left( \frac{k}{\epsilon} \right)  \Big)$.
Setting $b_1 =\frac{\epsilon n^{0.34}}{\log(n)} $ and $b_2 = \frac{\epsilon m^{0.34}}{\log(m)} $, we note that the overall running time is 
$\widetilde{O}\left( ( m^{1.34} + n^{1.34})\textrm{poly}\left( \frac{k}{\epsilon}\right)  \right)$
where $\widetilde{O}$ hides $\log(m)$ and $\log(n)$ factors. This completes the proof of Theorem \ref{thm:first_sublinear}. 
%Following the filter step of Frieze-Kannan-Vempala, let $T = \left\{t \textrm{ $|$ } \|(\T\AA\S)^T \ell_t\|^2_2 \geq \gamma \|\T\AA \|^2_F \right\}$, where $\gamma = \frac{c\epsilon}{8k}$. For all $t\in T$, let $v_t = \frac{\AA^T\T^T \ell_t}{\|(\T\AA\S)^T \ell_t \|_2}$. We then project $\AA$ on to $\sum_{t \in T} v_t v_t^T$, which is a $m \times n$ with rank at most $k$. Following the Frieze-Kannan-Vempala analysis, 
%\begin{equation}
%\| \AA - \AA \sum_{t \in T} v_t v_t^T \|^2_F \leq (1+\epsilon)\| \AA - \AA_k \|^2_F + \epsilon \| \AA \|^2_F \leq  \| \AA - \AA_k \|^2_F + O(\epsilon)\| \AA \|^2_F
%\end{equation} 
%Recall, Clarkson-Woodruff runs in time in $O\left(nnz(\mathbf{\T\AA\S}) + (m+n)\text{poly}(\frac{k}{\epsilon}))\right)$. We observe that $nnz(\T\AA\S) = O\left(\frac{m n}{b_1 b_2}\text{poly}(\frac{k}{\epsilon})\right)$. 
%Therefore, the total time taken by estimation, sampling and low-rank approximation is $O\left(\left(\frac{mn}{b_1 b_2} + \frac{b_2n + b_1m}{b_1 b_2}\right)\text{poly}(\frac{k}{\epsilon}) + b_1m + b_2n\frac{\log(m)}{\epsilon} \right)$. We observe that this immediately implies a sublinear time algorithm by setting $b_1 = m^{0.34}$ and $b_2 = \frac{\epsilon n^{0.34}}{\log(m)}$. 

\section{Optimizing the Exponent}

In this section we improve the running time of our previous sublinear time algorithm. Intuitively, our algorithm recursively constructs projection-cost preserving sketches for the rows and columns of the original matrix by sampling according to coarse estimates of the row and column norms. Note, we are able to obtain these estimates by dividing the subsampled matrices at each step into weight classes such that each weight class approximately satisfies triangle inequality. At the bottom of the recursion we reduce the input matrix $\AA$ to a $\textrm{poly}(\frac{k}{\epsilon}) \times \textrm{poly}(\frac{k}{\epsilon})$ matrix, for which we can compute the SVD in $O( \textrm{poly}(\frac{k}{\epsilon}) )$ time. 

Starting with an orthonormal basis of the \texttt{SVD}, we alternate between approximately computing the best rank-$k$ projection in the column and the row space all the way up the recursion chain and output the final rank-$k$ matrix. However, computing the \texttt{SVD} or even running an input-sparsity time algorithm near the top becomes prohibitively expensive and is no longer sublinear. 
Therefore, we find approximate solutions to the best rank-$k$ column and row subspaces by formulating a regression problem, sketching it to a smaller dimension using leverage score sampling and solving it approximately.

We show that recursive sampling indeed approximately preserves rank-$k$  subspaces of the row and column space. For the sake of brevity throughout the rest of the analysis, let $\AA_{(i)}$ be a $t_i \times s_i$ matrix created by recursively sampling rows or columns of $\AA_{(i-1)}$ such that at each step the \textit{row} or \textit{column pcp} properties are satisfied. Formally, let $\AA_{(0)} = \AA$ be a $t_0 \times s_0$ matrix, where $t_0 = m$ and $s_0 = n$. Then, if $i$ is odd, $\AA_{(i)} = \AA_{(i-1)} \S_i$ is a $t_{i-1} \times s_{i} $ matrix and a \textit{column pcp} for $\AA_{(i-1)}$ and if $i$ is even, $\AA_{(i)} = \T_i \AA_{(i-1)} $ is a $t_i \times s_{i-1}$ matrix and a \textit{row pcp} for $\AA_{(i-1)}$. We note that $s_i = \widetilde{\Theta}\left( \frac{s_{i-1}}{b_2}\textrm{poly}(\frac{k}{\epsilon})\right) $ and $t_i = \widetilde{\Theta}\left( \frac{t_{i-1}}{b_1}\textrm{poly}(\frac{k}{\epsilon}) \right)$.

To address the issue of rescaling every time we subsample rows or columns, we split the rows or columns of $\AA_{(i)}$ into $O(\epsilon^{-1} \log(mn))$ weight classes such that triangle inequality approximately holds in each weight class. Therefore, the column or row norm estimation algorithm goes through for each weight class independently and we obtain an overall $O(\frac{m}{b_2})$-approximation to the column norms at the cost of reading $O(\epsilon^{-1} \log(m) b_2)$ entries per column. A similar guarantee holds for the row norms. This idea simply extends to the recursive algorithm as we can create weight classes at each recursive step run the simple row and column norm estimation algorithms. 

\begin{Frame}[\textbf{Algorithm \ref{alg:final_sublinear} : Full Sublinear Time Algorithm.}]
\label{alg:final_sublinear} 
\textbf{Input:} A Distance Matrix $\AA_{m \times n}$, integer $k$,  $\epsilon > 0$ and a small constant $\gamma >0$. 

\begin{enumerate}
	\item  Let $2r =O\left(1/\gamma\right)$ and let $\AA_{(0)} = \AA$, $s_0 = n$ and $t_0 = m$. Set $b_1 = m^{\gamma}$, $b_2 = n^\gamma$.
	\item For $i \in [2r]$, recursively construct matrix $\AA_{(i)}$ as follows:
    \begin{enumerate}
    	\item If $i$ is odd, run \texttt{ColumnNormEstimation}$(\AA_{(i-1)}, b_2)$. By Lemma \ref{lem:rescaling}, we obtain $O\left(\frac{t_{i-1}}{b_2}\right)$ -approximate estimates of the column norms of $\AA_{(i-1)}$. Let $q = \{ q_1, q_2 \ldots q_{s_{i-1}} \} $ denote a distribution over columns of $\AA_{(i-1)}$ proportional to the relative estimate for each column. 
        
        Construct a \textit{column pcp} for $\AA_{(i-1)}$ by sampling $s_i =  \widetilde{\Theta}\left( \frac{s_{i-1}}{b_2}\textrm{poly}(\frac{k}{\epsilon})\right)$ columns such that each column is set to $\frac{(\AA_{(i-1)})_{*,j}}{\sqrt{s_i q_j}}$ with probability $q_j$. Let $\AA_{(i)} = \AA_{(i-1)}\S_i$ be the resulting $t_{i-1} \times s_i$ matrix that follows guarantees of Theorem \ref{thm:pcp_main_col}. 
        \item If $i$ is even, run \texttt{RowNormEstimation}$(\AA_{(i-1)}, b_1)$. By Lemma \ref{lem:rescaling}, we obtain $O\left(\frac{s_{-1}}{b_1}\right)$ -approximate estimates of the row norms of $\AA_{(i-1)}$. Let $p = \{ p_1, p_2 \ldots p_{t_{i-1}} \} $ denote a distribution over rows of $\AA_{(i-1)}$ proportional to the relative estimate for each row. 
  
  Construct a \textit{row pcp} for $\AA_{(i-1)}$ by sampling $t_i = \widetilde{\Theta}\left( \frac{t_{i-1}}{b_1}\textrm{poly}(\frac{k}{\epsilon}) \right)$ columns such that each column is set to $\frac{(\AA_{(i-1)})_{\ell,*}}{\sqrt{t_i p_{\ell}}}$ with probability $p_{\ell}$. Let $\AA_{(i)} = \T_i\AA_{(i-1)}$ be the resulting $t_i \times s_{i-1}$ matrix that follows guarantees of Corollary \ref{thm:pcp_main_row}.
    \end{enumerate}
    \item  Let $\AA_{(2r)}$ be the final matrix at the end of the recursion. Compute the truncated \texttt{SVD} $(\AA_{(2r)}, k) = \U_{2r} \Sigma_{2r} \V^T_{2r}$. Let $\V^T_{2r}$ represent the top $k$ singular vectors in the row space. Construct a leverage-score sketching matrix $\E_{2r}$ with  poly$(\frac{k}{\epsilon})$ columns using the leverage scores of $\V^T_{2r}$, following the guarantees of Theorem \ref{thm:clarkson_woodruff_regression}. Compute $\X_{\AA_{(2r-1)}} = \argmin_{\X}\| \AA_{(2r-1)}\E_{2r-1} - \X \V^T_{2r}\E_{2r-1}\|$. 
	\item Compute a decomposition $\U_{2r-1} \V_{2r-1}$ of $\X_{\AA_{(2r-1)}}\V^T_{2r}$ such that $\U_{2r-1}$ has orthonormal columns. Consider the regression problem $\min_{\X} \|\AA_{(2r-2)} -  \U_{2r-1}\X \|^2_F$. Compute $\X_{\AA_{(2r-2)}}$ $= \argmin_{\X} \|\E_{2r-2}\AA_{(2r-2)} -  \E_{2r-2}\U_{2r-1}\X \|^2_F $, where $\E_{2r-2}$ is a leverage score sketching matrix with poly$\left(\frac{k}{\epsilon}\right)$ rows constructed according to Theorem \ref{thm:clarkson_woodruff_regression}.
    \item Let $\U_{2r-1}\X_{\AA_{(2r-2)}}$ be the starting point for $\AA_{2r-3}$ as $\U_{2r} \Sigma_{2r} \V^T_{2r}$ was for $\AA_{2r-1}$ and recurse to the top. Let $\U_{1}$, $\X_{\AA}$ be the solution obtain from solving $\min_{\X} \|\E\AA -  \E\U_{1}\X \|^2_F $, where $\E$ is a leverage score sketching matrix with poly$\left(\frac{k}{\epsilon}\right)$ rows, constructed according to Theorem \ref{thm:clarkson_woodruff_regression}.
\end{enumerate}
\textbf{Output:} $\MM= \U_{1}$, $\NN^T = \X_{\AA}$
\end{Frame}

Intuitively, to handle the first column rescaling , we partition the columns of the matrix into $O(\epsilon^{-1} \log(mn))$ blocks such that each block satisfies approximate triangle inequality. By Lemma \ref{lem:row_norm_estimation}, we can estimate the row norms for each block efficiently, and summing the estimates suffices to obtain an approximation to the row norms. Next, we subsample the rows and scale them. However, we observe that we can yet again partition each sub-matrix that satisfies approximate triangle inequality into $O(\epsilon^{-1} \log(mn))$ weight classes and yet again satisfy approximate triangle inequality. We note that we only recurse a constant $(1/\gamma)$ number of times, therefore the total number of sub-matrices formed is $O\left((\epsilon^{-1} \log(mn))^\frac{1}{\gamma}\right) = \textrm{poly}(\epsilon^{-1} \log(mn))$ and the run-time blows up by at most that factor.   

%Therefore, the total number of entries read is $O(\epsilon^{-1}\log(mn) b_2 n)$ and $O(\epsilon^{-1}\log(mn) b_1 m )$ to estimate column norms and row norms respectively. 

\begin{lemma}(Estimating row and column norms under rescaling.)
\label{lem:rescaling}
Let $\AA$ be a $m \times n$ matrix such that $\AA$ satisfies approximate triangle inequality. 
For a small fixed constant $\gamma$, and for all $i \in [1/\gamma]$, let $\AA_{(i)}$ be a $t_i \times s_i$ scaled sub-matrix of $\AA$ as defined as above. There exists an algorithm that, with probability at least $9/10$, obtains a $O\left(\frac{s_{i-1}}{b_1}\right)$-approximation to the row norms of $\AA_{(i)}$ in $\widetilde{O}(b_2m + n)$ time. A similar guarantee holds for estimating column norms.
\end{lemma}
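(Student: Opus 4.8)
The plan is to reduce, via a bucketing argument, to a bounded number of invocations of Algorithm~\ref{alg:row_norm_estimation} on sub-matrices that genuinely satisfy approximate triangle inequality, and then to show that summing the resulting coarse per-bucket estimates degrades the approximation factor by at most the number of buckets. First I would unwind the rescalings: each of the $O(1/\gamma)$ rounds producing $\AA_{(i)}$ from $\AA$ rescales either every column (odd rounds) or every row (even rounds), each by a scalar depending only on the sampled index and the recorded sampling probabilities, so composing these the $(\ell,j)$ entry of $\AA_{(i)}$ equals $a_\ell\,b_j\,\AA_{\pi(\ell),\rho(j)}$, where $\pi,\rho$ are the composed index maps into $[m],[n]$, the factor $a_\ell$ is a product of per-row scalings (constant along row $\ell$), $b_j$ is a product of per-column scalings (constant along column $j$), and both $a_\ell,b_j$ are computable by the algorithm. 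Hence $\|(\AA_{(i)})_{\ell,*}\|_2^2 = a_\ell^2\,\|(\widetilde{\AA}_{(i)})_{\ell,*}\|_2^2$, where $\widetilde{\AA}_{(i)}$ has entries $b_j\,\AA_{\pi(\ell),\rho(j)}$; since the $a_\ell$ are known it suffices to estimate the row norms of the ``row-unscaled'' matrix $\widetilde{\AA}_{(i)}$, which carries only per-column scalings.

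Next I would set up the column weight classes, exactly as in the \emph{Handling Rescaled Columns} step in the analysis of Algorithm~\ref{alg:first_sublinear}. With probability $1-\textrm{poly}(mn)^{-1}$ every surviving accumulated column scaling $b_j$ lies in a $\textrm{poly}(mn)$ range (each of the at most $O(1/\gamma)$ constituent factors $1/\sqrt{s q}$ does, since any index that is actually sampled has $q\ge\textrm{poly}(mn)^{-1}$, and the number of rounds is constant), so I can partition the columns of $\widetilde{\AA}_{(i)}$ into $G = O(\epsilon^{-1}\log(mn))$ classes $\mathcal{W}_g = \{\,j : b_j\in[(1+\epsilon)^g,(1+\epsilon)^{g+1})\,\}$. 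For each $g$, the sub-matrix of $\widetilde{\AA}_{(i)}$ on the columns in $\mathcal{W}_g$ is, entrywise, within a $(1\pm\epsilon)$ factor of $(1+\epsilon)^g$ times a sub-matrix of $\AA$; the latter is itself a distance matrix (on sub-sets of $\mathcal{P}$ and $\mathcal{Q}$) and hence satisfies approximate triangle inequality with parameter $0$, and the per-column $(1+\epsilon)$-perturbation degrades this to parameter $O(\epsilon)$ exactly as in that step. (Equivalently, one re-buckets at each of the $O(1/\gamma)$ rounds, producing $(\epsilon^{-1}\log(mn))^{O(1/\gamma)}=\textrm{poly}(\epsilon^{-1}\log(mn))$ pieces, each with parameter $O(\epsilon/\gamma)=O(\epsilon)$; either accounting works.) On each piece I would run Algorithm~\ref{alg:row_norm_estimation} with parameter $b_1$, boosting the failure probability to $\textrm{poly}(mn)^{-1}$ by taking the median of $O(\log(mn))$ independent runs (the median-amplification remarked after Lemma~\ref{lem:row_norm_estimation}), and output $\widetilde{X}_\ell := a_\ell^2\sum_{g=1}^{G}E^{(g)}_\ell$, where $E^{(g)}_\ell$ is the estimate produced for row $\ell$ in class $g$.

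The summation and running-time steps are then routine. By Lemma~\ref{lem:row_norm_estimation} (applied with the $O(\epsilon)$ triangle-inequality constant, which only moves absolute constants) each $E^{(g)}_\ell$ satisfies $\frac{b_1}{c\,|\mathcal{W}_g|}\,N^{(g)}_\ell \le E^{(g)}_\ell \le c\,N^{(g)}_\ell$ for an absolute constant $c$, with $N^{(g)}_\ell=\sum_{j\in\mathcal{W}_g}(\widetilde{\AA}_{(i)})_{\ell,j}^2$; summing the upper bounds gives $\sum_g E^{(g)}_\ell \le c\,\|(\widetilde{\AA}_{(i)})_{\ell,*}\|_2^2$, and summing the lower bounds gives $\sum_g E^{(g)}_\ell \ge \frac{b_1}{c\,s_{i-1}}\,\|(\widetilde{\AA}_{(i)})_{\ell,*}\|_2^2$ (using $|\mathcal{W}_g|\le s_{i-1}$), so $\widetilde{X}_\ell$ is an $O(s_{i-1}/b_1)$-approximation to $\|(\AA_{(i)})_{\ell,*}\|_2^2$. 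A union bound over the $G\,t_i\le\textrm{poly}(mn)$ (row, class) events together with the scaling-range event keeps the success probability at least $9/10$. For the time, each of the $O(\log(mn))$ runs on a $t_i\times|\mathcal{W}_g|$ piece reads one column, one row, and $\Theta(b_1)$ entries per row, that is $O(b_1 t_i + |\mathcal{W}_g|)$ entries of $\AA$ (with $O(1)$ extra arithmetic per entry to apply the known scalings); summing over the $G$ classes and the repetitions and using $t_i\le m$, $\sum_g|\mathcal{W}_g|\le n$ gives $\widetilde{O}(b_1 m + n)$, which is the claimed $\widetilde{O}(b_2 m + n)$ when $m\le n$ (as may be assumed). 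The column-norm statement follows by the identical argument applied to $\AA_{(i)}^{T}$, with rows/columns, $m/n$, and $b_1/b_2$ interchanged and Corollary~\ref{cor:column_norm_estimation} in place of Lemma~\ref{lem:row_norm_estimation}.

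The step I expect to be the main obstacle is the second one: one must be certain that a piece carved out after up to $O(1/\gamma)$ interleaved rounds of per-row and per-column rescaling still satisfies approximate triangle inequality with a parameter that has degraded by only a constant number of $(1+\epsilon)$ factors, so that Lemma~\ref{lem:row_norm_estimation} applies with only absolute-constant loss. The key simplification is to factor out all per-row scalings first (legitimate, since they are known and constant along each row): what then remains inside a column weight class is literally a $(1+\epsilon)$-perturbation of a sub-matrix of the original distance matrix, so one can invoke the \emph{Handling Rescaled Columns} analysis verbatim instead of having to argue directly that the estimator tolerates two-sided rescaling.
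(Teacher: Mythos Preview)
Your proposal is correct, and the parenthetical alternative you mention --- re-bucketing at each of the $O(1/\gamma)$ rounds to produce $(\epsilon^{-1}\log(mn))^{O(1/\gamma)}$ pieces, each satisfying $(1+\epsilon)^{O(1/\gamma)}$-approximate triangle inequality --- is exactly the paper's proof. Your primary route, however, is a genuine and nice simplification the paper does not use: because the accumulated per-row factors $a_\ell$ are constant along each row and known to the algorithm, you pull them outside the norm and are left with a matrix carrying only per-column scalings $b_j$; a single layer of $O(\epsilon^{-1}\log(mn))$ column weight classes then suffices, and inside each class you are (up to a $(1+\epsilon)$ perturbation) looking at an honest distance sub-matrix on subsets of $\mathcal{P}$ and $\mathcal{Q}$, so Lemma~\ref{lem:row_norm_estimation} applies directly. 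The paper instead buckets by both row and column scalings at every round, which is heavier but yields pieces that are approximate-triangle-inequality matrices in the full two-sided sense; your factoring trick is tailored to the task at hand (row norms), and the dual trick handles column norms. Both routes land on the same $\widetilde O$ bounds.

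One small remark: the mismatch you flag between $\widetilde O(b_1 m+n)$ in your analysis and $\widetilde O(b_2 m+n)$ in the lemma statement is an inconsistency in the paper itself (its proof uses the same sampling parameter for both the approximation factor and the time, as you do); your ``$m\le n$'' patch is not really the point --- the lemma simply should carry one parameter in both places.
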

\begin{proof}
We begin with a $m \times n$ matrix $\AA$ that satisfies approximate triangle inequality. It follows from Corollary \ref{cor:column_norm_estimation} that we can approximately estimate column norms of $\AA$ in $O(b_2m + n)$ time. Therefore, we can construct the \textit{column pcp}, $\AA_{(1)}=\AA \S_1$, such that the $j^{th}$ column of $\AA$, if sampled, is rescaled by $\frac{1}{\sqrt{s_1 q_j}}$. Therefore, the resulting matrix, $\AA_{(1)}$ may no longer be a distance matrix. As discussed in the previous section, to address this issue, we partition the columns of $\AA_{(1)}$ into weight classes such that the $g^{th}$ weight class contains column index $j$ if the corresponding scaling factor $\frac{1}{\sqrt{ q_j}}$ lies in the interval  $\left[(1+\epsilon)^g, (1+\epsilon)^{g+1}\right)$. Note, we can ignore the $(\frac{1}{\sqrt{s_1}})$-factor since every entry is rescaled by the same constant. Formally, 
\begin{equation}
\mathcal{W}^{(1)}_g = \left\{ i \in [s_1] \textrm{ $\Big{|}$ } \frac{1}{\sqrt{q_j}} \in \left[(1+\epsilon)^g, (1+\epsilon)^{g+1}\right) \right\}
\end{equation}
Next, with high probability, for all $j\in[n]$, if column $j$ is sampled, $\frac{1}{q_j} \leq n^c$ for a large constant $c$. If instead, $q_j \leq \frac{1}{n^{c'}}$, the probability that the $j^{th}$ is sampled would be at most $1/n^{c'}$, for some $c' > c$. Union bounding over such events for $n$ columns, the number of weight classes is at most $\log_{1+\epsilon}(n^c) = $     $O\left(\epsilon^{-1}\log(n)\right)$.
Let $\AA_{(1)|\mathcal{W}^{(1)}_g}$ denote the columns of $\AA_{(1)}$ restricted to the set of indices in $\mathcal{W}_g$. Observe that all entries in $\AA_{(1)|\mathcal{W}^{(1)}_g}$ are scaled to within a $(1+\epsilon)$-factor of each other and therefore, satisfy approximate triangle inequality (equation \ref{eqn:approx_tri}).
Therefore, row norms of $\AA_{(1)|\mathcal{W}^{(1)}_g}$ can be computed using Algorithm \ref{alg:row_norm_estimation} and the estimator is an $O\left(\frac{n}{b_2}\right)$-approximation (for some parameter $b_2$), since Lemma \ref{lem:row_norm_estimation} blows up by a factor of at most $1+\epsilon$. Summing over the estimates from each partition above, with probability at least $99/100$, we obtain an $O\left(\frac{n}{b_2}\right)$-approximate estimate to row norms of $\AA_{(1)}$. However, we note that each iteration of Algorithm \ref{alg:row_norm_estimation} reads $b_2m + n$ entries of $\AA$ and there are at most $O(\epsilon^{-1}\log(n))$ iterations. Therefore, the time taken to compute the estimates to the row norms is $O\left( (b_2m + n)\epsilon^{-1}\log(n) \right)$. 

Now, we can construct a \emph{row pcp} of $\AA_{(1)}$, which we denote by $\AA_{(2)}$, such that each row in $\AA_{(2)}$ is a scaled subset of the rows of $\AA_{(1)}$. Our next task is to estimate the column norms of $\AA_{(2)}$. It suffices to show that $\AA_{(2)}$ can be partitioned into a small number of sub-matrices such that each one satisfies $(1+\epsilon)$-approximate triangle inequality. 

Observe, we previously split the matrix $\AA_{(1)}$ according to $\mathcal{W}^{(1)}_g$, into  $O\left(\epsilon^{-1}\log(n)\right)$ sub-matrices such that each matrix satisfies $(1+\epsilon)$-approximate triangle inequality. Consider one such sub-matrix, $\AA_{(1)|\mathcal{W}^{(1)}_g}$. 
In the construction of the \emph{row pcp} $\AA_{(2)}$, we rescale a subset of rows of each of $\AA_{(1)|\mathcal{W}^{(1)}_g}$. Therefore, we can again create $O\left(\epsilon^{-1}\log(m)\right)$ geometrically increasing weight classes for the rows of $\AA_{(1)|\mathcal{W}^{(1)}_g}$. Note, restricting rows of $\AA_{(1)|\mathcal{W}^{(1)}_g}$ to one weight class results in a sub-matrix that satisfies $(1+\epsilon)^2$-approximate triangle inequality. We repeat the above analysis for each such sub-matrix, since we again start with a matrix that satisfies $(1+\epsilon)$-approximate triangle inequality, after rescaling $\epsilon$ by a constant.  

Critically, we note that we only repeat the recursion a constant number of times, therefore the approximation factor for triangle inequality blows up by $(1+\epsilon)^{1/\gamma}$. Since $\gamma$ is a constant, we can rescale $\epsilon$ by a constant, and therefore, all sub-matrices satisfy $(1+\epsilon)$-approximate triangle inequality. the total number of sub-matrices formed is $O\left((\epsilon^{-1} \log(n)\log(m))^\frac{1}{\gamma}\right) = \textrm{poly}(\epsilon^{-1} \log(n)\log(m))$ and the run-time blows up by at most that factor. 
\end{proof}

Note, we can now black-box the algorithm for estimating row and column norms of scaled sub-matrices of $\AA$. For the sake of brevity, we do not include them in Algorithm \ref{alg:final_sublinear}. Next, we present a critical structural result that enables us to recursively apply the \textit{pcp} guarantees. 

\begin{lemma}(Recursive PCP Lemma.)
\label{lem:recursive_pcp}
Let $\AA_{(i)}$ be defined as above. Then, for any $\epsilon > 0$, integer $k$ and a small constant  $\gamma > 0$, $2r = O\left( \frac{1}{\gamma}\right)$ , simultaneously for all odd $i \in [2r]$, for all rank-$k$ projection matrices $\X_i$, with probability at least 98/100,
\[\| \AA_{(i)} (\II - \X_{i}) \|^2_F = \| \AA_{(i-1)} (\II - \X_{i}) \|^2_F \pm \epsilon\| \AA_{(i-1)} \|^2_F \] 
Further, let $\X^*_{\AA_{(i)}}$ be the projection that minimizes $\| \AA_{(i)} (\II - \X_{i}) \|^2_F$ and let $\X^*_{\AA_{(i-1)}}$ be the projection that minimizes $\|\AA_{(i-1)} (\II - \X_{i}) \|^2_F$. Then, simultaneously for all odd $i$, for any rank-$k$ projection matrix $\X_i$ such that $\| \AA_{(i)} (\II - \X_{i}) \|^2_F \leq \| \AA_{(i)} (\II - \X^*_{\AA_{(i)}}) \|^2_F + \epsilon \|\AA_{(i)}\|^2_F$, with probability at least $98/100$, 
\[
\| \AA_{(i-1)} (\II - \X_{i}) \|^2_F \leq \| \AA_{(i-1)} (\II - \X^*_{\AA_{(i-1)}}) \|^2_F + \epsilon \|\AA_{(i-1)}\|^2_F
\]
A similar guarantee holds if $i$ is even.
\end{lemma}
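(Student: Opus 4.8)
The plan is to obtain the statement one level at a time from the single-step guarantees of Theorem~\ref{thm:pcp_main_col} and Corollary~\ref{thm:pcp_main_row}, and then take a union bound over the $2r = O(1/\gamma)$ levels of the recursion, exploiting that $r$ is a constant. Fix an odd $i$ and condition on the (high-probability) event that $\AA_{(0)},\dots,\AA_{(i-1)}$ have already been built successfully. The only obstacle to invoking Theorem~\ref{thm:pcp_main_col} at level $i$ is that $\AA_{(i-1)}$ is a rescaled sub-matrix, not a distance matrix, so Corollary~\ref{cor:column_norm_estimation} does not apply to it verbatim. This is exactly what Lemma~\ref{lem:rescaling} addresses: $\AA_{(i-1)}$ splits into $\textrm{poly}(\epsilon^{-1}\log(mn))$ weight classes, each satisfying $(1+\epsilon)$-approximate triangle inequality once $\epsilon$ is rescaled by a constant to absorb the $(1+\epsilon)^{O(1/\gamma)}$ blow-up from the constantly many earlier rescalings; running the estimator of Corollary~\ref{cor:column_norm_estimation} on each class and summing gives an $O(t_{i-1}/b_2)$-approximation $\widetilde X_j$ to every column norm of $\AA_{(i-1)}$ simultaneously, and a median of $O(\log(s_{i-1}/\gamma))$ repetitions makes this hold except with probability $O(1/r)$. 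I would then observe that, once accurate estimates $\widetilde X_j$ are in hand, the proofs of Theorems~\ref{thm:pcp1} and \ref{thm:pcp_main_col} use nothing about $\AA_{(i-1)}$ beyond matrix concentration (Matrix Bernstein and a scalar Chernoff bound); the approximate triangle inequality enters only through the estimation step. Hence sampling $s_i = \widetilde\Theta((s_{i-1}/b_2)\textrm{poly}(k/\epsilon))$ columns with probability $q_j \propto \widetilde X_j$ yields, except with probability $O(1/r)$, a matrix $\AA_{(i)}$ that is a column pcp of $\AA_{(i-1)}$ in the sense of Theorem~\ref{thm:pcp_main_col}; driving the failure probability down to $O(1/r)$ costs only an $O(1)$ additive loss of $\log(1/\gamma)$ in the sample bound. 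The even-$i$ case is symmetric, using Corollary~\ref{thm:pcp_main_row} and row-norm estimation.

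Next I would union bound over the $2r$ levels. At each level there are three events to control: (a) the norm estimates for $\AA_{(i-1)}$ are accurate; (b) conditioned on (a), the single-step pcp guarantee of Theorem~\ref{thm:pcp_main_col}/Corollary~\ref{thm:pcp_main_row} holds for \emph{every} rank-$k$ projection; and (c) $\|\AA_{(i)}\|_F^2 = O(\|\AA_{(i-1)}\|_F^2)$, which follows from $\mathbb{E}\|\AA_{(i)}\|_F^2 = \|\AA_{(i-1)}\|_F^2$ together with Markov (or, at the sample sizes used, a tighter Bernstein bound giving $(1\pm\epsilon)$ concentration). Setting each of these $O(r)$ events to fail with probability $O(1/r)$, small enough that the total is at most $2/100$, we conclude that with probability at least $98/100$ all of them hold at every level at once. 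On this event the claimed equality at level $i$ is precisely the single-step projection-cost preservation guarantee of Theorem~\ref{thm:pcp_main_col} (resp. Corollary~\ref{thm:pcp_main_row}) applied to $\AA_{(i)}$ and $\AA_{(i-1)}$, and it holds uniformly over all rank-$k$ projections $\X_i$ --- which is exactly what the recursion in Algorithm~\ref{alg:final_sublinear} needs, since it cannot know in advance which projection will be approximately optimal at each level.

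For the second (``further'') part, I would work inside the same good event and replay the proof of Lemma~\ref{lem:pcp_useful} at each level. Fix an odd $i$ and a rank-$k$ projection $\X_i$ with $\|\AA_{(i)} - \X_i\AA_{(i)}\|_F^2 \le \|\AA_{(i)} - \X^*_{\AA_{(i)}}\AA_{(i)}\|_F^2 + \epsilon\|\AA_{(i)}\|_F^2$. By optimality of $\X^*_{\AA_{(i)}}$ the right side is at most $\|\AA_{(i)} - \X^*_{\AA_{(i-1)}}\AA_{(i)}\|_F^2 + \epsilon\|\AA_{(i)}\|_F^2$; applying the level-$i$ pcp guarantee to $\X^*_{\AA_{(i-1)}}$ on one side and to $\X_i$ on the other, and using event (c) to turn $\epsilon\|\AA_{(i)}\|_F^2$ into $O(\epsilon)\|\AA_{(i-1)}\|_F^2$, yields $\|\AA_{(i-1)} - \X_i\AA_{(i-1)}\|_F^2 \le \|\AA_{(i-1)} - \X^*_{\AA_{(i-1)}}\AA_{(i-1)}\|_F^2 + O(\epsilon)\|\AA_{(i-1)}\|_F^2$. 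Rescaling $\epsilon$ by the constant factor $O(1/\gamma)$ lost in the weight-class decomposition and the norm-concentration step gives the claim; the even-$i$ case is identical with the roles of rows and columns exchanged.

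The main obstacle is the first step: transporting the row/column-norm estimators --- and hence Theorems~\ref{thm:pcp1}--\ref{thm:pcp_main_col}, whose hypotheses assume approximate triangle inequality --- to the rescaled sub-matrices $\AA_{(i-1)}$, which are not distance matrices. The weight-class decomposition of Lemma~\ref{lem:rescaling} handles this, but one has to verify that the approximate-triangle-inequality constant degrades only to $(1+\epsilon)^{O(1/\gamma)} = 1+O(\epsilon)$ and that the number of sub-matrices remains $\textrm{poly}(\epsilon^{-1}\log(mn))$; both hold precisely because the recursion has only constant depth $O(1/\gamma)$. Everything else --- the union bound over constantly many levels at failure probability $\textrm{poly}(\gamma)$ each, and tracking the constant-factor error blow-up through the transfer step --- is routine bookkeeping.
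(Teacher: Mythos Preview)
Your proposal is correct and follows essentially the same approach as the paper: apply the single-step column/row projection-cost preservation theorems (Theorem~\ref{thm:pcp_main_col} and Corollary~\ref{thm:pcp_main_row}) at each level, union bound over the $O(1/\gamma)$ levels, and replay the argument of Lemma~\ref{lem:pcp_useful} for the ``further'' part. The paper's proof is terser---it sets $\delta = 1/n^c$ for the pcp events and $1/(200r)$ for the transfer step and union bounds---whereas you are more explicit about two points the paper handles elsewhere or leaves implicit: (i) that $\AA_{(i-1)}$ is not a distance matrix, so one must invoke the weight-class decomposition of Lemma~\ref{lem:rescaling} before the norm estimators apply, and (ii) the norm-concentration event $\|\AA_{(i)}\|_F^2 = O(\|\AA_{(i-1)}\|_F^2)$ needed to absorb the $\epsilon\|\AA_{(i)}\|_F^2$ slack. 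Both are genuine ingredients (the paper's proof of Lemma~\ref{lem:pcp_useful} uses the second via Markov on $\mathbb{E}\|\CC\|_F^2 = \|\AA\|_F^2$), so your write-up is in fact a more self-contained version of the same argument.
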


\begin{proof}
For a given matrix $\AA_{(i)}$, such that $i$ is odd, we employ Theorem \ref{thm:pcp_main_col} with $\delta = 1/n^c$ for a fixed constant $c$. Note, the number of columns we sample only blows up by a constant and 
\begin{equation*}
\| \AA_{(i)} (\II - \X_{i}) \|^2_F \leq \| \AA_{(i-1)} (\II - \X_{i}) \|^2_F + \epsilon\| \AA_{(i-1)} \|^2_F
\end{equation*}
holds with probability at least $1 - \frac{1}{n^c}$. Union bounding over all such events for $i \in [2r]$, such that $i$ is odd, the above guarantee holds simultaneously with probability at least $1 - 1/n^{c-1}$. Setting $\delta = 1/n^c$ in Corollary \ref{thm:pcp_main_row}, it follows that simultaneously for all $i\in [2r]$, such that $i$ is even, 
\begin{equation*}
\| (\II - \X_{i}) \AA_{(i)} \|^2_F \leq \|  (\II - \X_{i}) \AA_{(i-1)} \|^2_F + \epsilon\| \AA_{(i-1)} \|^2_F
\end{equation*}
with probability at least $1 - 1/n^{c-1}$.
Note, for a fixed odd $i$, following the analysis of Lemma \ref{lem:pcp_useful}, 
\begin{equation*}
\| \AA_{(i-1)} (\II - \X_{i}) \|^2_F \leq \| \AA_{(i-1)} (\II - \X^*_{\AA_{(i-1)}}) \|^2_F + O(\epsilon) \|\AA_{(i-1)}\|_F^2
\end{equation*}
holds with probability at least $1 - \frac{1}{200r}$. Union bounding over all such events $i \in [2r]$, such that i is odd, the above guarantee holds simultaneously, with probability at least $99/100$. 

Similarly, for even $i$, let $\X^*_{\AA_{(i)}}$ be the projection that minimizes $\|(\II - \X_{i}) \AA_{(i)}\|^2_F$ and let $\X^*_{\AA_{(i-1)}}$ be the projection that minimizes $\| (\II - \X_{i}) \AA_{(i-1)} \|^2_F$. Then, simultaneously for all even $i$, for any rank-$k$ projection matrix $\X_i$ such that 
\begin{equation*}
\| (\II - \X_{i})\AA_{(i)}  \|^2_F \leq \| (\II - \X^*_{\AA_{(i)}}) \AA_{(i)}  \|^2_F + O(\epsilon)\|\AA_{(i)}\|_F^2
\end{equation*}
with probability at least $99/100$, it holds that 
\begin{equation*}
\| (\II - \X_{i}) \AA_{(i-1)}  \|^2_F \leq \| (\II - \X^*_{\AA_{(i-1)}}) \AA_{(i-1)} \|^2_F + O(\epsilon) \|\AA_{(i-1)}\|_F^2
\end{equation*}
Union bounding over the odd and even events completes the proof.
\end{proof}

Using the above structural guarantees, we show that Algorithm \ref{alg:final_sublinear} indeed achieves an additive error guarantee for low-rank approximation. Intuitively, at each recursive step we either approximately preserve all rank-$k$ row or column projections. We finally obtain a matrix that is independent of $m$ and $n$ and therefore we can compute its SVD. We then begin with the rank-$k$ matrix output by the SVD and critically rely on Lemma \ref{lem:pcp_useful} to switch between finding approximately optimal projections in the row and column space while climb back up the recursive stack.  
  
\begin{lemma}(Additive Error Guarantee.)
\label{lem:guarantee}
Let $\AA \in \mathbb{R}^{m \times n}$ be a matrix such that it satisfies approximate triangle inequality. Then, for any $\epsilon > 0$, integer $k$, and a small constant $ \gamma >0$, with probability at least 9/10, Algorithm $\ref{alg:final_sublinear}$ outputs a rank-$k$ matrix $\MM\NN^T$ such that $\MM \in \mathbf{R}^{m\times k}$, $\NN \in \mathbf{R}^{n\times k}$ and
\[\| \AA - \MM \NN^T \|^2_{F} \leq  \| \AA - \AA_k\|^2_{F} + O(\epsilon) \| \AA \|^2_F\]
\end{lemma}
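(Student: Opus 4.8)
The plan is to set up the projection-cost-preserving (\emph{pcp}) bookkeeping once and for all, then induct \emph{up} the recursion starting from an exact SVD at the bottom. First I would apply Lemma~\ref{lem:rescaling} at every one of the $2r=O(1/\gamma)$ levels, so that despite the rescalings the row/column norm estimators succeed on each $\AA_{(i)}$ and hence the column/row \emph{pcp} constructions of Theorem~\ref{thm:pcp_main_col} and Corollary~\ref{thm:pcp_main_row} apply; and I would invoke Lemma~\ref{lem:recursive_pcp} with failure parameter $1/\mathrm{poly}(mn)$, so that simultaneously for all $i\in[2r]$, both (a) the \emph{pcp} cost identity holds for every rank-$k$ projection — in the column space when $i$ is odd, the row space when $i$ is even — up to additive error $\epsilon\norm{\AA_{(i-1)}}_F^2$, and (b) any rank-$k$ projection that is approximately optimal for $\AA_{(i)}$ remains approximately optimal for $\AA_{(i-1)}$. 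Applying (a) with the zero projection gives $\norm{\AA_{(i)}}_F^2=(1\pm\epsilon)\norm{\AA_{(i-1)}}_F^2$, so for $\epsilon$ below an absolute constant and $\gamma$ fixed, all the quantities $\norm{\AA_{(i)}}_F^2$ and $\norm{\AA_{(i)}-(\AA_{(i)})_k}_F^2$ stay within a constant factor of $\norm{\AA}_F^2$. For the base case, $\AA_{(2r)}$ is $\mathrm{poly}(k/\epsilon)\times\mathrm{poly}(k/\epsilon)$, so its truncated SVD $\U_{2r}\Sigma_{2r}\V_{2r}^T$ is computed exactly and the span of $\V_{2r}^T$ realizes $\min_{\X}\norm{\AA_{(2r)}(\II-\X)}_F^2$ over rank-$k$ projections $\X$.

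\textbf{The climb.} I maintain the invariant that at level $i$ we hold a rank-$k$ orthogonal projection $\Pi_i$ — onto a subspace of the column space of $\AA_{(i)}$ if $i$ is odd, onto a subspace of its row space if $i$ is even — of cost at most $\mathrm{OPT}_i+\delta_i\norm{\AA_{(i)}}_F^2$, where $\mathrm{OPT}_i=\norm{\AA_{(i)}-(\AA_{(i)})_k}_F^2$, with $\delta_{2r}=0$. To pass from level $i$ to level $i-1$: by part (b) of Lemma~\ref{lem:recursive_pcp} (equivalently, by applying part (a) to $\Pi_i$ and to the optimizer of $\AA_{(i-1)}$, together with $\norm{\AA_{(i)}}_F^2=(1\pm\epsilon)\norm{\AA_{(i-1)}}_F^2$), $\Pi_i$ is approximately optimal for $\AA_{(i-1)}$ in the same space, its cost there being at most $\mathrm{OPT}_{i-1}+\bigl((1+O(\epsilon))\delta_i+O(\epsilon)\bigr)\norm{\AA_{(i-1)}}_F^2$. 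But the \emph{pcp} linking $\AA_{(i-1)}$ to $\AA_{(i-2)}$ preserves the \emph{other} type of projection, so I rotate $\Pi_i$ through a regression: writing $\Pi_i=\mathbf{Z}\mathbf{Z}^T$ for $\mathbf{Z}$ with $k$ orthonormal columns spanning its range, I solve — sketched to $\mathrm{poly}(k/\epsilon)$ dimension by Theorem~\ref{thm:clarkson_woodruff_regression} — either $\min_{\X}\norm{\AA_{(i-1)}-\X\mathbf{Z}^T}_F^2$ (converting a row-space projection to a column-space one) or $\min_{\X}\norm{\AA_{(i-1)}-\mathbf{Z}\X}_F^2$ (the reverse). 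The sketched minimizer $\hat{\X}$ has objective at most $(1+\epsilon)$ times the true optimum, which equals $(1+\epsilon)$ times the cost of $\Pi_i$ on $\AA_{(i-1)}$. I factor the rank-$k$ matrix $\hat{\X}\mathbf{Z}^T$ (resp.\ $\mathbf{Z}\hat{\X}$) and let $\Pi_{i-1}$ be the orthogonal projection onto its column space (resp.\ row space); since projecting onto a subspace that \emph{contains} $\hat{\X}\mathbf{Z}^T$ (resp.\ $\mathbf{Z}\hat{\X}$) can only decrease the residual, the cost of $\Pi_{i-1}$ on $\AA_{(i-1)}$ is at most $\norm{\AA_{(i-1)}-\hat{\X}\mathbf{Z}^T}_F^2\le(1+\epsilon)\bigl(\mathrm{OPT}_{i-1}+((1+O(\epsilon))\delta_i+O(\epsilon))\norm{\AA_{(i-1)}}_F^2\bigr)$, which, using $\mathrm{OPT}_{i-1}\le\norm{\AA_{(i-1)}}_F^2$, is $\mathrm{OPT}_{i-1}+\delta_{i-1}\norm{\AA_{(i-1)}}_F^2$ with $\delta_{i-1}\le(1+O(\epsilon))\delta_i+O(\epsilon)$. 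This is the invariant at level $i-1$; iterating it through the $O(1/\gamma)$ levels and using that $\gamma$ is constant gives $\delta_i=O(\epsilon)$ for all $i$.

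\textbf{Conclusion.} At $i=1$ the invariant gives a column-space projection $\U_1\U_1^T$ with cost at most $\mathrm{OPT}_1+O(\epsilon)\norm{\AA_{(1)}}_F^2$; one more application of Lemma~\ref{lem:recursive_pcp} (the column \emph{pcp} $\AA_{(1)}=\AA\S_1$ of $\AA=\AA_{(0)}$) yields $\norm{\AA-\U_1\U_1^T\AA}_F^2\le\norm{\AA-\AA_k}_F^2+O(\epsilon)\norm{\AA}_F^2$, and the final sketched regression $\min_{\X}\norm{\AA-\U_1\X}_F^2$ of Theorem~\ref{thm:clarkson_woodruff_regression} returns $\X_{\AA}$ with $\norm{\AA-\U_1\X_{\AA}}_F^2\le(1+\epsilon)\norm{\AA-\U_1\U_1^T\AA}_F^2\le\norm{\AA-\AA_k}_F^2+O(\epsilon)\norm{\AA}_F^2$. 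Outputting $\MM=\U_1\in\mathbb{R}^{m\times k}$, $\NN^T=\X_{\AA}\in\mathbb{R}^{k\times n}$ gives a rank-$k$ matrix meeting the claim after rescaling $\epsilon$. For the probability, each of the $O(1/\gamma)$ norm estimations (Lemma~\ref{lem:rescaling}, median-boosted), the $O(1/\gamma)$ \emph{pcp} constructions (failure $1/\mathrm{poly}(mn)$), and the $O(1/\gamma)$ sketched regressions (Theorem~\ref{thm:clarkson_woodruff_regression}, boosted by $O(1)$ repetitions keeping the best) fails with probability $o(\gamma)$, while Lemma~\ref{lem:recursive_pcp} already bundles its guarantee at $98/100$; a union bound leaves success probability at least $9/10$.

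\textbf{Main obstacle.} The delicate point is the forced alternation between row-space and column-space optimal projections: a column \emph{pcp} only preserves the cost of projections on the column space, and vice versa, so after transferring a near-optimal projection down one level I must change its type before descending further, and the only sublinear-time mechanism for this is an \emph{approximately} solved regression, which pays a multiplicative $(1+\epsilon)$ on top of the additive \emph{pcp} error at each of the $2r$ levels. Showing that $(1+\epsilon)^{2r}$ together with the $2r$ additive terms $\epsilon\norm{\AA_{(i)}}_F^2$ telescopes to only $O(\epsilon)\norm{\AA}_F^2$ is exactly where it is essential that the recursion has only constantly many levels and that every intermediate matrix has Frobenius norm within a constant factor of $\norm{\AA}_F^2$.
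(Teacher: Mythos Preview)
Your proposal is correct and follows essentially the same approach as the paper's proof: start from the exact SVD of $\AA_{(2r)}$, then climb up the recursion alternating between row-space and column-space projections, at each level transferring the near-optimal projection via the \emph{pcp} guarantee (Lemma~\ref{lem:recursive_pcp}) and rotating its type by an approximately solved, leverage-score-sketched regression (Theorem~\ref{thm:clarkson_woodruff_regression}), finishing with the regression against $\U_1$ on $\AA$ itself. Your treatment is in fact somewhat more explicit than the paper's in two places: you track the additive error through the recursion $\delta_{i-1}\le(1+O(\epsilon))\delta_i+O(\epsilon)$ and close it using that $2r=O(1/\gamma)$ is constant, and you justify $\norm{\AA_{(i)}}_F^2=\Theta(\norm{\AA}_F^2)$ directly from the \emph{pcp} identity, whereas the paper leaves both of these implicit.
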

\begin{proof}
By Lemma \ref{lem:recursive_pcp}, we know that approximately optimal rank-$k$ projection matrix for $\AA_{(i)}$ is an approximately optimal rank-$k$ matrix for $\AA_{(i-1)}$ up to an additive error term of $\epsilon\| \AA_{(i-1)}\|^2_F$. Let $2r = O\left(\frac{1}{\gamma}\right)$ be the number of recursive calls. Concretely, since $\AA_{(2r)}$ is a \textit{row pcp} of $\AA_{(2r-1)}$, we know that for all rank-$k$ projections $\X_{2r}$, with probability at least $98/100$, 
\begin{equation}\label{eqn:level2r}
    \| \AA_{(2r)} (\II - \X_{2r}) \|^2_F \leq \| \AA_{(2r-1)} (\II - \P_{2r}) \|^2_F + \epsilon\| \AA_{(2r-1)} \|^2_F
\end{equation}
Since $\AA_{(2r)}$ is a small matrix of dimension, we can afford to compute $\texttt{SVD}(\AA_{(2r)})$ (we analyze this runtime below). Let $\U_{2r} \D_{2r} \V^T_{2r}$ be the truncated \texttt{SVD}, containing the top $k$ singular values and setting the rest to $0$. Thus, we know that $\V_{2r} \V^T_{2r}$ is the optimal projection matrix for $\AA_{(2r)}$, and by Lemma \ref{lem:recursive_pcp},
\begin{equation}
\label{eqn:level2r_2}
    \| \AA_{(2r-1)}(\II - \V_{2r} \V^T_{2r} ) \|^2_F \leq \| \AA_{(2r-1)} - (\AA_{(2r-1)})_k \|^2_F + \epsilon\| \AA_{(2r-1)} \|^2_F
\end{equation}

As discussed in the analysis of Algorithm \ref{alg:first_sublinear}, observe that computing the \texttt{SVD} or even running an input-sparsity time algorithm as we recurse back up to the top becomes prohibitively expensive and no longer sublinear. 
Therefore, we follow the previous strategy of setting up a regression problem, sketching it and solving it approximately using leverage scores.
We observe that an approximately optimal solution for $\AA_{(2r-1)}$ lies in the row space of $\V^T_{2r}$ and set up the following regression problem: 
\begin{equation}
\min_{\X} \|\AA_{(2r-1)} - \X \V^T_{2r} \|^2_F
\end{equation}
Though this problem is small and independent of $m$ and $n$, as we recurse up, the regression problems grow larger and larger. Therefore, we sketch it using the leverage scores of $\V^T_{2r}$. Note, since $\V^T_{2r}$ is orthonormal, the leverage scores are precomputed. With probability at least $98/100$, we compute $\X_{\AA_{(2r-1)}} = \argmin_{\X}\| \AA_{(2r-1)}\E_{2r-1} - \X \V^T_{2r}\E_{2r-1}\|$, where $\E_{2r-1}$ is a leverage score sketching matrix with poly$\left(\frac{k}{\epsilon}\right)$ columns. Given the sketching guarantee of Theorem \ref{thm:clarkson_woodruff_regression}, 
\begin{equation}
\label{eqn:bottom_level}
\begin{split}
\|\AA_{(2r-1)} - \X_{\AA_{(2r-1)}}\V^T_{2r} \|^2_F & \leq (1+\epsilon) \min_{\X} \| \AA_{(2r-1)} - \X\V^T_{2r} \|^2_F \\
& \leq (1+\epsilon) \| \AA_{(2r-1)} -  \AA_{(2r-1)} \V_{2r}\V^T_{2r} \|^2_F \\
& \leq \| \AA_{(2r-1)} - (\AA_{(2r-1)})_k \|^2_F + O(\epsilon)\|\AA_{(2r-1)}\|^2_F
\end{split}
\end{equation}
where the last inequality follows from equation \ref{eqn:level2r_2}.Therefore, $\X_{\AA_{(2r-1)}}\V^T_{2r}$ has rank at most $k$ and is an approximate optimal solution for $\AA_{(2r-1)}$. Applying Lemma \ref{lem:recursive_pcp} again, we know that projecting onto the column space of $\X_{\AA_{(2r-1)}}\V^T_{2r}$ is an approximately optimal solution for $\AA_{(2r-2)}$. Formally, let $\X_{\AA_{(2r-1)}}\V^T_{2r} = \U_{2r-1} \V_{2r-1}$ such that $\U_{2r-1}$ has orthonormal columns. It follows from equation \ref{eqn:bottom_level} that $\|\AA_{(2r-1)} - \U_{2r-1} \V_{2r-1} \|^2_F =  \| \AA_{(2r-1)} - (\AA_{(2r-1)})_k \|^2_F  \pm \epsilon \|\AA_{(2r-1)}\|^2_F$. Therefore, 
\begin{equation}
 \| (\II - \U_{2r-1} \U^T_{2r-1} ) \AA_{(2r-2)} \|^2_F \leq \| \AA_{(2r-2)} - (\AA_{(2r-2)})_k \|^2_F + \epsilon\| \AA_{(2r-2)} \|^2_F
\end{equation}
We observe that a good solution for $\AA_{(2r-2)}$ exists in the column space of $\U_{2r-1}$, and set up the following regression problem:
\begin{equation}
\min_{\X} \|\AA_{(2r-2)} -  \U_{2r-1}\X \|^2_F
\end{equation}
Again, we sketch this regression problem using the leverage scores of $\U_{2r-1}$. Recall, $\U_{2r-1}$ has orthonormal rows and thus the leverage scores are precomputed. We can then compute $\X_{\AA_{(2r-2)}} = \argmin_{\X} \|\E_{2r-2}\AA_{(2r-2)} -  \E_{2r-2}\U_{2r-1}\X \|^2_F $, where $\E_{2r-2}$ is a leverage score sketching matrix with poly$\left(\frac{k}{\epsilon}\right)$ rows. Given the sketching guarantee of leverage score sampling in Theorem \ref{thm:clarkson_woodruff_regression},
\begin{equation}
\label{eqn:2r_1_level}
\begin{split}
\|\AA_{(2r-2)} - \U_{2r-1}\X_{\AA_{(2r-2)}} \|^2_F & \leq (1+\epsilon) \min_{\X} \| \AA_{(2r-2)} - \U_{2r-1}\X\|^2_F \\
& \leq (1+\epsilon) \| \AA_{(2r-2)} -  \U_{2r-1} \U_{2r-1}^T  \AA_{(2r-2)}  \|^2_F \\
& \leq \| \AA_{(2r-2)} - (\AA_{(2r-2)})_k \|^2_F + O(\epsilon)\|\AA_{(2r-2)}\|^2_F
\end{split}
\end{equation}
We observe that $\U_{2r-1}\X_{\AA_{(2r-2)}}$ is a rank-$k$ matrix, written in factored from, that is an approximate solution for $\AA_{(2r-2)}$. Using $\U_{2r-1}$, $\X_{\AA_{(2r-2)}}$, we can repeat the above analysis $r$ times all the way up the recursion stack. Note, the last level of the analysis is simply the one presented in the proof of Theorem \ref{thm:first_sublinear}. Let $\U_{1}$, $\X_{\AA}$ be the solution obtain from solving the final regression problem. Note, union bounding over all the random events above, we obtain $\U_{1}$, $\X_{\AA}$ with probability at least $9/10$. Setting $\MM = \U_{1}$ and $\NN^T = \X_{\AA}$ finishes the proof and satisfied the \textit{additive error }guarantee for $\AA$.
\end{proof}
Next, we show the running time of Algorithm \ref{alg:final_sublinear} is sublinear in $m$ and $n$ for an appropriate setting of $b_1$, $b_2$ and $\gamma$. 
\begin{lemma}
\label{lem:runtime}
Let $\AA \in \mathbb{R}^{m \times n}$ be a matrix such that it satisfies approximate triangle inequality. Then, for any $\epsilon > 0$, integer $k$, and a small constant $\gamma > 0$, Algorithm $\ref{alg:final_sublinear}$ runs in  $\widetilde{O}\left( \left(m^{1+\gamma} + n^{1+\gamma}\right)\textrm{poly}(\frac{k}{\epsilon})\right)$ time. 
\end{lemma}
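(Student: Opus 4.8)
The plan is to bound separately the three phases of Algorithm \ref{alg:final_sublinear}: the recursive \emph{descent} that builds $\AA_{(1)},\AA_{(2)},\ldots,\AA_{(2r)}$; the \emph{base case} computing $\texttt{SVD}(\AA_{(2r)})$; and the \emph{ascent} that solves $2r$ sketched regression problems to lift a solution back up. First I would pin down the dimension recursion. An odd step divides the column count by $b_2=n^{\gamma}$ up to a $\widetilde{\Theta}(\textrm{poly}(k/\epsilon))$ factor, and an even step divides the row count by $b_1=m^{\gamma}$ up to a $\widetilde{\Theta}(\textrm{poly}(k/\epsilon))$ factor, so after the $r$ column-reductions and $r$ row-reductions we have $s_{2r}=n/b_2^{\,r}\cdot\widetilde{\Theta}(\textrm{poly}(k/\epsilon))^{r}$ and $t_{2r}=m/b_1^{\,r}\cdot\widetilde{\Theta}(\textrm{poly}(k/\epsilon))^{r}$. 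Choosing $2r=\Theta(1/\gamma)$ large enough that $r\gamma\ge 1$ forces $s_{2r},t_{2r}\le\textrm{poly}(k/\epsilon)$; crucially, because $\gamma$ is a fixed constant the factor $\textrm{poly}(k/\epsilon)^{O(1/\gamma)}$ is still $\textrm{poly}(k/\epsilon)$, and the weight-class overhead from Lemma \ref{lem:rescaling}, an $O(1/\gamma)$-fold product of $O(\epsilon^{-1}\log(mn))$ factors, is $\textrm{polylog}(mn)\cdot\textrm{poly}(1/\epsilon)$, absorbed into $\widetilde{O}$. Hence $\AA_{(2r)}$ has both dimensions $\textrm{poly}(k/\epsilon)$, independent of $m$ and $n$.

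Next I would bound the descent. At level $i$ the current matrix $\AA_{(i-1)}$ has at most $m$ rows and $n$ columns, and by Lemma \ref{lem:rescaling} (which already absorbs the weight-class decomposition) estimating its column norms when $i$ is odd, or its row norms when $i$ is even, costs $\widetilde{O}\big((b_2 s_{i-1}+t_{i-1})\,\textrm{poly}(k/\epsilon)\big)$, respectively $\widetilde{O}\big((b_1 t_{i-1}+s_{i-1})\,\textrm{poly}(k/\epsilon)\big)$. Since $b_2 s_{i-1}\le n^{\gamma}\cdot n=n^{1+\gamma}$, $b_1 t_{i-1}\le m^{1+\gamma}$, $t_{i-1}\le m\le m^{1+\gamma}$ and $s_{i-1}\le n\le n^{1+\gamma}$, each level is $\widetilde{O}\big((m^{1+\gamma}+n^{1+\gamma})\,\textrm{poly}(k/\epsilon)\big)$; drawing the $s_i$ (or $t_i$) samples and recording the rescalings is dominated by this, and we never materialize a full $\AA_{(i)}$ — each queried entry is resolved to an entry of $\AA$ through $O(1/\gamma)$ index remappings. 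Summing over the $2r=O(1/\gamma)=O(1)$ levels keeps the total at $\widetilde{O}\big((m^{1+\gamma}+n^{1+\gamma})\,\textrm{poly}(k/\epsilon)\big)$. The base case is then immediate: reading the $\textrm{poly}(k/\epsilon)^2$ entries of $\AA_{(2r)}$ (each traced back to $\AA$ in $O(1/\gamma)$ steps) and computing its truncated SVD both take $\textrm{poly}(k/\epsilon)$ time.

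For the ascent, at each of the $2r$ levels we form a sketched regression problem $\min_{\X}\|\AA_{(i)}\E-\X\V^{T}\E\|_F^2$ or $\min_{\X}\|\E\AA_{(i)}-\E\U\X\|_F^2$, where $\E$ has $\textrm{poly}(k/\epsilon)$ columns or rows (Theorem \ref{thm:clarkson_woodruff_regression}). Forming $\AA_{(i)}\E$ or $\E\AA_{(i)}$ reads $\textrm{poly}(k/\epsilon)$ entire columns or rows of $\AA_{(i)}$, hence at most $(m+n)\,\textrm{poly}(k/\epsilon)$ entries of $\AA$; by Theorem \ref{thm:clarkson_woodruff_regression} the solve itself — including the final regression against $\AA=\AA_{(0)}$, the largest one — runs in $O\big((m\log m+n)\,\textrm{poly}(k/\epsilon)\big)$, and the orthonormalization (a QR of a tall $(\le m)\times k$ or $k\times(\le n)$ factor) and the book-keeping to pass $\U_{i}$, $\X_{\AA_{(i-1)}}$ to the next level are dominated by this. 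Over $2r=O(1)$ levels this is again $\widetilde{O}\big((m^{1+\gamma}+n^{1+\gamma})\,\textrm{poly}(k/\epsilon)\big)$, and adding the three phases gives the claim. The step I expect to take the most care is making the previous two paragraphs' implicit invariant rigorous: no intermediate matrix is ever read in full. A naive implementation that materialized, say, $\AA_{(2)}=\T_2\AA_{(1)}$ would read $t_2 s_1\approx m^{1-\gamma}n\,\textrm{poly}(k/\epsilon)$ entries, which is outside the target budget when $m\approx n$, so one must argue that every descent step only samples $O(b_1{+}b_2)$ entries per row/column, every ascent step only reads $\textrm{poly}(k/\epsilon)$ entire rows/columns, and that the compounded $\textrm{poly}(k/\epsilon)$ and weight-class factors — each an $O(1/\gamma)$-fold product — remain polynomial precisely because $\gamma$ is constant.
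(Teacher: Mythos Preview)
Your proposal is correct and follows essentially the same three-phase decomposition (descent to build the $\AA_{(i)}$'s, base-case SVD on $\AA_{(2r)}$, ascent via $2r$ sketched regressions) as the paper's proof, with the same parameter choices $b_1,b_2\in\{m^{\gamma},n^{\gamma}\}$ and $2r=O(1/\gamma)$ and the same observation that every level is dominated by the top one. You are in fact more explicit than the paper on two points --- verifying that $s_{2r},t_{2r}$ drop to $\textrm{poly}(k/\epsilon)$ once $r\gamma\ge 1$, and insisting on the invariant that no intermediate $\AA_{(i)}$ is ever materialized in full --- both of which the paper's proof leaves implicit.
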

\begin{proof}
Recall, from the running time analysis of Algorithm \ref{alg:first_sublinear}, for a $t_i \times s_i$ matrix $\AA_{(i)}$, we a construct \textit{column \textrm{and} row pcp} in $O\left( \frac{t_i\log(t_i)}{b_1}\text{poly}(\frac{k}{\epsilon}) + b_1s_i + t_i \right)$ and $O\left(\frac{s_i \log(s_i)}{b_2}\textrm{poly}(\frac{k}{\epsilon}) + (b_2t_i + s_i )\right)$ respectively. Since $i \in [2r]$ and $2r = O(1/\gamma)$, we can compute $\AA_{(2r)}$ in $O\left(\left( b_1n + b_2m\right)\textrm{poly}(\frac{k\log(mn)}{\epsilon})\right)$, since the running time is dominated by sampling $b_1$ entries in each column and $b_2$ entries in each row of the input matrix $\AA$, at the top level. Note, $\AA_{(2r)}$ is a $t_{2r} \times s_{2r}$ matrix, where $t_{2r} = \frac{m}{b_2^{1/\gamma}}\textrm{poly}(\frac{k\log(m)}{\epsilon})$ and $s_{2r}=\frac{n}{b_1^{1/\gamma}}\textrm{poly}(\frac{k\log(n)}{\epsilon})$.
We can compute the SVD of $\AA_{(2r)}$ in $O\left( \left(\frac{nm}{(b_1 b_2)^{1/\gamma}}\right)^2 \textrm{poly}(\frac{k\log(m)\log(m)}{\epsilon}) \right)$. Next, we solve $2r$ regression problems by sketching them as we recurse back up. We again upper bound each recursive step by the running time of the top level. Recall, from the analysis of Algorithm \ref{alg:first_sublinear}, the regression problem can be solved in $O\left( (m + n)\textrm{poly}(\frac{k\log(m)\log(n)}{\epsilon}) \right)$. Setting $b_1 = n^{\gamma}$ and $b_2 = m^{\gamma}$, the overall running time is $\widetilde{O}\left(\left(m^{1+\gamma} + n^{1+\gamma}\right)\textrm{poly}(\frac{k}{\epsilon})\right)$.   
\end{proof}
Lemma $\ref{lem:guarantee}$ together with Lemma $\ref{lem:runtime}$ imply our main theorem:

\begin{theorem}(Sublinear Low-Rank Approximation for Distance Matrices.)
\label{thm:sublinear_lra}
Let $\AA \in \mathbb{R}^{m \times n}$ be a matrix such that it satisfies approximate triangle inequality. Then, for any $\epsilon > 0$, integer $k$ and a small constant $\gamma > 0$, there exists an algorithm that accesses $O\left(m^{1+\gamma}+ n^{1+\gamma}\right)$ entries of $\AA$ and runs in time $\widetilde{O}\left( \left(m^{1+\gamma} + n^{1+\gamma}\right)\textrm{poly}(\frac{k}{\epsilon})\right)$ to output matrices $\MM \in \mathbb{R}^{m \times k}$ and $\NN \in \mathbb{R}^{n \times k}$ such that with probability at least $9/10$,
\[\| \AA - \MM \NN^T \|^2_{F} \leq  \| \AA - \AA_k\|^2_{F} + \epsilon \| \AA \|^2_F\]
\end{theorem}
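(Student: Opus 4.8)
The statement is the conjunction of Lemma \ref{lem:guarantee} (correctness) and Lemma \ref{lem:runtime} (running time and query complexity), both instantiated on Algorithm \ref{alg:final_sublinear}, so the plan is essentially to assemble those two lemmas and track the error and failure probability carefully. Concretely, I would run Algorithm \ref{alg:final_sublinear} on $\AA$ with recursion depth $2r = \Theta(1/\gamma)$, sampling parameters $b_1 = n^{\gamma}$ and $b_2 = m^{\gamma}$, and internal accuracy parameter $\epsilon' = c\epsilon$ for a small absolute constant $c$, and output $\MM = \U_1$, $\NN^T = \X_{\AA}$. The role of $\epsilon'$ is that Lemma \ref{lem:guarantee} only produces an $O(\epsilon')\|\AA\|_F^2$ additive term, so taking $c$ small enough makes this at most $\epsilon\|\AA\|_F^2$.

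For correctness I would follow the chain of reductions set up in the earlier sections. At the bottom, $\AA_{(2r)}$ has both dimensions equal to $\textrm{poly}(k\log(mn)/\epsilon)$ once $b_1 = n^{\gamma}$, $b_2 = m^{\gamma}$ and $2r = \Theta(1/\gamma)$, so its truncated SVD yields an exactly optimal rank-$k$ projection $\V_{2r}\V_{2r}^T$. I would then invoke the Recursive PCP Lemma (Lemma \ref{lem:recursive_pcp}) to push this projection up the recursion one level at a time, alternating the sketched regression problems $\min_{\X}\|\AA_{(i-1)}-\X\V_i^T\|_F$ and $\min_{\X}\|\AA_{(i-2)}-\U_{i-1}\X\|_F$ solved via Theorem \ref{thm:clarkson_woodruff_regression}; at each level an approximately optimal projection for $\AA_{(i)}$ remains approximately optimal for $\AA_{(i-1)}$ up to an additive $\epsilon\|\AA_{(i-1)}\|_F^2$. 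Since each $\AA_{(i)}$ is a pcp of $\AA_{(i-1)}$ we have $\expec{}{\|\AA_{(i)}\|_F^2} = \|\AA_{(i-1)}\|_F^2$, so by Markov $\|\AA_{(i)}\|_F^2 = O(\|\AA\|_F^2)$ for all $i$ simultaneously with constant probability, and the additive errors over the $2r = O(1/\gamma)$ levels sum to $O(\epsilon/\gamma)\|\AA\|_F^2$, which is $O(\epsilon)\|\AA\|_F^2$ since $\gamma$ is fixed. Lemma \ref{lem:rescaling} guarantees that all the intermediate row- and column-norm estimates defining the sampling distributions at each level are computable in sublinear time even though the subsampled matrices are not distance matrices, by re-bucketing into $O(\epsilon^{-1}\log(mn))$ weight classes at each of the $O(1/\gamma)$ levels. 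A union bound over the $O(1/\gamma)=O(1)$ invocations of Lemma \ref{lem:recursive_pcp} (with failure probability $1/n^c$ inside Theorem \ref{thm:pcp_main_col} and Corollary \ref{thm:pcp_main_row}), the $O(1/\gamma)$ regression solves, and the norm-estimation steps yields overall success probability at least $9/10$.

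For the running time and query complexity I would invoke Lemma \ref{lem:runtime}: building $\AA_{(2r)}$ is dominated by the top level, where we read $b_1 = n^{\gamma}$ entries of each of the $m$ columns and $b_2 = m^{\gamma}$ entries of each of the $n$ rows, for $\widetilde{O}\big((mn^{\gamma}+nm^{\gamma})\,\textrm{poly}(k/\epsilon)\big) = \widetilde{O}\big((m^{1+\gamma}+n^{1+\gamma})\,\textrm{poly}(k/\epsilon)\big)$ entries and time, using $mn^{\gamma}\le m^{1+\gamma}+n^{1+\gamma}$. The SVD of $\AA_{(2r)}$ costs $\textrm{poly}(k\log(mn)/\epsilon)$ since both its dimensions are that small, and each of the $O(1/\gamma)$ sketched regression problems on the way up costs $\widetilde{O}\big((m+n)\,\textrm{poly}(k/\epsilon)\big)$ by Theorem \ref{thm:clarkson_woodruff_regression}. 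Summing over the $O(1)$ levels leaves the top-level sampling cost dominant, and the number of distinct entries of $\AA$ touched is bounded by the same quantity. Rescaling $\epsilon$ by the constant $1/c$ then gives the stated form.

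Since all of Lemmas \ref{lem:rescaling}, \ref{lem:recursive_pcp}, \ref{lem:guarantee}, \ref{lem:runtime} are already available, the final theorem is almost immediate; the genuinely delicate point — and the one these lemmas are built to handle — is the interaction between recursing $\Theta(1/\gamma)$ times (each subsampled matrix losing distance-matrix structure, hence the per-level re-bucketing) and the requirement that the pcp guarantees hold \emph{simultaneously across all levels and for all rank-$k$ projections}. I expect this simultaneity — the ``for all rank-$k$ projections $\X_i$'' quantifier in Lemma \ref{lem:recursive_pcp} — to be the conceptual crux: we cannot know in advance which projection the bottom-level SVD will produce, so the pcp property must be worst-case over all rank-$k$ projections at every level, and the $\textrm{poly}(k)$ oversampling in Theorem \ref{thm:pcp_main_col} is precisely what purchases it while keeping the sample sizes $s_i,t_i$ shrinking geometrically.
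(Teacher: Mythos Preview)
Your proposal is correct and follows essentially the same approach as the paper: the paper's proof of Theorem \ref{thm:sublinear_lra} is literally the single sentence ``Lemma \ref{lem:guarantee} together with Lemma \ref{lem:runtime} imply our main theorem,'' and you have simply unpacked those two lemmas (plus the supporting Lemmas \ref{lem:rescaling} and \ref{lem:recursive_pcp}) in more detail. Your explicit bookkeeping on the error rescaling, the union bound over the $O(1/\gamma)$ levels, and the bound $mn^{\gamma}\le m^{1+\gamma}+n^{1+\gamma}$ is all consistent with what the paper does implicitly.
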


\section{Relative Error Guarantees}
In this section, we consider the \emph{relative error} guarantee \ref{eqn:relative} for distance matrices. We begin by showing a lower bound for any relative error approximation for distance matrices. We also preclude the possibility of a sublinear bi-criteria algorithm outputting a rank-poly$(k)$ matrix satisfying the rank-$k$ relative error guarantee. 

\begin{theorem}(Lower bound.)
Let $\AA$ be an $n \times n$ distance matrix. Let $\BB$ be a rank-$\textrm{poly}(k)$ matrix such that $\| \AA - \BB\|^2_F \leq c \| \AA - \AA_k \|$ for any constant $c > 1$. Then, any algorithm that outputs such a $\BB$ requires $\Omega(nnz(\AA))$ time.  
\end{theorem}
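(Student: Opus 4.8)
The plan is to prove the lower bound by reducing to a query‑complexity statement and invoking Yao's minimax principle: it suffices to exhibit a distribution over $n \times n$ distance matrices (realizable in the $\ell_\infty$ metric, with $\PP \neq \QQ$) on which every algorithm that reads $o(n^2)$ entries fails, with probability $1-o(1)$, to output a valid $\BB$. The hard distribution I would use plants a single ``spike'': fix a constant $C>0$, draw $(i^*,j^*)\in[n]\times[n]$ uniformly at random, and let $\AA=\AA^{(i^*,j^*)}$ be the matrix with $\AA_{i,j}=C$ for every $(i,j)\neq(i^*,j^*)$ and $\AA_{i^*,j^*}=2C$, i.e.\ $\AA = C(\mathbf{1}\mathbf{1}^T + e_{i^*}e_{j^*}^T)$. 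This $\AA$ has rank exactly $2$, so $\|\AA-\AA_2\|_F=0$, and it is a genuine $\ell_\infty$ distance matrix: place $p_i=(0,0)$ for $i\neq i^*$ and $p_{i^*}=(0,C)$, and $q_j=(C,0)$ for $j\neq j^*$ and $q_{j^*}=(0,-C)$; checking the four cases shows $\|p_i-q_j\|_\infty$ equals $C$ unless $(i,j)=(i^*,j^*)$, in which case it equals $2C$. (If one insists on distinct points, perturb each repeated point by an infinitesimal $\eta$, which changes $\|\AA-\AA_2\|_F^2$ by only $O(\eta^2 n^2)$; taking $\eta$ polynomially small in $1/n$ keeps $C^2$ much larger than $c\,\|\AA-\AA_2\|_F^2$, which is all the argument below needs.)

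Next I would argue that for this instance any admissible output must be $\AA$ itself. By hypothesis $\BB$ satisfies $\|\AA-\BB\|_F^2 \le c\,\|\AA-\AA_2\|_F^2 = 0$, and since the Frobenius norm is a norm this forces $\BB=\AA$, regardless of the rank allowed for $\BB$ and of the constant $c$ (this is precisely the bicriteria strengthening). Now fix a deterministic algorithm reading at most $T$ entries (Yao reduces the randomized case to this). Since every entry other than $(i^*,j^*)$ equals $C$, on the event that the algorithm never queries $(i^*,j^*)$ its transcript is the constant string $C,C,\dots$, so its query set $S$ is a fixed set of size $T$ and its output is a fixed matrix $\BB_0$ independent of $(i^*,j^*)$. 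As $(i^*,j^*)$ is uniform, $\Pr[(i^*,j^*)\in S]\le T/n^2$, and $\BB_0$ can equal $\AA^{(i^*,j^*)}$ for at most one value of $(i^*,j^*)$; hence the success probability is at most $T/n^2 + 1/n^2$. For this to be bounded below by a constant we need $T=\Omega(n^2)=\Omega(\mathrm{nnz}(\AA))$, since $\mathrm{nnz}(\AA)=n^2$ (all entries are positive). The same argument for $m\times n$ matrices gives $\Omega(mn)$, which is the theorem.

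The proof is short, so there is no deep obstacle; the points that need care are (i) verifying that the planted‑spike matrix is realized by an honest metric — the $\ell_\infty$ embedding on $\mathbb{R}^2$ above works, and one should spell out the four case distinctions (the triangle inequality is then automatic, being inherited from the metric); and (ii) the bookkeeping for the ``distinct points'' convention, where one must ensure the $O(\eta^2 n^2)$ slack from perturbing coincident points stays below the $\Theta(C^2)$ penalty incurred by a missed spike. One could alternatively phrase the heart of the argument information‑theoretically — the algorithm's view carries $o(1)$ mutual information about $(i^*,j^*)$ unless it spends $\Omega(n^2)$ queries — but the direct union‑bound argument above is cleanest.
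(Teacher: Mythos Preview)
Your proposal is correct and follows essentially the same approach as the paper: construct a rank-$2$ distance matrix in the $\ell_\infty$ metric with a single anomalous entry at a uniformly random location, so that $\|\AA-\AA_k\|_F=0$ forces $\BB=\AA$ and hence any algorithm must locate the hidden entry. The paper's construction differs only cosmetically --- it takes $\PP=\{e_1,\dots,e_n\}\subset\mathbb{R}^n$ and $\QQ$ consisting of $n-1$ copies of the origin together with one point $q=\pm e_i$ (random $i$ and sign), so the special entry is $0$ or $2$ rather than $2C$ --- whereas your two-dimensional $\ell_\infty$ embedding and explicit Yao-plus-union-bound argument are cleaner and more fully spelled out than the paper's one-paragraph sketch.
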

\begin{proof}
Let $P = \{ e_1, e_2 \ldots e_n \}$ be a set of $n$ standard unit vectors and $Q$ be a set of $n-1$ zero vectors along with one point $q$ such that $q = \pm e_i$, where $i$ and the sign are chosen uniformly at random. Let the underlying metric space be $\ell_\infty$-norm. Note, all but one pairwise distances in $\AA$ are $1$. Further, one entry in the $i^{th}$ row of $\AA$ is either $2$ or $0$. Note, $\AA$ is a rank-$2$ matrix and thus $\| \AA - \AA_k \|^2_F = 0$, for all $k >1$. Let $\BB$ be a rank-poly($k$) matrix that obtains any relative-error guarantee. Then, $\BB$ must exactly recover $\AA$ and therefore any algorithm needs to read all entries of $\AA$ in order to find the entry that is $0$ or $2$.    
\end{proof}

\subsection{Euclidean Distance Matrices}
We show that in the special case of Euclidean distances, when the entries correspond to {\it squared} distances, there exists a bi-criteria algorithm that outputs a rank-$(k+4)$ matrix satisfying the relative error rank-$k$ low-rank approximation guarantee. Note, here the point sets $P$ and $Q$ are identical. 

\begin{Frame}[\textbf{Algorithm \ref{alg:bi_criteria} : Bi-criteria Algorithm for Euclidean Matrices.}]
\label{alg:bi_criteria}
\textbf{Input:} A Euclidean Distance Matrix $\AA_{n \times n}$, integer $k$ and $\epsilon >0$.
\begin{enumerate}
	\item Let $\AA = \AA_1 + \AA_2 - 2\BB$ s.t. $\AA_1$ and $\AA_2$ are rank-$1$ matrices and $\BB$ is  a PSD Matrix. 
	\item Then, $\AA_1 = \mathbf{a}'_1 \mathbf{a}_1^T$ and $\AA_2 = \mathbf{a}'_2 \mathbf{a}_2^T$.
    \item Compute $\MM \NN^T $ by running the sublinear low-rank approximation algorithm from \cite{mw17} on $\BB$ with parameter $k+2$.
    \item Compute $\mathbf{V}$ an orthonormal basis for $\MM\NN^T$.  
	\item Let $\WW $ be $\mathbf{V}$ concatenated with $\mathbf{a}_1$ and $\mathbf{a}_2$. We denote $\WW $ as $[\mathbf{V}; \mathbf{a}_1, \mathbf{a}_2]$.
\end{enumerate}
\textbf{Output:} $\WW\WW^T$
\end{Frame}

Let $\AA$ be such a matrix for the point set $P$ s.t. $\AA_{i,j} = \| x_i - x_j\|^2_2 = \| x_i\|^2_2 + \| x_j\|^2_2 - 2 \langle x_i, x_j \rangle$. Then, we can write $\AA$ as a sum of three matrices $\AA_1$, $\AA_2$ and $\BB$ such that each entry in the $i^{th}$ row of $\AA_1$ is $\| x_i\|^2_2$, each entry in the $j^{th}$ column of $\AA_2$ is $\| x_j\|^2_2$ and $\BB$ is a Positive Semi-Definite (PSD) matrix, where $\BB_{i,j} = \langle x_i, x_j \rangle$. Therefore, we can represent $\AA$ as $\AA_1 + \AA_2 - 2\BB$. The main ingredient we use is the sublinear low-rank approximation of PSD matrices from \cite{mw17}.

\begin{theorem}(Musco-Woodruff Low-Rank Approximation.)
\label{thm:musco_woodruff}
Given a PSD matrix $\AA \in R^{n\times n}$, any $\epsilon<1$ and an integer $k$, there is an algorithm that runs in $O(n\textrm{poly}(\frac{k}{\epsilon}))$ and with probability at least $9/10$, outputs matrices $\MM, \NN \in \mathbb{R}^{n \times k}$ such that
\[\| \AA - \MM\NN^T \|^2_{F} \leq (1 +\epsilon) \| \AA - \AA_k\|^2_{F}\]
\end{theorem}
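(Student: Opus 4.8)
The plan is to follow the argument of \cite{mw17}, whose central insight is that for a PSD matrix the \emph{ridge leverage scores} are dominated by the normalized diagonal entries, and the diagonal can be read in $O(n)$ time. Fix $\lambda = \frac1k\| \AA - \AA_k\|^2_F$ and write $\AA = \BB\BB^T$ with $\BB = \AA^{1/2}$; the $\lambda$-ridge leverage score of coordinate $i$ is $\tau_i = \mathbf{e}_i^T \BB(\BB^T\BB + \lambda\II)^{-1}\BB^T\mathbf{e}_i$. Two standard facts are that $\sum_i \tau_i \le 2k$, and that if one samples $\textrm{poly}(k/\epsilon)$ columns of $\AA$ with probability $\propto \tau_i$ (rescaled), then the column span of the resulting matrix $\AA\S$ contains a rank-$k$ subspace capturing $\AA$ to within $(1+\epsilon)\| \AA - \AA_k\|^2_F$ --- this is the relative-error projection-cost-preservation used throughout \cite{cmm17}, and it is the leverage-score analogue of Corollary~\ref{thm:pcp_main_row}.

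First I would record the key structural bound: since $\BB^T\BB \succeq 0$ we have $(\BB^T\BB + \lambda\II)^{-1} \preceq \lambda^{-1}\II$, hence $\tau_i \le \min\{1,\ \AA_{ii}/\lambda\}$, and the diagonal $(\AA_{ii})_i$ is readable in $O(n)$ time. Sampling proportional to the diagonal therefore over-samples relative to the true ridge-leverage distribution, but only by a $\textrm{tr}(\AA)/\lambda$ factor, which can be polynomially large --- so a single diagonal-based sample need not be of size $\textrm{poly}(k/\epsilon)$. To fix this I would use the recursive/iterative refinement of \cite{mw17} (repeated-halving ridge leverage score sampling): maintain a rescaled column sample that is an $O(1)$-factor spectral approximation of $\AA + \lambda\II$; use it to compute sharpened estimates of all $n$ ridge leverage scores, which requires only the entries $\AA_{i,j}$ with $i\in[n]$ and $j$ in the current sample, i.e.\ $n\cdot(\text{sample size})$ entries; then resample, shrinking the sample geometrically. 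After $O(\log n)$ rounds the sample has $\textrm{poly}(k/\epsilon)$ columns, the number of entries read and the running time are $n\cdot\textrm{poly}(k/\epsilon)$. Since $\lambda$ is unknown, I would run this over a geometric grid of guesses for $\| \AA - \AA_k\|^2_F \in (0, \|\AA\|^2_F]$, using $\sum_i \tau_i^\lambda \le 2k$ to certify a valid guess.

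Given the final sampling matrix $\S$ with $c = \textrm{poly}(k/\epsilon)$ columns, I would read the $n\times c$ block $\AA\S$ of sampled columns, the $c\times n$ block $\S^T\AA$ of sampled rows (valid because $\AA$ is symmetric), and the $c\times c$ core $\S^T\AA\S$ --- $n\cdot\textrm{poly}(k/\epsilon)$ entries total. Form the Nystr\"om approximation $\widehat{\AA} = \AA\S(\S^T\AA\S)^{+}\S^T\AA$; writing $\widehat\AA = \mathbf{Z}\mathbf{Z}^T$ with $\mathbf{Z} = \AA\S(\S^T\AA\S)^{+/2}$ an $n\times c$ matrix, its thin SVD is computed in $n\cdot\textrm{poly}(k/\epsilon)$ time, and I output $\MM\NN^T = \widehat{\AA}_k$, the top-$k$ part. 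Correctness is the Nystr\"om analysis of \cite{mw17}: ridge-leverage sampling on both sides guarantees $\| \AA - \widehat{\AA}_k\|^2_F \le (1+\epsilon)\| \AA - \AA_k\|^2_F$ after rescaling $\epsilon$, and a union bound over the $O(\log n)$ refinement rounds and the grid of $\lambda$'s keeps the failure probability below $1/10$.

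The main obstacle --- and the substance of \cite{mw17} --- is precisely the middle step: estimating the ridge leverage scores to within a constant factor \emph{without} reading all of $\AA$, which is what forces both the diagonal upper bound and the recursive geometric shrinking of the sample, together with the search for the correct regularization $\lambda$. Once a $\textrm{poly}(k/\epsilon)$-column sample is in hand, the remaining projection-cost/Nystr\"om analysis and the linear-algebraic extraction of the rank-$k$ factors are routine.
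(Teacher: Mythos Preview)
The paper does not prove this theorem at all: it is stated purely as a citation of \cite{mw17} and used as a black box in the bicriteria algorithm for Euclidean distance matrices. So there is no ``paper's own proof'' to compare against; your proposal is essentially a sketch of the argument in \cite{mw17} itself.

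As such a sketch it is broadly faithful to the Musco--Woodruff approach --- the three pillars are indeed (i) the diagonal bound $\tau_i \le \AA_{ii}/\lambda$, which lets one read off usable over-estimates of the ridge leverage scores in $O(n)$ time, (ii) a recursive refinement that shrinks the sample to $\textrm{poly}(k/\epsilon)$ columns while touching only $n\cdot(\text{sample size})$ entries per round, and (iii) a Nystr\"om-style extraction of the rank-$k$ factors from the final column sample. One technical point to tighten: your choice $\lambda = \tfrac{1}{k}\|\AA-\AA_k\|_F^2$ does not give $\sum_i \tau_i \le 2k$. Writing $\AA = \BB\BB^T$ with $\BB = \AA^{1/2}$, the ridge leverage score sum is $\sum_i \lambda_i(\AA)/(\lambda_i(\AA)+\lambda)$, and the standard bound $\le 2k$ requires $\lambda$ on the order of $\tfrac{1}{k}\sum_{i>k}\lambda_i(\AA) = \tfrac{1}{k}\,\textrm{tr}(\AA-\AA_k)$ (i.e.\ the \emph{nuclear}-norm tail over $k$), not the Frobenius tail of $\AA$. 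This affects which quantity you grid-search over, but not the overall structure or running time of the argument.
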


We now show that there exists an algorithm that outputs the description of a rank-$(k+4)$ matrix $\AA \WW\WW^T$ in sublinear time such it satisfies the relative-error rank-$k$ low rank approximation guarantee.

\begin{lemma}
Let $\AA$ be a Euclidean Distance matrix. Then, for any $\epsilon>0$ and integer $k$, with probability at least $9/10$, Algorithm $\ref{alg:bi_criteria}$ outputs a rank $(k+4)$ matrix $ \WW\WW^T$ such that \[\| \AA-\AA \WW\WW^T\|_F \leq (1+\epsilon)\| \AA - \AA_k\|_F\]
where $\AA_k$ is the best rank-k approximation to $\AA$. Further, Algorithm $\ref{alg:bi_criteria}$ runs in $O(n\textrm{poly}(\frac{k}{\epsilon}))$. 
\end{lemma}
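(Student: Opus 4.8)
\medskip
\noindent\textbf{Proof plan.}
The plan is to reduce the problem to low-rank approximation of the PSD part $\BB$ and invoke Theorem~\ref{thm:musco_woodruff}. First I would observe that although Algorithm~\ref{alg:bi_criteria} is phrased in terms of the underlying points $x_i$, these are never actually needed: translating so that $x_1 = 0$, we have $\|x_i\|_2^2 = \AA_{i,1}$ and $\BB_{i,j} = \langle x_i,x_j\rangle = \tfrac12(\AA_{i,1}+\AA_{j,1}-\AA_{i,j})$, which is a Gram matrix and hence PSD. Setting $\AA_1 := \AA_{*,1}\mathbf{1}^T$ and $\AA_2 := \mathbf{1}\AA_{*,1}^T$ (each rank $1$), one checks directly that $\AA = \AA_1 + \AA_2 - 2\BB$, so $\mathbf{a}_1 = \mathbf{1}$ and $\mathbf{a}_2 = \AA_{*,1}$, both obtained in $O(n)$ time by reading the first column of $\AA$; moreover any entry query to $\BB$ costs only $O(1)$ queries to $\AA$, so $\BB$ can be fed to the sublinear PSD algorithm.

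The crux is the inequality $\|\BB - \BB_{k+2}\|_F \le \tfrac12\|\AA - \AA_k\|_F$. This holds because $\tfrac12(\AA_1 + \AA_2 - \AA_k)$ has rank at most $k+2$ while $\BB - \tfrac12(\AA_1 + \AA_2 - \AA_k) = \tfrac12(\AA_k - \AA)$, so optimality of $\BB_{k+2}$ gives the bound. Running Theorem~\ref{thm:musco_woodruff} on $\BB$ with rank parameter $k+2$ then yields $\MM\NN^T$ with $\|\BB - \MM\NN^T\|_F^2 \le (1+\epsilon)\|\BB - \BB_{k+2}\|_F^2 \le \tfrac{1+\epsilon}{4}\|\AA - \AA_k\|_F^2$. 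Letting $\mathbf{V}$ be an orthonormal basis for the column span of $\MM\NN^T$ (dimension at most $k+2$), every column of $\MM\NN^T$ lies in $\operatorname{span}(\mathbf{V})$, so $\|\BB - \mathbf{V}\mathbf{V}^T\BB\|_F \le \|\BB - \MM\NN^T\|_F$, and since $\BB$ is symmetric the same bound holds for $\|\BB - \BB\mathbf{V}\mathbf{V}^T\|_F$.

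To finish, I would take $\WW$ with orthonormal columns spanning $\mathcal{W} := \operatorname{span}(\mathbf{V}) + \operatorname{span}(\mathbf{1}) + \operatorname{span}(\AA_{*,1})$, which has rank at most $(k+2)+2 = k+4$, so $\WW\WW^T$ is an orthogonal projection of rank at most $k+4$. Writing $\AA_{i,*}^T = \AA_{i,1}\mathbf{1} + \AA_{*,1} - 2\BB_{*,i}$ and using that $\mathbf{1},\AA_{*,1}\in\mathcal{W}$, the residual of each row of $\AA$ under $\WW\WW^T$ equals exactly $-2$ times the residual of the corresponding row of $\BB$; summing over rows gives $\|\AA - \AA\WW\WW^T\|_F^2 = 4\,\|\BB - \BB\WW\WW^T\|_F^2 \le 4\,\|\BB - \BB\mathbf{V}\mathbf{V}^T\|_F^2 \le 4\,\|\BB - \MM\NN^T\|_F^2 \le (1+\epsilon)\|\AA - \AA_k\|_F^2$, where the first inequality uses $\operatorname{span}(\mathbf{V})\subseteq\mathcal{W}$. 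Rescaling $\epsilon$ (via $\sqrt{1+\epsilon}\le 1+\epsilon$) yields the stated bound. For the running time, reading $\AA_{*,1}$ costs $O(n)$, the call to Theorem~\ref{thm:musco_woodruff} costs $O(n\,\operatorname{poly}(k/\epsilon))$ with each of its $\BB$-accesses simulated by $O(1)$ accesses to $\AA$, and forming $\mathbf{V}$ and orthonormalizing $[\mathbf{V},\mathbf{1},\AA_{*,1}]$ are linear-algebra operations on $n\times\operatorname{poly}(k/\epsilon)$ matrices; we output $\WW$ as a factored form of $\WW\WW^T$, so the total is $O(n\,\operatorname{poly}(k/\epsilon))$, and the only randomness is in Theorem~\ref{thm:musco_woodruff}, giving success probability at least $9/10$. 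I expect the only delicate point to be the rank bookkeeping — the ``$+2$'' from $\BB_{k+2}$ and the ``$+2$'' from appending $\mathbf{a}_1,\mathbf{a}_2$ — together with the symmetry argument that lets us move freely between projecting rows and columns of $\BB$; there is no deeper obstacle.
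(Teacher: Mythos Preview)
Your proposal is correct and follows essentially the same approach as the paper: decompose $\AA = \AA_1 + \AA_2 - 2\BB$ via the first row/column, simulate the Musco--Woodruff PSD algorithm on $\BB$ with target rank $k+2$, append $\mathbf{a}_1,\mathbf{a}_2$ to the resulting basis, and combine the key inequality $\|\BB-\BB_{k+2}\|_F \le \tfrac12\|\AA-\AA_k\|_F$ with the observation that the $\AA_1,\AA_2$ parts vanish under $\II-\WW\WW^T$. You are in fact slightly more careful than the paper on two points: you explicitly orthonormalize $[\mathbf{V},\mathbf{1},\AA_{*,1}]$ so that $\WW\WW^T$ is a genuine projection, and you route the comparison through $\|\BB-\mathbf{V}\mathbf{V}^T\BB\|_F \le \|\BB-\MM\NN^T\|_F$ rather than writing $\|\BB(\II-\MM\NN^T)\|_F$ (which would require $\MM\NN^T$ to be a projection).
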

\begin{proof}
First, we show that we can simulate the sublinear PSD algorithm from \cite{mw17} on $\BB$, given random access to $\AA$ and reading $n-1$ additional entries. Observe that computing $\BB_{i,j}$ requires $\| x_i\|^2_2$ and $\| x_j\|^2_2$. Since pairwise distances are invariant to a uniform shift in position. Therefore, w.l.o.g. we can assume that the first point, $x_1$, is at the origin. Then, the $j^{th}$ entry of the first row $\AA$ is $\| x_j\|^2_2$. Now, $\BB_{i,j} = (\| x_i\|^2_2 + \| x_j\|^2_2 - \AA_{i,j})/2$ and we have access to each of these values. Therefore, we can simulate the algorithm for sublinear low-rank approximation on $\BB$ in $O(n\textrm{poly}(\frac{k}{\epsilon}))$. 
Note, in Algorithm \ref{alg:bi_criteria} we find a rank $k+2$ approximation to $\BB$ and by Theorem $\ref{thm:musco_woodruff}$, the algorithm outputs matrices $\MM, \NN \in \mathbb{R}^{n \times k}$ such that 
\[
\| \BB(\II - \MM \NN^T)\|_F \leq (1+\epsilon)\| \BB - \BB_{k+2}\|_F
\]
where $\BB_{k+2}$ is the best rank-$(k+2)$ approximation to $\BB$. Let $\V$ be an orthonormal basis for $\MM$, which can be computed in $O(n k^2)$ time. Let $\WW =[\mathbf{V}; \mathbf{a}_1, \mathbf{a}_2]$. We observe that the projection matrix $\WW \WW^T$ applied to $\AA_1$ and $\AA_2$ yields $\AA_1$ and $\AA_2$ respectively since they lie in the column space of $\WW$. Therefore, 
\begin{equation}
\begin{split}
\| \AA(\II - \WW \WW^T)\|_F & = \| (\AA_1 + \AA_2 - 2\BB)(\II - \WW \WW^T)\|_F \\
& = \| (\AA_1 - \AA_1\WW \WW^T) + (\AA_2 - \AA_2\WW \WW^T) -2 (\BB - \BB\WW \WW^T ) \|_F\\
& = 2\| \BB( \II - \WW \WW^T)\|_F \\
& \leq 2\| \BB(\II - \MM\NN^T)\|_F \\
& \leq 2(1+\epsilon) \| \BB - \BB_{k+2}\|_F 
\end{split}
\end{equation}
Next, we bound $\| \AA - \AA_k\|_F $ in terms of $\| \BB - \BB_{k+2}\|_F$ as follows: 
\begin{equation}
\begin{split}
\| \AA - \AA_k\|_F & = \| \AA_1 + \AA_2 -2\BB - \AA_k\|_F \\
& = 2\Big\| \BB - \frac{(\AA_1 + \AA_2 - \AA_k)}{2}\Big\|_F
\end{split}
\end{equation}
Observe, $\AA_1 + \AA_2 - \AA_k$ is a rank-$2$ perturbation to a rank-$k$ matrix, therefore, $\Big\| \BB - \frac{(\AA_1 + \AA_2 - \AA_k)}{2}\Big\|_F \geq \| \BB - \BB_{k+2}\|_F$. Combining the two equations, we get $\| \AA(\II - \WW \WW^T)\|_F \leq (1+\epsilon)\| \AA - \AA_k\|_F$. Recall, $\WW = [\mathbf{V}; \mathbf{a}_1, \mathbf{a}_2]$, where $\mathbf{V}$ has rank $k+2$ and thus $\WW$ has rank at most $k+4$. 
\end{proof}

\newpage

\bibliographystyle{alpha}
\bibliography{colt-2018}

\newpage
\appendix
\section{Appendix.}

\begin{lemma}
Let $\AA \in \mathbb{R}^{m \times n}$ be a distance matrix. For $j \in [n]$, let $\widetilde{X}_j$ be a $O\left( \frac{m}{b} \right)$-approximate estimate for the $j^{th}$ column of $A$ such that it satisfies the guarantee of Corollary \ref{cor:column_norm_estimation}. Then, let $q = \{q_1, q_2 \ldots q_n\}$ be a probability distribution over the columns of $A$ such that $q_j = \frac{\widetilde{X}_j}{\sum_{j'}\widetilde{X}_{j'}}$. Let $t = O\left( \frac{ m k^2}{b\epsilon^2}\log(\frac{m}{\delta})\right)$ for some constant $c$. Construct $\CC$ using $t$ columns of $A$ and set each one to $\frac{A_{*,j}}{\sqrt{t q_j}}$ with probability $q_j$. Let $\ell$ be the index of the smallest singular value of $\AA$ such that $\sigma^2_\ell \geq \frac{\| \AA \|^2_F}{k}$. With probability at least $1-\delta$, 
\begin{equation*}
    \trace{\AA_{\setminus \ell} \AA_{\setminus \ell}^T} - \trace{\P_{\setminus \ell} \CC \CC^T \P_{\setminus \ell}} = \pm \epsilon \| \AA \|^2_F
\end{equation*}
\end{lemma}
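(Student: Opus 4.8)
The plan is to write $\trace{\P_{\setminus \ell}\CC\CC^T\P_{\setminus \ell}}$ as a sum of $t$ independent, nonnegative, bounded scalar random variables and apply a scalar Bernstein inequality. Recall that $\CC$ is built by drawing $t$ columns, where the $i$-th draw equals $\AA_{*,j}/\sqrt{tq_j}$ with probability $q_j$, so that $\CC\CC^T = \sum_{i=1}^t \frac{1}{tq_{j_i}}\AA_{*,j_i}\AA_{*,j_i}^T$ and
\[
\trace{\P_{\setminus \ell}\CC\CC^T\P_{\setminus \ell}} = \sum_{i=1}^t Z_i, \qquad Z_i \defeq \frac{\|\P_{\setminus \ell}\AA_{*,j_i}\|_2^2}{t\, q_{j_i}}.
\]
Since $\P_{\setminus \ell}$ depends only on $\AA$ and not on the sampling, the $Z_i$ are i.i.d., and $\expec{}{Z_i} = \frac{1}{t}\sum_j \|\P_{\setminus \ell}\AA_{*,j}\|_2^2 = \frac{1}{t}\|\P_{\setminus \ell}\AA\|_F^2 = \frac{1}{t}\trace{\AA_{\setminus \ell}\AA_{\setminus \ell}^T}$, so $\expec{}{\sum_i Z_i} = \trace{\AA_{\setminus \ell}\AA_{\setminus \ell}^T}$, which is exactly the centering we want to prove concentration around.

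Next I would establish the two quantities Bernstein needs. For the range, $\|\P_{\setminus \ell}\AA_{*,j}\|_2^2 \le \|\AA_{*,j}\|_2^2$, and the sampling distribution obeys $1/q_j \le O(m/b)\,\|\AA\|_F^2/\|\AA_{*,j}\|_2^2$ (the same bound used in the proof of Theorem \ref{thm:pcp1}, which in turn follows from Corollary \ref{cor:column_norm_estimation}), hence each $Z_i \le R$ with $R \defeq O\!\left(\tfrac{m}{tb}\|\AA\|_F^2\right)$. For the variance,
\[
\var{Z_i}\le\expec{}{Z_i^2} = \frac{1}{t^2}\sum_j \frac{\|\P_{\setminus \ell}\AA_{*,j}\|_2^4}{q_j} \le \frac{O(m/b)\,\|\AA\|_F^2}{t^2}\sum_j \|\P_{\setminus \ell}\AA_{*,j}\|_2^2 = \frac{O(m/b)\,\|\AA\|_F^2\,\|\AA_{\setminus \ell}\|_F^2}{t^2},
\]
using $\|\P_{\setminus \ell}\AA_{*,j}\|_2^4 \le \|\P_{\setminus \ell}\AA_{*,j}\|_2^2\,\|\AA_{*,j}\|_2^2$ and then the bound on $1/q_j$; summing over $i$ gives $\sigma^2 \defeq \sum_i \var{Z_i} \le O\!\left(\tfrac{m}{tb}\|\AA\|_F^4\right)$.

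Finally I would apply Bernstein's inequality with deviation $s = \epsilon\|\AA\|_F^2$:
\[
\prob{}{\left|\textstyle\sum_i Z_i - \trace{\AA_{\setminus \ell}\AA_{\setminus \ell}^T}\right| > \epsilon\|\AA\|_F^2} \le 2\exp\!\left(-\Omega\!\left(\min\left\{\tfrac{s^2}{\sigma^2}, \tfrac{s}{R}\right\}\right)\right),
\]
and $\min\{s^2/\sigma^2,\, s/R\} = \Omega(\epsilon^2 tb/m)$; substituting $t = O\!\left(\tfrac{mk^2}{b\epsilon^2}\log(m/\delta)\right)$ makes this $\Omega(k^2\log(m/\delta))$, so the failure probability is at most $\delta$ once the constant in $t$ is chosen large enough. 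The one step needing care is the variance bound: a Hoeffding/range-only argument loses a $b/m$ factor in the exponent and does not yield the stated sample complexity, so the second-moment estimate above is genuinely required — and there it matters that $\P_{\setminus \ell}$ only shrinks each column, which is what lets $\sum_j\|\P_{\setminus \ell}\AA_{*,j}\|_2^2$ be controlled (by $\|\AA\|_F^2$, or more tightly by $\|\AA_{\setminus \ell}\|_F^2$). Beyond that, everything is routine, the key point simply being that the head/tail split is fixed before sampling, so the summands $Z_i$ are independent.
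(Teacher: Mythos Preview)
Your proof is correct and essentially identical to the paper's: both write $\trace{\P_{\setminus\ell}\CC\CC^T\P_{\setminus\ell}}=\sum_{i=1}^t Z_i$ with $Z_i=\|\P_{\setminus\ell}\AA_{*,j_i}\|_2^2/(tq_{j_i})$, verify the expectation is $\trace{\AA_{\setminus\ell}\AA_{\setminus\ell}^T}$, bound the range via $1/q_j \le O(m/b)\,\|\AA\|_F^2/\|\AA_{*,j}\|_2^2$, and finish with a scalar concentration inequality. The only cosmetic difference is that the paper normalizes each $Z_i$ into $[0,1]$ and applies a multiplicative Chernoff bound rather than Bernstein; both routes avoid the $b/m$ loss you correctly flag for an additive Hoeffding argument, since multiplicative Chernoff implicitly exploits the same variance-versus-mean tradeoff you compute explicitly.
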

\begin{proof}
We can rewrite the above equation as $ \|\P_{\setminus \ell} \CC\|^2_F - \| \AA_{\setminus \ell}\|^2_F = \pm \epsilon\| \|^2_F$. By summing over the column norms of $\CC$, we get $\|\P_{\setminus \ell} \CC\|^2_F = \sum_{j=[t]} \|\P_{\setminus \ell} \CC_{*,j} \|^2_2 $. Next, we upper bound each term in the sum as follows : 
\begin{equation}
\begin{split}
\|\P_{\setminus \ell} \CC_{*,j} \|^2_2  & = \frac{1}{t q_j} \|\P_{\setminus \ell} \AA_{*,j} \|^2_2 \\
& = \left(\frac{b \epsilon^2}{m k^2 \log(m/\delta)}\right) \left(\frac{ m\|\AA \|^2_F}{ b\|\AA_{*,j} \|^2_2} \right) \|\P_{\setminus \ell} \AA_{*,j} \|^2_2 \\
& \leq \left(\frac{ \epsilon^2}{ k^2 \log(m/\delta)}\right) \|\AA \|^2_F 
\end{split}
\end{equation}
Note, $\frac{ k^2 \log(m/\delta)}{\epsilon^2 \|\AA \|^2_F} \|\P_{\setminus \ell} \CC_{*,j} \|^2_2  \in [0, 1]$ and $\expecf{}{\sum_{j=[t]} \|\P_{\setminus \ell} \CC_{*,j} \|^2_2 } = \|\AA_{\setminus \ell}\|^2_F$. By Chernoff, 
\begin{equation}
\begin{split}
& \prob{}{\|\P_{\setminus \ell} \CC \|^2_F \geq \|\P_{\setminus \ell} \AA \|^2_F + \epsilon\|\AA\|_F} \\ & = \prob{}{\left(\frac{ k^2 \log(m/\delta)}{\epsilon^2 \|\AA \|^2_F}\right) \sum_{j\in [t]}\|\P_{\setminus \ell} \CC_{*,j} \|^2_F \geq \left(\frac{ k^2 \log(m/\delta)}{\epsilon^2 \|\AA \|^2_F}\right) \|\AA_{\setminus \ell} \|^2_F + \epsilon\|\AA\|_F} \\
& = \prob{}{\left(\frac{ k^2 \log(m/\delta)}{\epsilon^2 \|\AA \|^2_F}\right) \sum_{j\in [t]}\|\P_{\setminus \ell} \CC_{*,j} \|^2_F \geq 1 + \frac{\epsilon\|\AA\|_F}{\|\AA_{\setminus \ell} \|^2_F} \left(\frac{ k^2 \log(m/\delta)\|\AA_{\setminus \ell} \|^2_F}{\epsilon^2 \|\AA \|^2_F}\right)  }  \\
& \leq e^{-\frac{c\log(m/\delta)}{4}} \leq \delta/2
\end{split}
\end{equation}
Therefore, with probability at least $1 -\delta/2$, $\|\P_{\setminus \ell} \CC \|^2_F - \| \AA_{\setminus \ell} \|^2_F \leq \epsilon\|\AA\|_F$. Similarly, we can show that with probability at least $1 - \delta/2$, $\|\P_{\setminus \ell} \CC \|^2_F - \| \AA_{\setminus \ell} \|^2_F \geq -\epsilon\|\AA\|_F$. Union bounding over the two events finishes the proof. 
\end{proof}

\begin{lemma}(Matrix Bernstein Inequality \cite{tropp2012user}.)
\label{lem:matrix_bern}
Let $\X_1, \X_2, \ldots \X_n$ be independent random matrices with common dimension $d_1 \times d_2$ such that for $k\in [n]$, $\expec{}{\X_k} = 0$ and $\|\X_k \|_2 \leq L$, where $\|\MM\|_2$ for any matrix $\MM$, represents the operator norm. Let $\X = \sum_{k\in [n]} \X_k$. Let $\var{\X}$ denote the matrix variance statistic, i.e. $\var{\X} = \max\left( \expec{}{\X\X^T}, \expec{}{\X^T\X} \right)$. Then, for all $t\geq 0$,
\[
\Pr\left[\|\X \|_2 \geq t \right] \leq (d_1 + d_2) \exp{\left(\frac{-t^2}{\var{\X} + Lt/3}\right)}
\]
\end{lemma}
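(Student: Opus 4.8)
The plan is to run the matrix Laplace-transform argument (Ahlswede--Winter, Tropp); indeed this is precisely the matrix Bernstein inequality of \cite{tropp2012user}, and I would sketch its proof as follows. First I would reduce to the Hermitian case by dilation: for each $k$ set $\mathbf{Y}_k = \begin{pmatrix} 0 & \X_k \\ \X_k^* & 0 \end{pmatrix}$, an independent, centered, Hermitian matrix of size $(d_1+d_2)\times(d_1+d_2)$ with $\smallnorm{\mathbf{Y}_k}_2 = \smallnorm{\X_k}_2 \le L$. Writing $\mathbf{S} = \sum_{k\in[n]}\mathbf{Y}_k$, the spectrum of a dilation is symmetric about $0$, so $\smallnorm{\mathbf{S}}_2 = \lambda_{\max}(\mathbf{S}) = \smallnorm{\X}_2$; and since $\mathbf{Y}_k^2 = \mathrm{diag}(\X_k\X_k^*,\, \X_k^*\X_k)$, independence together with $\expec{}{\X_k}=0$ gives $\smallnorm{\sum_k \expec{}{\mathbf{Y}_k^2}}_2 = \max\big(\smallnorm{\expec{}{\X\X^*}}_2, \smallnorm{\expec{}{\X^*\X}}_2\big) = \var{\X} =: v$. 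Hence it suffices to prove a one-sided deviation bound for $\lambda_{\max}$ of a sum of independent, centered, norm-bounded Hermitian matrices in terms of $v$, $L$, and $d := d_1+d_2$.

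Next I would apply the Laplace-transform bound: for $\theta>0$, Markov's inequality applied to $e^{\theta\lambda_{\max}(\mathbf{S})}$ together with $e^{\theta\lambda_{\max}(\mathbf{S})}\le \trace{e^{\theta\mathbf{S}}}$ gives $\prob{}{\lambda_{\max}(\mathbf{S})\ge t}\le e^{-\theta t}\,\expec{}{\trace{e^{\theta\mathbf{S}}}}$. The heart of the argument is to control $\expec{}{\trace{e^{\theta\mathbf{S}}}}$: by Lieb's concavity theorem the map $A\mapsto \trace{\exp(H+\log A)}$ is concave on positive-definite matrices, and peeling the summands of $\mathbf{S}$ off one at a time while applying Jensen's inequality (using independence) yields $\expec{}{\trace{e^{\theta\mathbf{S}}}}\le \trace{\exp\big(\sum_k\log\expec{}{e^{\theta\mathbf{Y}_k}}\big)}$. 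For each term I would use the scalar estimate $e^{s}-1-s\le \tfrac{s^2/2}{1-s/3}$ (for $0<s<3$, from $j!\ge 2\cdot 3^{j-2}$), which transfers via the spectral theorem to $e^{\theta\mathbf{Y}_k}\preceq \II + \theta\mathbf{Y}_k + g(\theta)\mathbf{Y}_k^2$ with $g(\theta)=\tfrac{\theta^2/2}{1-\theta L/3}$ for $0<\theta<3/L$; taking expectations, using $\expec{}{\mathbf{Y}_k}=0$, and operator monotonicity of $\log$ gives $\log\expec{}{e^{\theta\mathbf{Y}_k}}\preceq g(\theta)\,\expec{}{\mathbf{Y}_k^2}$.

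Finally I would sum these estimates and use that $\trace{\exp(\cdot)}$ is monotone under $\preceq$ and $\trace{\exp M}\le d\, e^{\lambda_{\max}(M)}$ to get $\expec{}{\trace{e^{\theta\mathbf{S}}}}\le d\,\exp(g(\theta)v)$, hence $\prob{}{\smallnorm{\X}_2\ge t}\le d\inf_{0<\theta<3/L}\exp\big(-\theta t + \tfrac{\theta^2 v/2}{1-\theta L/3}\big)$. Taking $\theta = t/(v+Lt/3)$ produces a bound of the form $d\exp\big(-\tfrac{t^2/2}{v+Lt/3}\big)$, and absorbing the constant (or equivalently splitting into the sub-Gaussian regime $t\le v/L$ and the sub-exponential regime $t>v/L$) gives the stated tail inequality. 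The main obstacle is the subadditivity of the matrix cumulant generating function, which rests on Lieb's concavity theorem (or, equivalently, a multivariate Golden--Thompson inequality); I would invoke that as a black box, after which everything else is spectral calculus, one scalar estimate, and a single-variable optimization.
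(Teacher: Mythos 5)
This lemma is not proved in the paper at all: it is quoted as a black box from Tropp's survey \cite{tropp2012user}, so there is no "paper proof" to compare against. Your sketch is the standard proof of that theorem and is essentially correct: the dilation $\mathbf{Y}_k = \begin{pmatrix} 0 & \X_k \\ \X_k^* & 0\end{pmatrix}$ reduces to the Hermitian case while preserving the norm bound and converting the variance statistic into $\max\left(\left\|\expec{}{\X\X^*}\right\|_2, \left\|\expec{}{\X^*\X}\right\|_2\right)$ (note the paper's statement writes a max of two \emph{matrices}, which should be read as the max of their operator norms, exactly as you do); the Laplace-transform/Lieb argument for subadditivity of the matrix cumulant generating function, the scalar estimate $e^{s}\le 1+s+\frac{s^2/2}{1-s/3}$ transferred by the spectral theorem, and the choice $\theta = t/(v+Lt/3)$ are all the right ingredients.

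One caveat on your last step. The optimization genuinely yields
\begin{equation*}
\prob{}{\|\X\|_2 \ge t} \;\le\; (d_1+d_2)\,\exp\!\left(\frac{-t^2/2}{v + Lt/3}\right),
\end{equation*}
with the factor $1/2$ in the exponent; this is the form of the theorem in \cite{tropp2012user}. The statement as printed in the paper omits that $1/2$, and your remark about "absorbing the constant" or splitting into sub-Gaussian/sub-exponential regimes does not recover the printed form: a multiplicative constant inside the exponent cannot be absorbed in the direction that \emph{strengthens} the bound. The discrepancy is a transcription typo in the paper's statement rather than a gap in your argument --- the only use of the lemma (in Theorem \ref{thm:pcp1}) sets $t$ so that the exponent is $\Theta(\log(m/\delta))$ and is insensitive to constant factors there --- but you should state and prove the version with the $1/2$ rather than claim the stronger printed inequality.
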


\end{document}